\def\doi{8 (3:08) 2012}
\newcommand{\restrict}{\upharpoonright}
\newcommand{\tto}{\Rightarrow}
\newcommand{\intr}[1]{\llbracket #1 \rrbracket}
\newcommand{\lpair}[1]{\langle #1 \rangle}
\newcommand{\leg}[1]{\mathcal{L}_{#1}}
\newcommand{\unfold}[1]{\widetilde{A}}
\newcommand{\Fam}{\mathrm{Fam_f}}
\newcommand{\BFam}{\mathrm{Fam}}
\newcommand{\enb}[1]{\mathrel{\vdash_{#1}}}
\newcommand{\var}{\mathtt{var}}
\newcommand{\gvar}{\mathtt{gvar}}
\newcommand{\paths}[1]{\mathcal{P}_{#1}}
\newcommand{\threads}[1]{\mathcal{T}_{#1}}
\newcommand{\prethreads}[1]{\threads{#1}'}
\newcommand{\ip}{\mathrm{ip}}
\newcommand{\jp}{\mathrm{jp}}
\newcommand{\iso}{\simeq}
\newcommand{\eval}{\Downarrow}
\newcommand{\obseq}{\cong}
\newcommand{\thread}[1]{\lceil #1 \rceil} 
\newcommand{\preleg}[1]{\leg{A}'}
\newcommand{\fst}[1]{\pi_1~#1}
\newcommand{\snd}[1]{\pi_2~#1}
\newcommand{\unit}{1}
\newcommand{\tskip}{\mathtt{()}}
\newcommand{\inj}{\iota}
\newcommand{\selim}[3]{\delta(#1, x_1 \cdot #2, x_2 \cdot #3)}
\newcommand{\id}{\mathrm{id}}
\newcommand{\new}{\mathtt{new}}
\newcommand{\mkvar}{\mathtt{mkvar}}
\newcommand{\retracts}{\lhd}
\newcommand{\cell}{\mathtt{cell}}
\newcommand{\bool}{\mathtt{bool}}
\newcommand{\nat}{\mathtt{nat}}
\newcommand{\trim}[1]{\mathrm{trim}(#1)}
\newcommand{\dst}{\mathrm{dst}}
\newcommand{\ev}{\mathrm{ev}}
\newcommand{\C}{\mathcal{C}}
\newcommand{\D}{\mathcal{D}}
\newcommand{\Gam}{\mathbf{Gam}}
\newcommand{\biggam}{\mathbf{Gam}_\infty}
\newcommand{\Vis}{\mathbf{Vis}}
\newcommand{\Inn}{\mathbf{Inn}}
\newcommand{\Path}{\mathbf{Path}}
\newcommand{\Seq}{\mathbf{Seq}}
\newcommand{\Jus}{\mathbf{Jus}}
\newcommand{\Set}{\mathbf{Set}}
\newcommand{\Lang}{\mathcal{L}}
\newcommand{\Lsums}{\mathcal{L}_{+}}
\newlength{\viewht}
\newlength{\viewlift}
\newlength{\viewdp}
\newlength{\viewdrop}
\newcommand{\pview}[1]{
\settoheight{\viewht}{\makebox{$#1$}}
\setlength{\viewlift}{\viewht}%
\addtolength{\viewlift}{-1ex}%
\raisebox{0.3\viewlift}{
  \makebox{$\ulcorner$}}
  \!#1\!
\settoheight{\viewht}{\makebox{$#1$}}
\setlength{\viewlift}{\viewht}%
\addtolength{\viewlift}{-1ex}%
\raisebox{0.3\viewlift}{
  \makebox{$\urcorner$}}
}
\begin{document}
\title[Isomorphisms of types in the presence of higher-order
  references]{Isomorphisms of types in the presence of higher-order
  references\rsuper*}

\author[P.~Clairambault]{Pierre Clairambault}
\address{Computer Laboratory, University of Cambridge, United Kingdom}
\email{pierre.clairambault@cl.cam.ac.uk}

\keywords{Isomorphisms of types, general references, game semantics}
\subjclass{F.3.2}
\titlecomment{{\lsuper*}A short version of this work has appeared in the Proceedings of the 26th Annual IEEE Symposium on Logic in Computer Science (LICS), 2011}
\thanks{The author acknowledges the support of the (UK) EPSRC grant EP/HO23097 and of the Advanced Grant ECSYM of the ERC}

\begin{abstract} 
We investigate the problem of type isomorphisms in the presence of higher-order references. We first introduce a
finitary programming language with sum types and higher-order references, for which we build a fully abstract games
model following the work of Abramsky, Honda and McCusker.  Solving an open problem by Laurent, we show that two finitely
branching arenas are isomorphic if and only if they are geometrically the same, up to renaming of moves (Laurent's
forest isomorphism). We deduce from this an equational theory characterizing isomorphisms of types in our language. We
show however that Laurent's conjecture does not hold on infinitely branching arenas, yielding new non-trivial type
isomorphisms in a variant of our language with natural numbers.
\end{abstract}

\maketitle

\section{Introduction} 

During the development of denotational semantics of programming languages, there was a crucial interest in defining
models of computation satisfying particular type equations. For instance, a model of the untyped $\lambda$-calculus can
be obtained by isolating a \emph{reflexive} object (that is, an object $D$ such that $D\iso D^D$) in a cartesian closed
category. In the 80s, some people started to consider the dual problem of finding these equations that must hold in
\emph{every} model of a given language: they were coined \emph{type isomorphisms} by Bruce and Longo. In
\cite{DBLP:conf/stoc/BruceL85}, they exploited a theorem by Dezani \cite{dezani1976characterization} giving a syntactic
characterization of invertible terms in the untyped $\lambda$-calculus to prove that that the only isomorphisms of types
present in simply typed $\lambda$-calculus with respect to $\beta\eta$ equality are those induced by the equation $A\to
(B\to C) \iso B\to (A\to C)$. Later this was extended to handle such things as products
\cite{DBLP:journals/mscs/BruceCL92}, polymorphism \cite{DBLP:conf/stoc/BruceL85}, possibly with unit types
\cite{diinvertibility}, or sums \cite{DBLP:journals/apal/FioreCB06}.

The interest in type isomorphisms grew significantly when their practical impact was realized. In
\cite{DBLP:journals/jfp/Rittri91}, Rittri proposed to search functions in software libraries using their type modulo
isomorphism as a key. He also considered the possibilities offered by matching and unification of types modulo
isomorphisms \cite{DBLP:journals/ita/Rittri93}. A whole line of research has also been dedicated to the study of type
isomorphisms and their use for search tools in richer type systems (such as dependent types
\cite{DBLP:conf/fossacs/BartheP01}), along with studies about the automatic generation of the corresponding coercions
\cite{DBLP:conf/mpc/AtanassowJ04}.  Such tools were implemented for several programming languages, let us mention the
command line tool \texttt{camlsearch} written by Vouillon for CamlLight. The interested reader may refer to
the nice survey by Di Cosmo \cite{DBLP:journals/mscs/Cosmo05}.

It is worth noting that even though these tools are written for powerful programming languages featuring complex
computational effects such as higher-order references or exceptions, they rely on the theory of isomorphisms in weaker
(purely functional) languages, such as the second-order $\lambda$-calculus with pairs and unit types for
\texttt{camlsearch}. Clearly, all type isomorphisms in $\lambda$-calculus are still valid in the presence of
computational effects (indeed, the operational semantics are compatible with $\beta\eta$). What is less clear is whether
those effects allow the definition of new isomorphisms. However, it seems that syntactic methods deriving from Dezani's
theorem on invertible terms in $\lambda$-calculus cannot be extended to complex computational effects. The base setting
itself is completely different: there is no longer a canonical notion of normal form, the natural equality between terms
is no longer convertibility but observational equivalence, so new methods are required.

In \cite{DBLP:journals/mscs/Laurent05}, Laurent introduced the idea of applying game semantics to the study of type
isomorphisms (although one should mention the precursor characterization of isomorphisms by Berry and Curien
\cite{berry-curien} in the category of concrete data structures and sequential algorithms).  Exploiting his earlier work
on game semantics for polarized linear logic \cite{DBLP:journals/apal/Laurent04}, he found the theory of isomorphisms
for LLP from which he deduced (by translations) the isomorphisms for the call-by-name and call-by-value
$\lambda\mu$-calculus.  The core of his analysis is the observation that isomorphisms between arenas $A$ and $B$ in the
category $\Inn$ \cite{hyland-ong} of arenas and innocent strategies are in one-to-one correspondence with \emph{forest
isomorphisms} between $A$ and $B$, so in particular two arenas are isomorphic if and only if their representations as
forests are identical up to the renaming of vertices.

From the point of view of computational effects this looks promising, since game semantics are known to accommodate
several computational effects such as control operators \cite{laird97}, ground type
\cite{abramsky-mccusker:active-algol} or higher-order references \cite{ahm} or even concurrency
\cite{DBLP:conf/concur/Laird05} in one single framework.  Moreover, Laurent pointed out in
\cite{DBLP:journals/mscs/Laurent05} that the main part of his result, namely the fact that each $\Inn$-isomorphism
induces a forest isomorphism, does not really depend on the innocence hypothesis but only on the weaker
\emph{visibility} condition. As a consequence, his method for characterizing isomorphisms still applies to programming
languages such as Idealized Algol whose terms can be interpreted as visible strategies
\cite{abramsky-mccusker:active-algol}. Laurent raised the question whether his result could be proved without the
visibility condition, therefore yielding a characterization of isomorphisms in a programming language whose terms have
access to higher-order references and hence get interpreted as non-visible strategies \cite{ahm}.

The contributions of this paper are the following: \emph{(1)} We extend the full abstraction result in \cite{ahm} in
order to deal with sum types and the empty type, \emph{(2)} We give a new and synthetic reformulation of Laurent's tools
to approach game-theoretically the problem of type isomorphisms, \emph{(3)} We prove Laurent's conjecture in the case of
finitely branching arenas, allowing us to characterize all type isomorphisms in a finitary (integers-free) programming
language $\Lsums$ with higher-order references by the theory $\mathcal{E}$ presented\footnote{The absence of the
equation $A\to (B\to C) \iso B \to (A\to C)$ mentioned in the introduction may seem strange, but is standard in
call-by-value \cite{DBLP:journals/mscs/Laurent05} due to the restriction of the $\eta$-rule on values. Because of
call-by-value, we also have that $1$ is \emph{not} terminal, so we don't have $A \to 1 \iso 1$; instead we have the isomorphism $A \to 0 \iso 1$
up to observational equivalence.} in Figure
\ref{equational_theory}, \emph{(4)} We show however a counter-example to the conjecture when dealing with infinitely
branching arenas, and the counter-example yields a non-trivial type isomorphism in a variant of $\Lsums$ with natural
numbers. So Laurent's conjecture, in the general case, is false.

In Section \ref{section_lang} we introduce the finitary language $\Lsums$ with sums, unit types and higher-order
references, on which we define isomorphisms of types. In Section \ref{section_games}, we build a fully abstract games
model for $\Lsums$, drawing inspiration from \cite{ahm}. Then we turn to the problem of isomorphisms of types. In
Section \ref{section_isomorphisms} we first give an analysis of isomorphisms in several subcategories of the games
model, reproving and extending Laurent's theorem. Finally, we apply all of this in Section \ref{section_syntactic} to
give a characterization of isomorphisms of types in $\Lsums$ and to obtain new non-trivial isomorphisms in a variant of
$\Lsums$ with natural numbers.

\begin{figure}
\begin{eqnarray*}
A\times B &\iso_{\mathcal{E}} & B \times A\\
A\times (B\times C) &\iso_{\mathcal{E}}& (A\times B)\times C\\
\unit \times A &\iso_{\mathcal{E}}& A\\
A + B &\iso_{\mathcal{E}}& B + A\\
A + (B + C) &\iso_{\mathcal{E}}& (A + B) + C\\
0 + A &\iso_{\mathcal{E}}& A\\
A\times (B + C) &\iso_{\mathcal{E}}& A \times B + A \times C\\
(A + B) \to C &\iso_{\mathcal{E}}& (A\to C) \times (B\to C)\\
0 \to A &\iso_{\mathcal{E}}& 1\\
A \to 0 &\iso_{\mathcal{E}}& 1\\
\mathtt{var}[A] &\iso_\mathcal{E}& (A\to \unit)\times (\unit \to A)
\end{eqnarray*}
\caption{Isomorphisms in $\Lsums$}
\label{equational_theory}
\end{figure}

\section{Isomorphisms of types in $\Lsums$}
\label{section_lang}

\subsection{The language $\Lsums$}
\subsubsection{Syntax}
We introduce here a finitary variant $\Lsums$ of the programming language $\Lang$ with higher-order references modeled by Abramsky, Honda and
McCusker in \cite{ahm}: it essentially differs from $\Lang$ in the fact that the type for natural numbers has been removed. On the other hand
a sum type has been added, allowing to define all polynomial data types.
The terms and types of $\Lsums$ are defined as follows. 
\begin{eqnarray*}
A&::=& 0 ~|~ \unit~|~A+A~|~A\times A~|~A\to A~|~\mathtt{var}[A]\\\\
M &::=& x~|~\lambda x.M~|~M~M~|~\lpair{M, M}~|~\fst{M}~|~\snd{M}~|~\tskip\\
&&|~\inj_1~M~|~\inj_2~M~|~\selim{M}{N_1}{N_2}\\
&&|~\new_A~|~M:=M~|~!M~|~\mathtt{mkvar}~M~M
\end{eqnarray*}

The type annotation on $\new$ will often be omitted, whenever it is irrelevant or obvious from the context.
The typing rules for $\Lsums$ are standard, and summarized in Figure \ref{typing}. Note that in the presence of the empty type,
a term constructor is generally included as an elimination rule for $0$, along with its typing rule. We skip it here because it is
\emph{definable} : as we will see, higher-order references can be used to build an inhabitant $\bot_A: A$ for all types $A$.

\begin{figure}
\boxit{
\vspace{5pt}
\[
\prooftree
	\justifies
	\Gamma \vdash \tskip : \unit
\endprooftree
~~~~~~
\prooftree
	\Gamma \vdash M : A ~~~~~~ \Gamma \vdash N : B
	\justifies
	\Gamma \vdash \lpair{M, N} : A \times B
\endprooftree
~~~~~~
\prooftree
	\Gamma \vdash M : A \times B
	\justifies
	\Gamma \vdash \fst{M} : A
\endprooftree
\]
\vspace{5pt}
\[
\prooftree
	\Gamma \vdash M : A \times B
	\justifies
	\Gamma \vdash \snd{M} : B
\endprooftree
~~~~~~~~~~
\prooftree
	\Gamma \vdash M : A
	\justifies
	\Gamma \vdash \inj_1{M} : A + B
\endprooftree
~~~~~~~~~~
\prooftree
	\Gamma \vdash M : B
	\justifies
	\Gamma \vdash \inj_2{M} : A + B
\endprooftree
\]
\vspace{5pt}
\[
\prooftree
	\Gamma \vdash M : A + B
	~~~~
	\Gamma, x_1 : A \vdash N_1 : C
	~~~~
	\Gamma, x_2 : B \vdash N_2 : C
	\justifies
	\Gamma \vdash \selim{M}{N_1}{N_2} : C
\endprooftree
\]
\vspace{5pt}
\[
\prooftree
	\justifies
	\Gamma, x: A \vdash x: A
\endprooftree
~~~~~~~~
\prooftree
	\Gamma \vdash M : A \to B~~~~~~ \Gamma \vdash N : A
	\justifies
	\Gamma \vdash M N : B
\endprooftree
~~~~~~~~
\prooftree
	\Gamma, x : A \vdash M: B
	\justifies
	\Gamma \vdash \lambda x. M : A \to B
\endprooftree
\]
\vspace{5pt}
\[
\prooftree
        \justifies
        \Gamma \vdash \mathtt{new}_A : \mathtt{var}[A]
\endprooftree
~~~~~~~~~~~~
\prooftree
        \Gamma \vdash M:\mathtt{var}[A]
        \justifies
        \Gamma\vdash !M : A
\endprooftree
\]
\vspace{5pt}
\[
\prooftree
        \Gamma \vdash M:\mathtt{var}[A]~~~~\Gamma\vdash N:A
        \justifies
        \Gamma\vdash M:=N : \unit
\endprooftree
~~~~~~
\prooftree
        \Gamma\vdash M:A \to \unit~~~~\Gamma\vdash N:\unit\to A
        \justifies
        \Gamma\vdash \mathtt{mkvar}~M~N: \mathtt{var}[A]
\endprooftree
\]
\vspace{5pt}
}
\caption{Typing rules for $\Lsums$}
\label{typing}
\end{figure}

\subsubsection{Operational semantics}
This language is equipped with a standard big-step call-by-value operational semantics. To define it, we temporarily extend the syntax of terms with
identifiers for \textbf{locations}, denoted by $l$. Then, \textbf{values} are formed as follows:

\[
V ::= \tskip~|~\lambda x.M~|~\pi_i~V~|~\lpair{V, V}~|~l~|~\inj_i~V~|~\mathtt{mkvar}~V~V
\]

The operational semantics of $\Lsums$ are then given as an inductively generated relation $(L, s)~M \eval (L', s')~V$, where
$L$ is a (functional) set of location-type pairs, and $s$ is a partial map from locations in $L$ to values of the corresponding type,
with free locations in $L$.
By abuse of notation, we will write $l \in L$ if $(l, A) \in L$ for some type $A$. The rules are given in Figure \ref{opsem}.
Note that as usual, some store annotations are omitted when the rule considered does not affect the store. For example,
\[
\prooftree
	M \eval V~~~~ M' \eval V'
	\justifies
	M'' \eval V''
\endprooftree
\]
is an abbreviation for:
\[
\prooftree
	(L, s)~M \eval (L', s')~V~~~~~~(L', s')~M' \eval (L'', s'')~V'
	\justifies
	(L, s)~M''\eval (L'', s'')~V''
\endprooftree
\]

For a closed term $M$ without free locations, we write $M\eval$ to indicate that $(\emptyset, \emptyset)~M \eval (L, s)~V$
for some $L, s$ and $V$ (and $M\Uparrow$ to indicate that there are no such $L, s$ and $V$).
 Observational equivalence $M \obseq N$ between terms $M$ and $N$ is then defined as usual, by requiring that  
for all contexts $C[-]$ such that $C[M]$ and $C[N]$ are closed and contain no free location, $C[M]\eval$ iff $C[N]\eval$.
The corresponding equivalence relation is written $\obseq$.

\begin{figure*}
\boxit{
\vspace{5pt}
\[
\prooftree
	\justifies
	V \eval V
\endprooftree
~~~~~~~~
\prooftree
	M \eval \lpair{V_1, V_2}
	\justifies
	\fst M \eval V_1
\endprooftree
~~~~~~~~
\prooftree
	M \eval \lpair{V_1, V_2}
	\justifies
	\snd M \eval V_2
\endprooftree
~~~~~~~~
\prooftree
	M_1 \eval V_1~~~~~~M_2 \eval V_2
	\justifies
	\lpair{M_1, M_2} \eval \lpair{V_1, V_2}
\endprooftree
\]
\vspace{5pt}
\[
\prooftree
	M \eval \inj_1 V_1~~~~~~ M_1[V_1/x_1] \eval V_2
	\justifies
	\selim{M}{M_1}{M_2} \eval V_2
\endprooftree
~~~~~~~~
\prooftree
	M \eval \inj_2 V_1~~~~~~ M_2[V_1/x_2] \eval V_2
	\justifies
	\selim{M}{M_1}{M_2} \eval V_2
\endprooftree
~~~~~~~~
\prooftree
	M \eval V
	\justifies
	\inj_1 M \eval \inj_1 V
\endprooftree
\]
\vspace{5pt}
\[
\prooftree
	M \eval V
	\justifies
	\inj_2 M \eval \inj_2 V
\endprooftree
~~~~~~~~
\prooftree
	M \eval~\lambda x.M'~~~~
	N\eval~V_1~~~~
	M'[V_1/x] \eval~V_2
	\justifies
	M~N \eval~V_2
\endprooftree
\]
\vspace{5pt}
\[
\prooftree
        M_1 \eval V_1~~~~M_2\eval V_2
        \justifies
        \mathtt{mkvar}~M_1~M_2\eval \mathtt{mkvar}~V_1~V_2
\endprooftree
~~~~~~~~
\prooftree
        \justifies
        (L, s)~\mathtt{new}_A \eval (L\cup \{l:A\}, s)~l
        \using (l\not \in L)
\endprooftree
\]
\vspace{5pt}
\[
\prooftree
        (L,s)~M\eval (L', s')~l~~~~(L', s')~N\eval (L'', s'')~V
        \justifies
        (L, s)~M:= N\eval (L'', s'' \cup \{l\mapsto V\})~\tskip
\endprooftree
~~~~~~~~
\prooftree
        (L, s)~M\eval (L', s')~l
        ~~~~
        s'(l) = V
        \justifies
        (L, s)~!M \eval (L', s')~V
\endprooftree
\]
\vspace{5pt}
\[
\prooftree
        M\eval \mathtt{mkvar}~V_1~V_2
        ~~~
        N\eval V
        ~~
        V_1~V \eval \tskip
        \justifies
        M:= N \eval \tskip
\endprooftree
~~~~~~~~
\prooftree
        M\eval \mathtt{mkvar}~V_1~V_2
        ~~~~
        V_2~\tskip\eval V
        \justifies
        !M \eval V
\endprooftree
\]
\vspace{5pt}
}
\caption{Big-step operational semantics of $\Lsums$.}
\label{opsem}
\end{figure*}

\subsubsection{Syntactic extensions} In this core language, one can define all the constructs of a basic imperative programming language. For instance if $C_1$ has type
$\unit$, sequential composition $C_1; C_2$ is given by:
\[
(\lambda x:\unit.~C_2)~C_1
\]
This works only because the evaluation of $\Lsums$ is call-by-value. Likewise, a variable declaration $\mathtt{new}~x:A~\mathtt{in}~N$
(where $M$ has type $A$) can be obtained by
\[
(\lambda x:\mathtt{var}[A].~N)~\mathtt{new}_A
\]
and its initialized variant $\mathtt{new}~x=M~\mathtt{in}~N$ as expected.
As usual with general references one can define a fixed point combinator $Y_{A\to B}$ by
\[
\begin{array}{l}
\lambda f:(A\to B)\to (A\to B).\\
~\mathtt{new}~y:A\to B~\mathtt{in}\\
~~y:=\lambda a:A.~f~!y~a;\\
~~!y
\end{array}
\]
This can be easily applied to implement a $\mathtt{while}$ loop. We can also use it to build an inhabitant $\bot_A:A$ for any type $A$,
for example by $\bot_A = Y_{\unit \to A} (\lambda x.x) \tskip$.

Sum types can also be used to define datatypes. For instance, we define $\bool = 1 + 1$. It is easy to check that the usual combinators for
$\bool$ can be defined using injections and elimination of sums and that they behave in the same way \emph{w.r.t.} the operational semantics.

\subsection{Isomorphisms of types}

We are now ready to define the notion of isomorphism of types in $\Lsums$.
\begin{defi}
If $A$ and $B$ are two types of $\Lsums$, we say that $A$ and $B$ are \emph{isomorphic}, denoted by $A\iso_{\Lsums} B$,  if and only if there are two terms $x:A \vdash M:B$ and $y:B \vdash N:A$ such that:
\begin{eqnarray*}
(x:A \vdash (\lambda y.N) M) &\obseq& \id_A\\
(y:B \vdash (\lambda x.M) N) &\obseq& \id_B
\end{eqnarray*}
where $\id_A = x:A \vdash x:A$.
\end{defi}

This notion of isomorphism relies on the following notion of composition: if $x: A \vdash M: B$ and $y:B \vdash N : C$,
we define $N\circ M = x: A \vdash (\lambda y.N)~M: C$. Although we do not need it formally, let us note in passing that this composition is associative
and behaves well with respect to identities (up to observational equivalence). This can be proved directly, although reasoning on call-by-value
$\beta\eta$-reductions does not suffice --- one needs a more powerful tool such as logical relations. That this composition induces a category will
also follow directly from full abstraction since this composition coincides with composition in the games model.

\subsubsection{Isomorphisms and bad variables}

The $\mathtt{mkvar}$ construct allows to combine arbitrary ``write" and ``read" methods, forming terms
of type $\mathtt{var}[A]$ not behaving as reference cells: those are called \emph{bad variables}. We chose to include bad variables in the
language we consider for two reasons. Firstly, the games models that allow bad variables are notably simpler than those which do
not \cite{DBLP:conf/fossacs/MurawskiT09}, for which our methods do not directly apply. Secondly, the impact of allowing bad variables
on our result will be reduced by the following proposition:

\begin{prop}
Let $\Lsums'$ denote the variant of $\Lsums$ without $\mathtt{mkvar}$. Then, if $A$ and $B$ are $\var$-free types, we have $A\iso_{\Lsums} B$
if and only if $A \iso_{\Lsums'} B$.
\end{prop}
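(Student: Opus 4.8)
The two implications might look asymmetric, but they rest on a common device: the inclusion $\Lsums' \subseteq \Lsums$ only tells us that $\Lsums$ has more contexts, so its observational equivalence is \emph{a priori} finer, and we must rule out that $\mkvar$-contexts create new distinctions. The plan is to define a \emph{decomposition} translation $(-)^\dagger : \Lsums \to \Lsums'$ that implements, on the nose, the isomorphism $\var[A] \iso (A \to \unit)\times(\unit \to A)$ of Figure~\ref{equational_theory}. On types it is a homomorphism with the single nontrivial clause $(\var[A])^\dagger = (A^\dagger \to \unit)\times(\unit \to A^\dagger)$; in particular $A^\dagger = A$ whenever $A$ is $\var$-free. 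On terms it leaves the functional constructs untouched and sets $(\mkvar~M~N)^\dagger = \lpair{M^\dagger, N^\dagger}$, $(M := N)^\dagger = (\fst{M^\dagger})~N^\dagger$ and $(!M)^\dagger = (\snd{M^\dagger})~\tskip$, while $\new_A$ is sent to a genuine cell presented as such a pair, $(\lambda c.~\lpair{\lambda a.~(c := a),~\lambda u.~{!}c})~\new_{A^\dagger}$. One checks routinely that $(-)^\dagger$ preserves typing, sends $\id_A$ to $\id_{A^\dagger}$, and commutes with substitution and with context formation, so that $C[M]^\dagger = C^\dagger[M^\dagger]$.

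The crux is an \textbf{adequacy} lemma: for every closed, location-free $\Lsums$-term $M$ one has $M \eval$ iff $M^\dagger \eval$, and more generally $(-)^\dagger$ preserves observational equivalence and acts as the identity (up to $\obseq_{\Lsums'}$) on $\Lsums'$-terms. I would prove this by a logical relation $R_A$ between $\Lsums$-values at $A$ and $\Lsums'$-values at $A^\dagger$, carried along a matching relation on stores, whose defining clause relates a variable (good or bad) to a pair $\lpair{w, r}$ exactly when $w$ and $r$ simulate its write and read behaviours; one then shows by induction on the rules of Figure~\ref{opsem} that evaluation preserves $R$. The case of $\mkvar$ closes because reads and writes of a bad variable are, by construction, the two projections of its decomposition. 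Alternatively, once the fully abstract model of Section~\ref{section_games} is available the lemma is immediate: that model interprets $\var[A]$ as $(A\to\unit)\times(\unit\to A)$ and interprets $\mkvar$, $:=$, $!$ and $\new$ respectively as pairing, the projections, and the good-variable strategy, so $\intr{M^\dagger} = \intr{M}$ up to the canonical isomorphism and adequacy reduces to soundness.

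From adequacy and the identity $C[M]^\dagger = C^\dagger[M^\dagger]$ it follows that, at $\var$-free types, $\obseq_{\Lsums}$ and $\obseq_{\Lsums'}$ coincide on $\Lsums'$-terms (conservativity) and that $(-)^\dagger$ preserves $\obseq$. The proposition then follows. Suppose $A, B$ are $\var$-free, so $A^\dagger = A$ and $B^\dagger = B$. If $A \iso_{\Lsums} B$ via $x:A\vdash M:B$ and $y:B\vdash N:A$, then $M^\dagger, N^\dagger$ are $\mkvar$-free witnesses of the same types, and applying $(-)^\dagger$ to $(\lambda y.N)M \obseq \id_A$ and $(\lambda x.M)N \obseq \id_B$ --- using that it commutes with abstraction and application and fixes the identities --- yields both equivalences in $\obseq_{\Lsums'}$, so $A \iso_{\Lsums'} B$. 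Conversely, an $\Lsums'$-isomorphism is witnessed by $\Lsums'$-terms whose defining equivalences, by conservativity, already hold for $\obseq_{\Lsums}$, so $A \iso_{\Lsums} B$.

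The main obstacle is the adequacy lemma, and specifically its store-passing content. Bad variables are higher-order: the value written into one may itself allocate or mutate further cells, so the logical relation must be defined by a well-founded recursion that is stable under store extension, and the simulation between a bad variable and its pair of methods has to be maintained across arbitrary interleavings of reads and writes. This is exactly where the semantic route is more economical, the games model discharging the bookkeeping by construction; I would therefore present the translation syntactically but verify it through full abstraction.
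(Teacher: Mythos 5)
Your overall strategy is sound and the statement does follow from it, but your route differs from the paper's in a way that matters. The paper's translation does \emph{not} collapse $\var[A]$ to the product $(A\to\unit)\times(\unit\to A)$: it passes through an auxiliary language with a separate good-variable type $\gvar$ and sets $(\var[A])^t = \gvar[A] + (A\to\unit)\times(\unit\to A)$, with $\new^t = \inj_1\,\new_g$, $l^t = \inj_1\,l$, $(\mkvar~M~N)^t = \inj_2\lpair{M^t,N^t}$, and case analysis for $!$ and $:=$. The point of the sum is that genuine locations remain genuine locations, so stores translate pointwise and the statement ``$(L,s)\,M\eval (L',s')\,V$ iff $(L,s)^t\,M^t\eval (L',s')^t\,V^t$'' is a routine induction on the derivation --- no logical relation, no semantics. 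Your $(-)^\dagger$ instead turns every $\new$ into a pair of closures over a fresh cell, which is exactly what forces the store-passing logical relation you identify as the main obstacle (a location stored inside another cell becomes a pair of closures, and the correspondence between the two stores is no longer structural). So the design choice you flag as a difficulty is one the paper simply avoids.

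Your fallback --- verifying $\intr{M^\dagger}=\intr{M}$ in the games model and invoking full abstraction --- is workable and is a genuinely different argument, but be aware of what it costs. First, it makes a Section~2 proposition depend on the entire apparatus of Section~3 (soundness, adequacy, definability, full abstraction); this is not circular, but it is much heavier than the paper's self-contained syntactic induction. Second, full abstraction is only proved for $\Lsums$ \emph{with} $\mkvar$ (definability uses bad variables essentially), so every statement you need about $\obseq_{\Lsums'}$ has to be reached indirectly: establish the equivalence up to $\obseq_{\Lsums}$ via the model, then restrict to the smaller class of $\Lsums'$-contexts. Your sketch is consistent with this, but you should say it explicitly, since ``conservativity'' in the other direction ($\obseq_{\Lsums'}$ on $\Lsums'$-terms implies $\obseq_{\Lsums}$) is exactly the nontrivial content and does not come for free from the inclusion of languages. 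Third, $\intr{\new^\dagger}=\intr{\new}$ is only true up to the canonical identification of $T\intr{A}$ with $1\tto T\intr{A}$ and requires checking that the $\eta$-expansions $\lambda a.\,(c:=a)$ and $\lambda u.\,!c$ denote the projections of $\cell$; this is believable but is precisely the kind of bookkeeping your proof defers. In short: correct plan, genuinely different and costlier key lemma, and the one step you leave open (adequacy of $(-)^\dagger$) is the step the paper's sum-based translation renders trivial.
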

\begin{proof}
Clearly, if $A \iso_{\Lsums'} B$ we must have $A \iso_{\Lsums} B$ as well. Conversely if $A \iso_{\Lsums} B$, there are terms
$x: A \vdash M : B$ and $y: B \vdash N : A$ possibly making use of bad variables, such that $M \circ N \obseq \id_B$ and $N \circ M \obseq \id_A$.
Then, the use of bad variables in $M$ and $N$ can be eliminated.

To see how, we consider an extension $\Lsums''$ of $\Lsums$ where we add a type constructor $\gvar$ for \emph{good variables}, so that
$\Lsums''$ has both types $\var$ for bad variables and $\gvar$ for good variables. The term constructors for $\gvar$
are written $\new_g$, $!_g M$, $M:=_g N$ and obey the same rules as the corresponding constructors for $\var$;
there is no $\mkvar_g$. Then, there is translation $(-)^{t}$ from $\Lsums''$ to itself, eliminating all uses of $\mkvar$. Let us write only the
non-trivial cases:

\begin{eqnarray*}
(\var[A])^{t} &=& \gvar[A] + (A \to \unit) \times (\unit \to A)
\end{eqnarray*}
\begin{eqnarray*}
\new^{t} 		&=& \inj_1 \new_g\\
l^t			&=& \inj_1 l\\
(!M)^{t} 		&=& \selim{M^{t}}{!_g x_1}{\snd x_2~\tskip}\\
(M:=N)^{t}		&=& \selim{M^{t}}{x_1 :=_g N^{t}}{\fst x_2~N^{t}}\\
(\mkvar~M~N)^{t}	&=& \inj_2{\lpair{M^{t}, N^{t}}}
\end{eqnarray*}

In all the other cases, the translation simply goes through the term without changing it.
Likewise if $(L,s)$ is a store, $(L,s)^{t}$ is obtained by pointwise application of $(-)^{t}$. It is straightforward to prove by induction
that if $(L, s)~M \eval (L', s')~V$, then $(L, s)^t~M^{t}\eval (L', {s'})^t~V^{t}$. The converse is also true and easily provable by induction,
with the slightly stronger induction hypothesis that if $(L, s)^t~M^t\eval (L_1, s_1)~V_1$
then there exists a store $(L_0, s_0)$ and a value $V_0$ in $\Lsums''$ such that $(L_1, s_1) = (L_0, s_0)^t$, $V_1 = V_0^t$ and
$(L, s)~M\eval (L_0, s_0)~V_0$. In particular if $M$ is closed, $M\eval$ iff $M^{t} \eval$.
This translation is extended to contexts in the straightforward way, with $[]^{t} = []$, such that we always have $(C[M])^{t} = C^{t}[M^{t}]$.
Note that since $(-)^t$ does not affect $\gvar$ it is idempotent, \emph{i.e.} $t\circ t = t$.

Since $\Lsums$ is a sublanguage of $\Lsums''$, there is an obvious translation $i$ of the former to the latter. Likewise, there is
a translation $j$ from $\Lsums''$ to $\Lsums$ merging the two types for references. Overall, $j \circ t \circ i$ is a translation
from $\Lsums$ to itself whose effect is to eliminate uses of $\mkvar$, of course modifying types as a consequence. Note as well
that $j\circ i$ is the identity translation. Putting all of these together, if $M$ is a closed term of $\Lsums$, we have:

\begin{eqnarray*}
C[M] \eval 	&\Leftrightarrow& (C[M])^{t\circ i} \eval\\
		&\Leftrightarrow& C^{t\circ i}[M^{t\circ i}] \eval\\
		&\Leftrightarrow& C^{t\circ i}[M^{t\circ t \circ i}] \eval\\
		&\Leftrightarrow& C^i[M^{t\circ i}] \eval\\
		&\Leftrightarrow& C[M^{j\circ t \circ i}] \eval
\end{eqnarray*}

Therefore, if $M \obseq N$, we have $M^{j\circ t \circ i} \obseq N^{j\circ t \circ i}$. But if $M$ and $N$ are composable, it is straightforward to
check that $(N \circ M)^{j\circ t \circ i} = N^{j \circ t \circ i} \circ M^{j\circ t \circ i}$. Similarly, we have $x^{j\circ t \circ i} = x$ for any
variable $x$. From this it follows that if $M, N$ give a type isomorphism between $A$ and $B$, then $M^{j \circ t \circ i}$,
$N^{j\circ t \circ i}$ give a $\mkvar$-free type isomorphism between $A^{j \circ t \circ i}$ and $B^{j \circ t \circ i}$. But if $A$ and $B$ are
$\var$-free, we have $A^{j \circ t \circ i} = A$ and $B^{j \circ t \circ i} = B$, so we have a $\mkvar$-free isomorphism between $A$ and $B$,
thus an isomorphism in $\Lsums'$.
\end{proof}

\subsubsection{On isomorphisms without bad variables} In $\Lsums'$, when are $\var[A]$ and $\var[B]$ isomorphic? Without bad variables, there
is in general no canonical way to transform a variable of type $A$ into a variable of type $B$. It is easy to see that dereferencing
$M: \var[A]$, applying the isomorphism between $A$ and $B$ and storing the result in a new reference of type $B$ will not yield an isomorphism
because even if the language does not come with a variable equality test, it can be defined on non-trivial types.
The handling of good variables with \emph{names} in \cite{DBLP:conf/lics/MurawskiT11} suggests that to get back the original name when going
back and forth between $\var[A]$ and $\var[B]$ one has no choice but to simply forward it, which is only possible when $A$ and $B$ are syntactically
equal. Therefore we expect that a general treatment of isomorphisms with good general references would have to treat variable types as
\emph{atoms}, that you can move around but never look inside. We leave that open for future work.

\section{The games model}

\label{section_games}
We now describe the fully abstract games model of $\Lsums$, which closely follows \cite{ahm} and extends it with sums and the empty type.

\subsection{The basic category}

Our games have two players: Player (P) and Opponent (O). 
\subsubsection{Arenas}

Valid plays between $O$ and $P$ are generated by directed graphs called \emph{arenas}, which are abstract representations of types. An \textbf{arena} is
a tuple $A = \lpair{M_A, \lambda_A, I_A, \enb{A}}$ where
\begin{iteMize}{$\bullet$}
\item $M_A$ is a set of \textbf{moves},
\item $\lambda_A: M_A \to \{O, P\}\times \{Q, A\}$ is a \textbf{labeling} function which indicates whether a move is by Opponent or Player, and whether it is a Question or Answer. We write
\begin{eqnarray*}
\{O, P\}\times \{Q, A\} &=& \{OQ, OA, PQ, PA\}\\
\lambda_A &=& \lpair{\lambda^{OP}_A, \lambda^{AQ}_A}
\end{eqnarray*}
The function $\overline{\lambda_A}$ denotes $\lambda_A$ with the $O/P$ part reversed. A move $a\in M_A$ is a $O$-move (resp. $P$-move) if $\lambda^{OP}_A(a) = O$ (resp. $\lambda^{OP}_A(a) = P$).
\item $I_A\subseteq {\lambda_A}^{-1}(\{OQ\})$ is a set of \textbf{initial moves}
\item $\enb{A}\subseteq M_A^2$ is a relation called \textbf{enabling}, which satisfies that if $a \enb{A} b$, then $\lambda_A^{OP}(a) \neq \lambda_A^{OP}(b)$, and if $\lambda_A^{QA}(b) = A$ then
$\lambda_A^{QA}(a) = Q$.
\end{iteMize}
Additionally, all the arenas we consider will be \textbf{finitely branching} (for all $a\in M_A$, the set $\{m\in M_A\mid a\vdash_A m\}$ is finite).
This is crucial, since our main result relies on a counting argument.

\subsubsection{Constructions on arenas} In what follows, if $S_1$ and $S_2$ are two sets, $S_1 + S_2$ will denote their \emph{disjoint union}
defined as $\{(1, x)\mid x \in S_1\} \cup \{(2, x)\mid x \in S_2\}$. The $n$-ary variant of this operation will be written $\coprod_{i\in I} S_i$.
Whenever convenient, if $f : S_1 \to T$ and $g: S_2 \to T$ are functions, we will write $[f, g] : S_1 + S_2 \to T$ for their co-pairing,
\emph{i.e.} the function applying $f$ on elements of $S_1$ and $g$ on elements of $S_2$.

We define the \textbf{arrow} arena $A\tto B$ and the \textbf{binary product} $A \times B$:
\begin{eqnarray*}
M_{A\tto B} 		&=& M_A + M_B\\
\lambda_{A\tto B} 	&=& [\overline{\lambda_A}, \lambda_B]\\
I_{A \tto B} 		&=& \{(2,i)\mid i \in I_B\}\\
\enb{A\tto B} 		&=& \{((1, m), (1, n))\mid m \enb{A} n\} \cup \{((2, m), (2, n))\mid m \enb{B} n\}\\
			& & \cup \{((2, i_1),(1, i_2))\mid (i_1, i_2) \in I_B \times I_A\}
\end{eqnarray*}
\begin{eqnarray*}
M_{A \times B} 		&=& M_A + M_B\\
\lambda_{A\times B}	&=& [\lambda_A, \lambda_B]\\
I_{A\times B}		&=& I_A + I_B\\
\enb{A \times B}	&=& \{((1, m), (1, n))\mid m \enb{A} n\} \cup \{((2, m), (2, n))\mid m \enb{B} n\}
\end{eqnarray*}

Another construction of central importance in the model is the \textbf{lifted sum}, giving rise to a weak coproduct in $\Gam$.
If $(A_i)_{i\in I}$ is a finite family of arenas, we define:

\begin{eqnarray*}
M_{\Sigma_{i\in I} A_i}		&=& \{q\} + \{a_i\mid i\in I\} + \coprod_{i\in I} M_{A_i}\\
\lambda_{\Sigma_{i\in I} A_i}	&=& (1, q) \mapsto OQ\\
				& & (2, a_i) \mapsto PA\\
				& & (3, (i, m)) \mapsto \lambda_{A_i}(m)\\
I_{\Sigma_{i\in I} A_i}		&=& \{(1, q)\}\\
\enb{\Sigma_{i\in I} A_i}	&=& \{((1, q), (2, a_i))\mid i \in I\} \cup\\
				& & \{((2, a_i), (3,(i, m)))\mid i\in I~\&~m\in I_{A_i}\} \cup\\
				& & \{((3, (i, m)), (3, (i, n)))\mid m \enb{A_i} n\}
\end{eqnarray*}

It is obvious that these constructions preserve the fact of being finitely branching. 
The $0$-ary product (the empty arena) is denoted by $\unit$,
and will be terminal in our category.

\subsubsection{Plays}

If $A$ is an arena, a \textbf{justified sequence} over $A$ is a sequence of moves in $M_A$ together with \textbf{justification pointers}: for each non-initial move $b$, there is a pointer to
an earlier move $a$ such that $a\enb{A} b$. In this case, we say that $a$ \textbf{justifies} $b$. The transitive closure of the justification relation is called \textbf{hereditary justification}.
The relation $\sqsubseteq$ will denote the prefix ordering on justified sequences. By $s \sqsubseteq^P t$, we mean that $s$ is a $P$-ending prefix of $t$.
If $s$ is a sequence, then $|s|$ will denote its length. Moreover if $i\leq|s|$, $s_i$ will denote the $i$-th move in $s$.
A justified sequence $s$ over $A$ is a \textbf{legal play} if it is:
\begin{iteMize}{$\bullet$}
\item \textbf{Alternating}: If $s'ab \sqsubseteq s$, then $\lambda_A^{OP}(a) \neq \lambda_A^{OP}(b)$.
\item \textbf{Well-bracketed}: a question $q$ is \textbf{answered} by a later answer $a$ if $q$ justifies $a$. A justified sequence $s$ is well-bracketed if each answer is justified by the last
unanswered question, that is, the \textbf{pending} question.
\end{iteMize}
The set of all legal plays on $A$ is denoted by $\leg{A}$. We will also be interested in the set $\preleg{A}$ of well-bracketed but not necessarily alternating justified sequences on $A$, called
\textbf{pre-legal plays}. 

\subsubsection{Strategies, composition}

A \textbf{strategy} $\sigma$ on an arena $A$ (denoted $\sigma: A$) is a non-empty set of $P$-ending legal plays on $A$ satisfying \textbf{prefix-closure}, \emph{i.e.} that for all $sab \in \sigma$,
we have $s\in \sigma$ and \textbf{determinism}, \emph{i.e.} that if $sab, sac\in \sigma$, then $b=c$. As usual, strategies form a category which has
arenas as objects, and strategies $\sigma: A\tto B$ as morphisms from $A$ to $B$. If $\sigma : A\tto B$ and $\tau: B \tto C$ are strategies, their composition $\sigma; \tau: A\tto C$
is defined as usual by first defining the set of \textbf{interactions} $u\in I(A, B, C)$ of plays $u\in \leg{(A\tto B)\tto C}$ such that $u\restrict A, B\in \leg{A\tto B}$,
$u\restrict B, C \in \leg{B\tto C}$ and $u\restrict A, C\in \leg{A\tto C}$ (where $s\restrict A, B$ is the usual restriction operation essentially taking the subsequence of $s$ in $M_A$ and $M_B$,
along with the possible natural reassignment of justification pointers). The \textbf{parallel interaction} of $\sigma$ and $\tau$ is then the set 
$\sigma||\tau = \{u\in I(A, B, C) \mid u\restrict A, B \in \sigma \wedge u\restrict B, C \in \tau\}$, and the composition of $\sigma$ and $\tau$ is obtained by
the \textbf{hiding} operation, \emph{i.e.} $\sigma; \tau = \{u\restrict A, C\mid u\in \sigma||\tau\}$.
It is known (e.g. \cite{McCusker1996}) that composition is associative. It admits \emph{copycat strategies} as identities:
$\id_A = \{s\in \leg{A_1\tto A_2}\mid \forall s'\sqsubseteq^P s, s'\restrict A_1 = s'\restrict A_2\}$.

If $s\in \leg{A}$, the \textbf{current thread} of $s$, denoted $\thread{s}$, is the subsequence of $s$ consisting of all moves hereditarily justified by the same initial move as the last
move of $s$. All strategies we are interested in will be \emph{single-threaded}, \emph{i.e.} they only depend on the current thread. Formally, $\sigma: A$ is \textbf{single-threaded} if
\begin{iteMize}{$\bullet$}
\item For all $sab\in \sigma$, $b$ points in $\thread{sa}$,
\item For all $sab, t\in \sigma$ such that $ta\in \leg{A}$ and $\thread{sa} = \thread{ta}$, we have $tab\in \sigma$.
\end{iteMize}
It is straightforward to prove that single-threaded strategies are stable under composition and that $\id_A$ is single-threaded. Hence, there is a category $\Gam$ of arenas and single-threaded strategies.
The category $\Gam$ will be the base setting for our analysis. Given arenas $A$ and $B$, the arena $A\times B$ defines a cartesian product of $A$ and $B$ and the construction $A\tto B$ extends
to a right adjoint $A\times - \dashv A\tto -$, hence $\Gam$ is cartesian closed and is a model of simply typed $\lambda$-calculus. 

\subsubsection{Views, classes of strategies}

In this paper, we are mainly interested in the properties of single-threaded strategies. However, to give a complete account of the context it seems necessary to mention several classes of strategies
of interest in this setting. The most important one is certainly the class of \emph{innocent} strategies, both for historical reasons and because it is at the core of the frequent definability results -- and thus of
the full abstraction results -- in game semantics. Its definition relies on the notion of $P$-view, defined as usual by induction on plays as follows.
\[
\begin{array}{rcll}
\pview{si} &=& i & \text{if $i\in I_A$}\\
\pview{sa} &=& \pview{s}a &\text{if $\lambda_A^{OP}(a) = P$}\\
\pview{s_1 a s_2 b} &=& \pview{s_1} a b& \text{if $\lambda_A^{OP}(b) = O$ and $a$ justifies $b$}
\end{array}
\]
A strategy $\sigma: A$ is then said to be \textbf{visible} if it always points inside its $P$-view, that is, for all $sab\in \sigma$ the justifier of $b$ appears in $\pview{sa}$. The strategy $\sigma$ is 
\textbf{innocent} if it is visible, and if its behaviour only depends on the information contained in its $P$-view. More formally, whenever $sab, t\in \sigma$ such that $ta\in \leg{A}$ and $\pview{sa} = \pview{ta}$,
we must also have $tab\in \sigma$. Both visibility and innocence are stable under composition \cite{hyland-ong,abramsky-mccusker:active-algol}, thus let us denote by $\Vis$ the category of
arenas and visible single-threaded strategies and by
$\Inn$ the category of arenas and innocent strategies. Both categories inherit the cartesian closed structure of $\Gam$, but strategies in $\Inn$ are actually nothing but an abstract representation of ($\eta$-long $\beta$-normal)
$\lambda$-terms and form a fully complete model of simply-typed $\lambda$-calculus. Strategies in $\Vis$ have more freedom, they correspond in fact to programs with first-order store \cite{abramsky-mccusker:active-algol}.

\subsection{The model of $\Lsums$} 

We now show how to turn $\Gam$ into a model of $\Lsums$.

\subsubsection{Call-by-value and the $\Fam$ construction}

The three categories $\Gam$, $\Vis$ and $\Inn$ are categories of \emph{negative games} (in which Opponent always plays first), and these are known to model call-by-name computation
whereas $\Lsums$ is call-by-value.
We could have modeled it using positive games, following the lines of \cite{DBLP:journals/tcs/HondaY99}. Instead, we follow \cite{ahm}
and model $\Lsums$ in the free completion $\BFam(\Gam)$ of $\Gam$ with respect to coproducts. This will allow us to first characterize
isomorphisms in $\Gam$ (result which could be applied to a call-by-name language with state) then deduce from it the isomorphisms in $\BFam(\Gam)$.
In fact we will consider the completion $\Fam(\Gam)$ of $\Gam$ with respect to \emph{finite} coproducts, since $\Lsums$ has only finite types.

The objects of $\Fam(\Gam)$ are finite families $(A_i)_{i\in I}$ of arenas. A map from $(A_i)_{i \in I}$ to $(B_j)_{j\in J}$ is given by a
function $f: I\to J$ together with a family of strategies $(\sigma_i)_{i\in I}$ where for all $i\in I$, $\sigma_i : A_i \to B_{f(i)}$.
When $I$ is a singleton, we will write the family $(A)_{i\in I}$ simply as $\{A\}$.
Given families $A = (A_i)_{i\in I}$ and $B = (B_j)_{j\in J}$, their \emph{disjoint} sum is the family $A + B = (X_i)_{i\in I + J}$ where 
$X_{(1, i)} = A_i$, and $X_{(2, j)} = B_j$. Likewise, we define:
\begin{eqnarray*}
A\times B 	&=& (A_i \times B_j)_{(i,j) \in I\times J}\\
A\tto B 	&=& (\Pi_{i\in I} (A_i\tto B_{f(i)}))_{f \in J^I}
\end{eqnarray*}
With these definitions, $\Fam(\Gam)$ inherits a cartesian closed structure from $\Gam$. It also has coproducts given by disjoint sum of
families, let us write $\inj_1 : A \to A + B$ and $\inj_2 : B \to A + B$ the injections. As in any bicartesian closed
category the product distributes over the sum, let us write $\mathrm{d}_{\Gamma, A, B} : \Gamma\times (A + B) \to \Gamma \times A + \Gamma \times B$
for the distributivity law. By an abuse of notation we keep using $1$ for the terminal object of $\Fam(\Gam)$, that is the singleton family containing
the empty arena, we write $!_A: A\to 1$ for the terminal projection. The category $\Fam(\Gam)$ also has an initial object given by the
empty family, we denote it by $0$.

\subsubsection{Strong monad}
Moreover, the weak coproducts in $\Gam$ give rise to a \emph{strong monad} $T$ on $\Fam(\Gam)$. Its image of a family $(A_i)_{i\in I}$
is given by:
\[
TA		= \{\Sigma_{i\in I} A_i\}
\]
The unit $\eta_A$ of the monad is the family of strategies $\mathtt{in}_i : A_i \to \Sigma_{i\in I} A_i$ (the injections for the weak coproduct
structure of $\Sigma_{i\in I} A_i$) which responds to the initial Opponent move by playing $a_i$ (unless $i = 0$), then plays as copycat. The
lifting $f^* : \Gamma \times TA \to TB$ of a morphism $f : \Gamma \times A \to TB$ is given by the copairing operation of the weak coproduct,
and the distributivity law of the product over it. Using this lifting operation, there are two natural ways to define a double strength
$\dst, \dst': TA \times TB \to T(A\times B)$: $\dst$ interrogates first $TA$, whereas $\dst'$ interrogates first $TB$. The fact that
$\dst$ and $\dst'$ are distinct means that $T$ is not commutative, and the choice of preferring one or the other parallels the design
choice between left-to-right and right-to-left evaluation of a pair in a call-by-value language. Since in $\Lsums$ we have adopted
left-to-right evaluation, we prefer $\dst$ over $\dst'$.

Most of the structure of $\Lsums$ (with the exception of memory cells) can be interpreted in $\Fam(\Gam)$ following the standard interpretation of
a call-by-value language in a cartesian category with a strong monad and Kleisli exponentials \cite{moggi}:
A term $x_1: A_1, \dots, x_n : A_n \vdash M:B$ is interpreted as a morphism $\intr{M} : \Pi_{i\leq n} \intr{A_i} \to T\intr{B}$,
(where the $n$-ary product and its projections $\pi_i$ is obtained trivially by iteration of the binary product).
Details are displayed in Figure \ref{interpretation-functional}. 

\begin{figure}

{\footnotesize
\begin{eqnarray*}
\intr{\Gamma \vdash x_i : A_i} 			&=& \pi_i; \eta: \Pi_{i\leq n} \intr{A_i} \to T\intr{A_i}\\
\intr{\Gamma \vdash \lambda x.M : A\to B}	&=& \Lambda(\intr{\Gamma, x: A\vdash M: B}); \eta : \intr{\Gamma} \to T(\intr{A}\to T\intr{B})\\
\intr{\Gamma \vdash M~N}			&=& \lpair{\intr{\Gamma \vdash M: A\to B}, \intr{N: A}}; \dst; \ev^* : \intr{\Gamma} \to T\intr{B}\\
\intr{\Gamma \vdash \tskip : \unit}		&=& !_{\intr{\Gamma}}; \eta : \intr{\Gamma} \to T1\\
\intr{\Gamma \vdash \lpair{M, N} : A \times B}	&=& \lpair{\intr{\Gamma\vdash M:A}, \intr{\Gamma\vdash N:B}}; \dst\\
\intr{\Gamma \vdash \pi_i~M:A}			&=& \intr{\Gamma\vdash M: A\times B}; T\pi_i\\
\intr{\Gamma \vdash \inj_i~M: A + B}		&=& \intr{\Gamma\vdash M:A}; T(\inj_i)\\
\intr{\Gamma \vdash \selim{M}{M_1}{M_2}}	&=& \lpair{\id, \intr{M}}; (\mathrm{d}; [\intr{M_1}, \intr{M_2}])^*
\end{eqnarray*}
}

\caption{Interpretation of the pure fragment of $\Lsums$}
\label{interpretation-functional}
\end{figure}

\subsubsection{Interpretation of memory cells}

Of course we also need to give an interpretation for $\var[A]$, along with morphisms for the read and write operations of the reference cell. Once again, we follow the lines of \cite{ahm} and consider the type $\var[A]$
as the product of its read and write methods, hence we set $\intr{\var[A]} = (\intr{A} \tto T1) \times T\intr{A}$. The interpretation relies on the definition of a morphism $1 \to \intr{\var[A]}$,
that is, if $\intr{A} = \{A_i\mid i \in I\}$, a strategy $\cell: (\Pi_{i\in I} (A_i\tto 1_\bot)\times \Sigma_{i\in I} A_i)_{\bot}$, where $A_\bot = T\{A\}$ is the lift operation. Apart from the initial
protocol due to the lift, the strategy $\cell$ works by associating each read request with the latest write request and playing copycat between them. A more detailed description is given in \cite{ahm}, and an
algebraic definition is obtained in \cite{DBLP:journals/entcs/MelliesT09}. Using $\cell$ we can complete the interpretation of $\Lsums$
in $\Fam(\Gam)_T$, as displayed in Figure \ref{impure}.
The fact that $\cell$ behaves correctly is expressed by the following lemma:

\begin{figure}

{\footnotesize
\begin{eqnarray*}
\intr{\Gamma\vdash M:=N:\unit} 		&=& \lpair{\intr{\Gamma \vdash M: \var[A]}, \intr{\Gamma\vdash N:A}}; \dst; (\pi_1 \times \intr{A}; \ev)^*:\intr{\Gamma}\to T\intr{\unit}\\
\intr{\Gamma\vdash !M: A}		&=& \intr{\Gamma \vdash M: \var[A]}; \pi_2^*: \intr{\Gamma}\to T\intr{A}\\
\intr{\Gamma \vdash \mkvar~M~N: \var[A]}&=& \lpair{\intr{\Gamma\vdash M : A \to \unit}, \intr{\Gamma\vdash N : \unit \to A}}; \dst:\intr{\Gamma} \to T\intr{\var[A]}
\end{eqnarray*}
}

\caption{Interpretation of variables}
\label{impure}
\end{figure}

\begin{lem}
The equations in Figure \ref{references} hold whenever the terms concerned are well-typed.

\label{eqvar}
\end{lem}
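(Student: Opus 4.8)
The plan is to verify each equation of Figure~\ref{references} separately, by unfolding the interpretation of the terms involved according to Figures~\ref{interpretation-functional} and~\ref{impure} and comparing the resulting morphisms of $\Fam(\Gam)_T$.

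First I would dispatch the equations concerning $\mkvar$, that is, those describing how $!$ and $:=$ act on a variable built with $\mkvar$. These are essentially categorical $\beta$-laws and require no reasoning at the level of plays. Indeed, $\intr{\mkvar~M~N}$ is, up to the strength $\dst$, the pairing $\lpair{\intr{M}, \intr{N}}$ of the interpretations of the write method $M : A \to \unit$ and the read method $N : \unit \to A$. The interpretation of $!(-)$ postcomposes with the second projection $\pi_2$ (and then lifts), while that of $(-) := P$ postcomposes with the first projection $\pi_1$ before applying it to $P$. Since $\lpair{f_1, f_2}; \pi_i = f_i$ in any cartesian category, composing the projection with the pairing collapses onto the corresponding method, so that $!(\mkvar~M~N)$ reduces to the interpretation of calling $N$ and $(\mkvar~M~N) := P$ to that of calling $M$. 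The only care needed is to thread the monad $T$ and the double strength $\dst$ through in the right order, which is a routine check using the monad laws and the coherence of the strength; no genuine difficulty arises here.

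The substantial part of the proof concerns the equations involving genuine cells introduced by $\new$, which express that a freshly allocated cell behaves as a reference: a dereferencing returns the most recently written value, successive writes overwrite one another, and so on. For these I would rely on the explicit description of the strategy $\cell$, which responds to each read request by playing copycat against the latest preceding write request. The proof then amounts to computing the composite of $\cell$ with the surrounding read/write morphisms at the level of interactions: one unfolds the parallel interaction, follows the justification pointers together with the well-bracketing discipline, and checks that the copycat links of $\cell$ force each dereferencing to return exactly the value supplied by the last assignment, while an absent or overwritten value is never observed.

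The main obstacle is precisely this play-level analysis of $\cell$: unlike the $\mkvar$ equations, the cell equations cannot be reduced to manipulation of the categorical combinators, and one must argue directly about sequences of moves, their pointers, and the alternation and bracketing conditions, tracking how $\cell$ reindexes reads to writes across the hiding operation. Once the behaviour of $\cell$ on such interactions is pinned down, each cell equation follows by matching the two resulting strategies play by play. Since the cell strategy and its soundness are already established in~\cite{ahm}, I would largely transport those computations to the present setting; the only new ingredient is the presence of sum types, which affects only the shape of the family $\intr{A} = \{A_i \mid i \in I\}$ indexing the lifted sum $\Sigma_{i\in I} A_i$ and is handled uniformly by the copairing of the weak coproduct.
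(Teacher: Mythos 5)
Your approach matches the paper's: the paper's entire proof is the single remark that the argument of \cite{ahm} carries over because the presence of sums does not affect it, and your substantive paragraphs --- transporting the play-level analysis of $\cell$ from \cite{ahm} and observing that sums only reshape the indexing family $\intr{A} = \{A_i \mid i \in I\}$, handled uniformly by the copairing of the weak coproduct --- are exactly that argument spelled out. One small correction: Figure~\ref{references} contains no equations about $\mkvar$ (all five equations concern genuine cells allocated by $\new$), so your first paragraph on the projection/pairing $\beta$-laws for $\mkvar$ addresses laws that are not part of this lemma, though they are harmless and do arise elsewhere in the soundness proof.
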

\begin{proof}
As in \cite{ahm}, the presence of sums does not affect the proof.
\end{proof}

\begin{figure}

{\footnotesize
\begin{eqnarray*}
\intr{\Gamma \vdash \new~x:A, y:B~\mathtt{in}~M} 		&=& \intr{\Gamma \vdash \new~y:B, x:A~\mathtt{in}~M}\\
\intr{\Gamma, x:\var[A] \vdash \new~y:B~\mathtt{in}~x:=V; M} 	&=& \intr{\Gamma, x: \var[A] \vdash x:=V; \new~y:B~\mathtt{in}~M}\\
\intr{\Gamma \vdash \new~x:A, y:B~\mathtt{in}~x:=V_1; y:=V_2; M}	&=& \intr{\Gamma \vdash \new~x:A, y:B~\mathtt{in}~y:=V_2; x:=V_1; M}\\
\intr{\Gamma\vdash \new~x:A~\mathtt{in}~x:=V_1; x:=V_2; M}	&=& \intr{\Gamma \vdash \new~x:A~in~x:=V_2; M}\\
\intr{\Gamma\vdash \new~x:A~\mathtt{in}~x:=V;!x}&=& \intr{\Gamma\vdash \new~x:A~\mathtt{in}~x:=V; V}
\end{eqnarray*}
}

\caption{Equations concerning assignments and allocations}
\label{references}
\end{figure}

\subsection{Full abstraction for $\Lsums$}
\subsubsection{Soundness and adequacy}

If we have a store $(L, s)$ and a term $l_1: \var[A_1], \dots, l_n:\var[A_n] \vdash M: A$ where the $l_i$s appear in $L$ with
type $A_i$, we write $\new~L, s~\mathtt{in}~M$ as a shortcut for $\new~l_1: A_1, \dots, l_n:A_n~\mathtt{in}~l_1:=s(l_1); \dots l_n := s(l_n); M$.
Note that the order in which variables are introduced and assigned values does not matter, because of Lemma \ref{eqvar}.

\begin{prop}[Soundness]
If we have $(L, s) M \eval (L', s') V$, then for any suitably typed term $N$ we have $\intr{\new~L, s~\mathtt{in}~(\lambda x.N)~M} = 
\intr{\new~L', s'~\mathtt{in}~(\lambda x.N)~V}$.
\end{prop}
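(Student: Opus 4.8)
The plan is to proceed by induction on the derivation of $(L, s)~M \eval (L', s')~V$ given in Figure \ref{opsem}, keeping the continuation $N$ universally quantified throughout, so that the induction hypothesis can be reinstantiated at different continuations in the sequencing cases. The ingredients are: compositionality of $\intr{-}$; the monad, strength and double-strength laws governing $\eta$, $(-)^*$ and $\dst$; a substitution lemma for values; and, for the imperative rules, Lemma \ref{eqvar} together with the copycat behaviour of the $\cell$ strategy. First I would record the substitution lemma: every value $V$ is interpreted by a morphism factoring through $\eta$ as $\intr{V} = v_V; \eta$ for some $v_V : \intr{\Gamma} \to \intr{A}$, and for such $V$ one has $\intr{\Gamma \vdash M[V/x]} = \lpair{\id, v_V}; \intr{\Gamma, x:A \vdash M}$, proved by a routine induction on $M$ using naturality of $\eta$ and of the strength.

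The pure cases then reduce to rearranging Kleisli composites. For $V \eval V$ both sides are literally equal. For a pair $\lpair{M_1, M_2} \eval \lpair{V_1, V_2}$ (with intermediate store $(L_1, s_1)$) I would apply the induction hypothesis first to $M_1$ with a continuation carrying $M_2$, and then to $M_2$, the double strength $\dst$ encoding exactly the left-to-right discipline of Figure \ref{interpretation-functional}; projections and injections follow similarly using naturality of $T\pi_i$ and $T(\inj_i)$, and sum elimination by combining the hypothesis on the scrutinee with the hypothesis on the chosen branch. The application rule $M~N \eval V_2$ is the representative case: I would use the induction hypothesis on $M$ (yielding $\lambda x.M'$), then on $N$ (yielding $V_1$), then on $M'[V_1/x]$, discharging the final beta-step by the substitution lemma above, so that the nested continuations telescope to the required equation.

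The imperative rules are where Lemma \ref{eqvar} does the work, and where I expect the main obstacle to lie. The delicate point is purely bookkeeping about the store-wrapper $\new~L, s~\mathtt{in}~(-)$, which encodes $(L, s)$ as a block of allocations followed by assignments: one must show that each operational update of $(L, s)$ into $(L', s')$ is mirrored, under $\intr{-}$, by a legal rearrangement of that block. For $\new_A$ one appends a fresh allocation and uses the commutation equations of Figure \ref{references} to push it to the front; here freshness ($l \notin L$) is what licenses the commutation, and the allocated cell is left uninitialised exactly as in the operational rule. For $M := N$ one evaluates $M$ to a location $l$ and $N$ to $V$, then uses the re-assignment equation of Figure \ref{references} (together with the commutations) to coalesce the new assignment into the block, whether $l$ is pre-existing or freshly allocated; the $\mkvar$-variants of assignment and of dereferencing are discharged directly from the clauses of Figure \ref{impure}. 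For $!M$ one uses the read-after-write equation, realised semantically by the copycat behaviour of $\cell$. The crux throughout is to keep the correspondence between the syntactic store and its algebraic presentation precise across fresh versus existing locations; once the commuting conversions of Lemma \ref{eqvar} are marshalled correctly, every case closes by the monad laws.
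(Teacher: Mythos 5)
Your proposal is correct and follows essentially the same route as the paper, whose proof is exactly the one-line recipe you elaborate: induction on the derivation of $(L, s)~M \eval (L', s')~V$, using standard facts about bicartesian closed categories and strong monads for the pure rules, and the equations of Lemma \ref{eqvar} for the store-manipulating rules. Your additional details (keeping the continuation $N$ quantified through the induction, the value-substitution lemma, and the bookkeeping for the $\new~L, s~\mathtt{in}~(-)$ wrapper) are the right ingredients to flesh out that sketch.
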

\begin{proof}
This is proved by induction on the derivation of $(L, s)~M\eval (L', s')~V$, using standard facts about bicartesian closed categories and strong
monads, along with the equations of Lemma \ref{eqvar}. 
\end{proof}

The next step is to extend the adequacy result of \cite{ahm} with sums, \emph{i.e.} that for any closed term $M$, if $\intr{M} \neq \bot$ then
$M$ converges. We do that by exploiting the retraction $A + B \retracts \mathtt{bool} \times A \times B$. Consider the language $\Lang$ of
\cite{ahm}. We define a translation of $\Lsums$ into $\Lang$ by defining $(A + B)^t = \mathtt{bool} \times A^t \times B^t$, $0^t = 1$, and
$(-)^t$ preserves all the other constructors. To extend the translation to terms, one must first note that in $\Lang$ every type has a value,
let us fix a value $V_A$ for every type $A$. Let us define the translation on terms, for the only non-trivial cases:

\begin{eqnarray*}
(\Gamma \vdash \inj_1 M : A + B)^t 		&=& \Gamma^t \vdash \lpair{\mathtt{true}, M^t, V_{B^t}} : (A+B)^t\\
(\Gamma \vdash \inj_2 M : A + B)^t 		&=& \Gamma^t \vdash \lpair{\mathtt{false}, V_{A^t}, M^t} : (A+B)^t\\
(\Gamma \vdash \selim{M}{N_1}{N_2} : C)^t	&=& \Gamma^t \vdash \mathtt{if}~\pi_1~M^t~\mathtt{then}~(\lambda x_1.N_1^t)~(\pi_2~M^t)\\
			& & \mathtt{else}~(\lambda x_2. N_2^t)~(\pi_3~M^t) : C^t
\end{eqnarray*}
The translation extends immediately to stores.
It is then a straightforward induction to prove that if $(L, s)^t~M^t \eval (L_1, s_1)~V_1$, then there exists a value $V_0$ and a store
$(L_0, s_0)$ in $\Lsums$ such that $V_1 = V_0^t$, $(L_1, s_1) = (L_0, s_0)^t$ and $(L, s)~M \eval (L_0, s_0)~V_0$. Therefore if $M^t \eval$,
$M \eval$ as well. Let us define an \textbf{embedding} of arena $\phi : A \hookrightarrow B$ as an injective function $\phi: M_A \to M_B$ 
preserving and reflecting initial moves, enabling and labelling. Likewise, there is an embedding from a family $(A_i)_{i\in I}$ to $(B_j)_{j\in J}$
if there is an injective $f: I \to J$ and for all $i\in I$ an embedding $\phi_i : A_i \hookrightarrow B_{f(i)}$.

For every type $A$ and sequent $\Gamma \vdash A$ we build an embedding $\phi_A : \intr{A} \hookrightarrow \intr{A^t}$. We detail the only non-trivial
case, \emph{i.e.} the definition of $\phi_{A+B}$. Note that if $A$ is a type in $\Lang$ and $\intr{A} = (A_i)_{i\in I}$, then
the choice of a value $\vdash V_A : A$ fixes a particular $i_0 \in I$ (such that $\intr{\vdash V_A: A}$ responds $a_{i_0}$ to the Opponent initial
move). Then, $\phi_{A+B}$ is defined from the function:
\begin{eqnarray*}
f : I + J 	&\to& (I\times J) + (I \times J)\\
(1, i)    	&\mapsto& (1, (i, j_0))\\
(2, j)		&\mapsto& (2, (i_0, j))
\end{eqnarray*}
along with the canonical embeddings of $A_i$ into $A_i \times B_{j_0}$ and of $B_j$ into $A_{i_0} \times B_j$.
This embedding $\phi : A \hookrightarrow B$ can also be applied move-by-move to plays, hence to strategies. Then, we can
prove by induction that for any term $\Gamma \vdash M : A$, we have $\phi_{\Gamma \vdash A}(\intr{M}) \subseteq \intr{M^t}$. It follows
that if $\intr{M} \neq \bot$, we have $\intr{M^t} = \phi_{\Gamma \vdash A}(\intr{M}) \neq \bot$ as well. Thus by the adequacy result
in \cite{ahm}, $M^t \eval$. Therefore, $M \eval$. We have proved:

\begin{lem}[Adequacy]
For any well typed term $M$, if $\intr{M} \neq \bot$ then $M \eval$.
\end{lem}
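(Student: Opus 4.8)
The plan is to reduce the statement to the adequacy theorem already established for the language $\Lang$ in \cite{ahm}, transporting the convergence question for $M$ across the translation $(-)^t : \Lsums \to \Lang$ built from the retraction $A + B \retracts \bool \times A \times B$ (setting moreover $0^t = \unit$). Concretely, I would organise the argument around two independent correspondences, one operational and one denotational, and then chain them together.

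First, the operational half: I would prove that convergence reflects along $(-)^t$, i.e. that $M^t \eval$ implies $M \eval$. This is an induction on the derivation of $(L,s)^t~M^t \eval (L_1, s_1)~V_1$ in $\Lang$, but the naive statement fails to carry through because one must know that the resulting store and value already lie in the image of $(-)^t$. I would therefore strengthen the hypothesis to: whenever $(L,s)^t~M^t \eval (L_1,s_1)~V_1$, there exist a value $V_0$ and a store $(L_0,s_0)$ of $\Lsums$ with $V_1 = V_0^t$, $(L_1,s_1) = (L_0,s_0)^t$, and $(L,s)~M \eval (L_0,s_0)~V_0$. The only cases needing care are those where $(-)^t$ is not a homomorphism, namely the injections and sum elimination: there one inspects the $\mathtt{if}$ on the boolean tag $\pi_1~M^t$ and matches each branch of the conditional in $\Lang$ with the corresponding branch of $\delta$ in $\Lsums$.

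Second, the denotational half: I would exhibit, for each type (and each sequent), an embedding of families $\phi_A : \intr{A} \hookrightarrow \intr{A^t}$, the key case being $\phi_{A+B}$, which uses the fixed values $V_A$ to pick out the base indices $i_0, j_0$ together with the canonical inclusions $A_i \hookrightarrow A_i \times B_{j_0}$ and $B_j \hookrightarrow A_{i_0} \times B_j$. Applying an embedding move-by-move sends plays to plays, hence strategies to strategies, and I would then show by induction on typing derivations that $\phi_{\Gamma \vdash A}(\intr{M}) \subseteq \intr{M^t}$; this reduces to checking that $\phi$ is compatible with composition, pairing, the strong monad $T$, and the distributivity and copairing used to interpret $\delta$. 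Since $\phi$ preserves nonemptiness, $\intr{M} \neq \bot$ then forces $\intr{M^t} = \phi_{\Gamma \vdash A}(\intr{M}) \neq \bot$.

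Finally I would assemble the two halves: from $\intr{M} \neq \bot$ the denotational inclusion yields $\intr{M^t} \neq \bot$; the adequacy result of \cite{ahm} applied in $\Lang$ then gives $M^t \eval$; and the operational reflection gives $M \eval$, as required. I expect the main obstacle to be the denotational inclusion $\phi_{\Gamma \vdash A}(\intr{M}) \subseteq \intr{M^t}$, since verifying that the move-by-move embedding commutes with every component of the Kleisli/monadic interpretation (and in particular with the conditional simulating $\delta$) is where the real bookkeeping lies; the backward operational simulation is routine once the induction hypothesis has been strengthened as above.
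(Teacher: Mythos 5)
Your proposal matches the paper's proof essentially step for step: the same translation $(-)^t$ into $\Lang$ via the retraction $A+B \retracts \bool \times A \times B$, the same strengthened induction hypothesis for the backward operational simulation, the same family of embeddings $\phi_A : \intr{A} \hookrightarrow \intr{A^t}$ (with the fixed values $V_A$ selecting the base indices in the sum case) applied move-by-move to establish $\phi_{\Gamma\vdash A}(\intr{M}) \subseteq \intr{M^t}$, and the same final chaining through the adequacy theorem of \cite{ahm}. The argument is correct and there is nothing to add.
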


\subsubsection{Definability and full abstraction}

In order to get full abstraction, the main missing ingredient is definability for compact (finite) strategies.
In turn, this relies on the following factorization result:

\begin{prop}
For any arena $A$ and any finite (as a set of plays) thread-independent strategy $\sigma: 1 \tto TA$, there exist natural
numbers $k_1, k_2$ and an innocent strategy with finite view function $\tau : (\var[T1])^{k_1} \times \var[\bool^{k_2}] \tto TA$
such that:
\[
\lpair{\cell_{T1}, \dots, \cell_{T1}, \cell_{\bool^{k_2}}}; \tau = \sigma
\]
Where $A^k$ is an iterated binary product and $\lpair{\sigma_1, \dots, 
\sigma_{k+3}} = \lpair{\lpair{\sigma_1, \dots, \sigma_{k+2}}, 
\sigma_{k+3}}$.
\end{prop}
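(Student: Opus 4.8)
The plan is to follow the factorization theorem of \cite{ahm}, constructing $\tau$ and the cells explicitly so that all the non-innocent and non-visible behaviour of $\sigma$ is externalised into the store. The guiding principle is that a finite strategy can deviate from innocence in only two ways: its response to a play may depend on information not present in the current P-view (genuine \emph{state}), and it may point at justifiers lying outside the P-view (non-\emph{visibility}). Both deviations are bounded, since $\sigma$ is finite, so each can be encoded into a fixed finite amount of memory that $\tau$ reads and writes innocently.

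First I would isolate the finite control state of $\sigma$. As $\sigma$ has finitely many plays, there are only finitely many distinct \emph{continuations} it can take, and these can be indexed by a bounded number of bits; this fixes $k_2$ and the cell $\var[\bool^{k_2}]$. The idea is that $\tau$ reads this cell at each Opponent move to recover the current state, consults its own P-view together with that state to determine the reply dictated by $\sigma$, and writes back the updated state before answering. Because the cell interaction appears explicitly as moves in the $\var[\bool^{k_2}]$ component, all the information $\tau$ needs is now contained in its P-view, so $\tau$ can be made innocent with a finite view function.

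The subtler point is non-visibility, handled by the command cells $(\var[T1])^{k_1}$. Here the plan is to associate with the relevant Opponent questions a cell of type $\var[T1]$, using the write/read protocol of $\cell$ to record and later recover the justification information that $\sigma$ uses but that $\tau$ cannot read off its P-view; the number of such cells needed is bounded by the size of the plays of $\sigma$, which fixes $k_1$. The correctness of $\cell$ established in Lemma \ref{eqvar} --- that a read returns the value of the latest write --- guarantees that the stored pointers and control states are faithfully reproduced, so that $\lpair{\cell_{T1}, \dots, \cell_{\bool^{k_2}}}; \tau$ replays exactly the moves of $\sigma$.

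Finally I would verify the identity $\lpair{\cell_{T1}, \dots, \cell_{\bool^{k_2}}}; \tau = \sigma$ by induction on plays, checking that hiding the cell components of the parallel interaction recovers precisely the plays of $\sigma$. The main obstacle is the non-visibility part: designing the command-cell protocol so that it encodes arbitrary but bounded pointer structures while keeping $\tau$ innocent with a finite view function is the technical heart of the argument, and is exactly where the generality of higher-order references over first-order store is used. As in the preceding lemmas, the presence of sums does not affect this construction: it only enriches the arenas with the lifted-sum structure through which the same read/write protocol passes unchanged.
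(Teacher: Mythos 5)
Your proposal is correct and follows essentially the same route as the paper: the paper simply chains the factorization theorem of \cite{ahm} (externalising non-visibility into the $(\var[T1])^{k_1}$ command cells, yielding a visible $\tau_1$) with the factorization theorem of \cite{abramsky-mccusker:active-algol} (externalising the remaining state into a single cell), and the one genuinely new observation --- which you also identify --- is that finiteness of the strategy lets the $\cell_{\nat}$ of the latter theorem be replaced by $\cell_{\bool^{k_2}}$, since $\nat$ has no interpretation among finitely branching arenas. The only cosmetic difference is that you present the two externalisations as a single simultaneous construction to be verified by induction, whereas the paper applies them sequentially as black boxes.
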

\begin{proof}
The main factorization result of \cite{ahm} gives a natural number $k_1$ and a thread-inde\-pen\-dent finite visible strategy
$\tau_1 : (\var[T1])^{k_1} \tto TA$ such that $\lpair{\cell_{T1}, \dots, \cell_{T1}}; \tau_1 = \sigma$.
The factorization theorem of \cite{abramsky-mccusker:active-algol} then allows to factorize $\tau_1$ as
$\cell_{\nat}; \tau_2$, where $\tau_2$ is an innocent strategy with finite view functions. But we have no interpretation
for $\nat$ in our model, since all arenas are supposed finitely branching! Fortunately this is not a problem: the proof works by exploiting
an injective function $code: \tau_1 \to \mathbb{N}$, encoding plays in $\tau_1$ as natural numbers and storing them
in the reference cell. However $\tau_1$ is finite, so for some $k_2 \in \mathbb{N}$ there is an encoding of $\mathrm{\tau_1}$ in $\mathbb{B}^{k_2}$,
where $\mathbb{B} = \{\mathtt{t}, \mathtt{f}\}$. Exploiting this encoding as in \cite{abramsky-mccusker:active-algol} yields the required
factorization.
\end{proof}

\begin{prop}[Definability]
Let $A$ be a type of $\Lsums$, and $\sigma: 1 \tto T\intr{A}$ a \emph{finite} strategy. Then there is a well-typed term $\vdash M : A$ of $\Lsums$
such that $\intr{M} = \sigma$.
\end{prop}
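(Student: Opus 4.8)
The plan is to reduce the statement to the factorization just established, and then to realize both resulting factors as denotations of terms. First I would apply the preceding proposition to $\sigma$, which is finite and single-threaded, hence thread-independent; since its codomain $T\intr{A}$ is a single lifted-sum arena, the proposition supplies natural numbers $k_1, k_2$ and an innocent strategy with finite view function $\tau$ such that
\[
\lpair{\cell_{T1}, \dots, \cell_{T1}, \cell_{\bool^{k_2}}}; \tau = \sigma,
\]
with $\tau : (\var[T1])^{k_1} \times \var[\bool^{k_2}] \tto T\intr{A}$. It then suffices to exhibit the tuple of cells and the strategy $\tau$ separately as denotations of terms, and to glue them using compositionality of $\intr{-}$.

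For the cells, I would note that each is the denotation of a reference allocation: up to the identification of the read component, $\cell_{T1} = \intr{\vdash \new_{\unit} : \var[\unit]}$ and $\cell_{\bool^{k_2}} = \intr{\vdash \new_{\bool^{k_2}} : \var[\bool^{k_2}]}$, where $\bool^{k_2}$ abbreviates the $k_2$-fold product of $\bool$. Since allocating a fresh reference binds the corresponding variable to $\cell$, the tuple $\lpair{\cell_{T1}, \dots, \cell_{T1}, \cell_{\bool^{k_2}}}$ is exactly the effect, in the interpretation, of the allocation block introducing $k_1$ cells of type $\var[\unit]$ and one cell of type $\var[\bool^{k_2}]$.

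For $\tau$, I would invoke the full completeness of $\Inn$ recalled above. As $\tau$ is innocent with finite view function it is compact, hence the denotation of an $\eta$-long $\beta$-normal term. Because its domain $(\var[T1])^{k_1} \times \var[\bool^{k_2}]$ and its codomain $T\intr{A}$ are interpretations of $\Lsums$ types, this term reads back as a well-typed $\Lsums$ term
\[
x_1 : \var[\unit], \dots, x_{k_1} : \var[\unit], y : \var[\bool^{k_2}] \vdash N : A
\]
with $\intr{N} = \tau$. Setting
\[
M = \new~x_1 : \unit, \dots, x_{k_1} : \unit, y : \bool^{k_2}~\mathtt{in}~N,
\]
compositionality of the interpretation together with the identification of the cells yields $\intr{M} = \lpair{\cell_{T1}, \dots, \cell_{\bool^{k_2}}}; \tau = \sigma$. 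The degenerate case in which $\sigma$ produces no $P$-answer is already subsumed by the factorization, and is in any event witnessed by the definable divergent inhabitant $\bot_A$, so the empty type causes no difficulty.

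The step I expect to be the main obstacle is the read-back of $\tau$ into $\Lsums$ syntax. The quoted full completeness concerns the simply-typed $\lambda$-calculus, whereas the domain and codomain of $\tau$ mention the lifted-sum monad $T$, products, sums and the empty type; one must check that the normal form produced by innocent definability actually lives in the call-by-value, monadic term syntax of $\Lsums$ and is typeable there. Once this functional definability is secured, the identification of $\cell$ with $\new$ and the final appeal to compositionality are routine.
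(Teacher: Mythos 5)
Your proposal follows essentially the same route as the paper: factorize $\sigma$ through the cells via the preceding proposition, realize the innocent residue $\tau$ by definability, and wrap the result in a $\new$ block. Two small points of divergence are worth noting. First, the cells $\cell_{T1}$ store values whose \emph{value} arena is $T1$, i.e.\ thunks, so the allocated references must have type $\var[\unit \to \unit]$, not $\var[\unit]$ (whose interpretation $(\intr{\unit} \tto T\unit) \times T\intr{\unit}$ is a different, degenerate arena); as written your final term would not match the domain of $\tau$. Second, the obstacle you flag at the end --- reading $\tau$ back into $\Lsums$ syntax --- is exactly where the paper does a little work: it unfolds $\intr{\var[A]}$ as the interpretation of the purely functional type $(A \to \unit) \times (\unit \to A)$, applies the innocent definability result for the families (call-by-value) model of Abramsky--McCusker to obtain a term $N$ over those functional types, and only then converts $N$ into a term $N'$ over $\var$-typed variables, which is where bad variables (the identification of $\var[A]$ with the pair of its methods, available because of $\mkvar$) are used. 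With those two repairs your argument coincides with the paper's proof.
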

\begin{proof}
By the above factorization result, we have two natural numbers $k_1, k_2$ and an innocent strategy 
$\tau: (\var[T1])^{k_1} \times \var[\bool^{k_2}] \tto TA$ with finite view function such that 
$\sigma = \lpair{\cell_{T1}, \dots, \cell_{T1}, \cell_{\bool^{k_2}}}; \tau$.
However, recall that $\var[A]$ is just a shortcut for $(A\tto T\unit)\times TA$. Therefore we can
apply the definability result for innocent strategies of \cite{abramsky-mccusker:families} (the generalization of this result in the
presence of the empty type is straightforward), which gives a term:
\[
x_1, \dots, x_{k_1}: ((\unit \to \unit) \to \unit)\times (\unit \to \unit \to \unit), y:((\bool^{k_2} \to \unit)\times (\unit \to \bool^{k_2})) \vdash N : A
\]
With the use of bad variables, this gives $x_1, \dots, x_{k_1}: \var[\unit \to \unit], y: \var[\bool^{k_2}] \vdash N':A$ such that
$\intr{N'} = \tau_2$. Putting this together, we get $M = \new~x_1, \dots, x_{k_1}: \unit \to \unit, \new~y: \bool^{k_2}~\in~N'$ with
$\vdash M : A$, such that $\intr{M} = \sigma$.
\end{proof}

Given this we can now build the fully abstract model in a standard way, as follows. If $A$ is an arena, then the \textbf{complete} plays
on $A$ are the plays $s\in \leg{A}$ such that all questions in $s$ have been answered. We write $\sigma \obseq \tau$ the fact that
$\sigma, \tau: A$ have the same complete plays. This equivalence extends to $\Fam(\Gam)_T$: if $A$ and $B$ are families and $\sigma, \tau :A \to TB$
are morphisms $\Fam(\Gam)_T$, we write $\sigma \obseq \tau$ iff for every component $i$ of $A$, $\sigma_i \obseq \tau_i$. It is straightforward to check
that all the morphism constructions in $\Fam(\Gam)$ and $\Fam(\Gam)_T$ preserve $\obseq$, so $\Fam(\Gam)_T/\obseq$ is also a model of $\Lsums$. This does
not change the interpretation, so we still have soundness and adequacy. Putting all of this together:

\begin{thm}[Full abstraction]
The model is fully abstract, \emph{i.e.} for all $M$ and $N$ of the same type, we have $M \obseq N \iff \intr{M} \obseq \intr{N}$.
\end{thm}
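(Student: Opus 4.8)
The plan is to prove the standard equivalence between observational equivalence of terms and the equality of their complete-play denotations, using the three ingredients already established: soundness, adequacy, and definability. The statement is the classical full abstraction theorem, and the proof factors into two implications, of which the right-to-left direction (\emph{soundness of the model}, in the sense that denotational equality implies observational equivalence) is the easier one and the left-to-right direction (\emph{completeness}) is where the real content lies.

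For the direction $\intr{M}\obseq\intr{N}\Rightarrow M\obseq N$, I would argue contrapositively. Suppose $M\not\obseq N$, so there is a context $C[-]$ with $C[M]\eval$ but $C[N]\Uparrow$ (or vice versa), where $C[M],C[N]$ are closed and location-free. By compositionality of the interpretation, $\intr{C[M]}$ and $\intr{C[N]}$ are obtained by applying the same denotational context to $\intr{M}$ and $\intr{N}$; since all morphism constructions preserve $\obseq$, from $\intr{M}\obseq\intr{N}$ we would get $\intr{C[M]}\obseq\intr{C[N]}$, so in particular they have the same complete plays. Now $C[M]\eval$ gives, by soundness, that $\intr{C[M]}$ has a nonempty complete play (equivalently $\intr{C[M]}\neq\bot$), whereas $C[N]\Uparrow$ together with adequacy (in contrapositive form) forces $\intr{C[N]}=\bot$, contradicting the fact that they share complete plays. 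Hence $\intr{M}\obseq\intr{N}$ does imply $M\obseq N$.

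For the harder converse $M\obseq N\Rightarrow\intr{M}\obseq\intr{N}$, I would use definability to manufacture a separating context from any distinguishing play. Assume $\intr{M}\not\obseq\intr{N}$; then there is a complete play present in one denotation but not the other. The goal is to build a term context whose denotation, precomposed (or applied) with $\intr{M}$, converges while with $\intr{N}$ it diverges. Concretely, a distinguishing complete play determines a finite strategy that tests for exactly that behaviour; by the definability proposition this finite strategy is itself realized by a term of $\Lsums$, and assembling it into a context of ground type $\unit$ yields a closed term that, via soundness and adequacy, evaluates precisely when the corresponding play is present. Applying this context to $M$ and to $N$ therefore separates them operationally, contradicting $M\obseq N$.

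The main obstacle is the completeness direction, and within it the precise extraction of a separating \emph{term} context from a distinguishing \emph{play}: one must pass from a single complete play that lies in $\intr{M}$ but not $\intr{N}$ to a finite strategy of the correct type against which the two denotations are observably different, and then invoke definability to realize that strategy syntactically and package it as a ground-type context. This requires care because the model is call-by-value and lives in $\Fam(\Gam)_T$, so the test strategy must interact correctly with the monadic/lifted structure and respect the family indexing; the finiteness needed to apply definability comes from taking a suitable finite approximation of the tester determined by the chosen complete play. Once the separating context is in hand, soundness and adequacy close the argument exactly as in the soundness direction, and the two implications together establish $M\obseq N\iff\intr{M}\obseq\intr{N}$.
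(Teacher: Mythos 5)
Your proposal is correct and follows essentially the same route as the paper: the easy direction via congruence of $\obseq$ on denotations plus soundness and adequacy, and the hard direction by turning a distinguishing complete play $s$ into the finite test strategy $qsa$ on $\intr{A}\tto T\unit$, realizing it syntactically by definability, and separating $M$ and $N$ with the resulting ground-type context. The only cosmetic difference is that you phrase both directions contrapositively where the paper argues the easy one directly.
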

\begin{proof}
$\Rightarrow$. Suppose $M \obseq N$. We can assume without loss of generality that $M$ and $N$ are closed, since $\obseq$ is a congruence (hence
stable under $\lambda$-abstraction), so $\vdash M, N : A$. Suppose $\intr{M}$ and $\intr{N}$ do not have the same complete plays, \emph{e.g.} 
$s\in \intr{M}$ but $s\not \in \intr{N}$. Then, $qsa\in \leg{\intr{A}\tto T\unit}$ where $q$ and $a$ are
respectively the question and answer in $T\unit$. Viewing $\alpha = qsa$ as a strategy, we have by definability a term 
$x: A \vdash M_\alpha : \unit$, such that $\intr{M_\alpha} = \alpha$. By construction, we have $\intr{M}; \intr{M_\alpha} \neq \bot$ and
$\intr{N}; \intr{M_\alpha} = \bot$, but $\intr{M}; \intr{M_\alpha} = \intr{(\lambda x. M_\alpha) M}$ (and similarly for $N$), thus
by adequacy $(\lambda x. M_\alpha) M \eval$ and $(\lambda x. M_\alpha) N \Uparrow$, which is absurd. Therefore $\intr{M}\obseq \intr{N}$.\\
$\Leftarrow$. Suppose $\intr{M} \obseq \intr{N}$, and take a context $C$ such that $C[M]$ is closed and $C[M] \eval$. By soundness,
$\intr{C[M]} \neq \bot$. Since $\intr{M}$ and $\intr{N}$ have the same complete plays, by immediate induction on $C$ we have $\intr{C[N]} \neq \bot$
as well. By adequacy, we have $C[N] \eval$ and $M \obseq N$.
\end{proof}

\section{Isomorphisms in $\Gam$}
\label{section_isomorphisms}
We are now going to extend Laurent's tools \cite{DBLP:journals/mscs/Laurent05} to characterize isomorphisms of types for $\Lsums$. We will first
reformulate Laurent's work in the visible and innocent cases, then extend it to characterize isomorphisms in $\Gam$.

\subsection{Isomorphisms and zig-zag strategies}

We first recall Laurent's notion of \emph{zig-zag} play.

\begin{defi}
Let $s\in \leg{A\tto B}$ be a legal play. It is \textbf{zig-zag} if 
\begin{enumerate}[(1)]
\item Each $P$-move following an $O$-move in $A$ (resp. in $B$) is in $B$ (resp. in $A$),
\item A $P$-move in $A$ immediately follows an initial $O$-move in $B$ if and only if it is justified by it, 
\item The (not necessarily legal) sequences $s\restrict A$ and $s\restrict B$ have the same pointers, \emph{i.e.} for all indices $i, j$ with
$(s\restrict A)_i$ and $(s\restrict A)_j$ defined, $(s\restrict A)_i$ points to $(s\restrict A)_j$ iff $(s\restrict B)_i$ points to
$(s\restrict B)_j$.
\end{enumerate}
If $s$ only satisfies the first two conditions, then it is \textbf{pre-zig-zag}.
\end{defi}

By extension, we will say that a strategy $\sigma$ is \textbf{pre-zig-zag} (resp. \textbf{zig-zag}) if all its plays are so. 
The core of Laurent's theorem is then that all isomorphisms in $\Vis$ are zig-zag strategies. His proof does rely on visibility, however
it only gets involved to prove that the condition $3$ of zig-zag plays
is satisfied. The first half of his argument does not use visibility and actually proves that all isomorphisms in $\Gam$ are pre-zig-zag.
Here, being mainly interested in $\Gam$, we make this explicit. We need first the following lemma.

\begin{lem}[Dual pre-zig-zag play]
Let $s\in \leg{A\tto B}$ be a pre-zig-zag play, then there exists an unique pre-zig-zag $\overline{s}\in \leg{B\tto A}$ such that $\overline{s}\restrict A = s\restrict A$
and $\overline{s}\restrict B = s\restrict B$.
\end{lem}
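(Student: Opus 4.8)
The plan is to build $\overline{s}$ by a purely local rearrangement of $s$ that swaps the roles of $A$ and $B$, and to extract existence and uniqueness simultaneously from one forcing argument. The first thing I would record is the structural skeleton of a pre-zig-zag play. Since $s$ is alternating and its first move is an initial (hence $O$-)move, $s$ factors as a concatenation of consecutive pairs $o_1\pi_1\,o_2\pi_2\cdots$, where each $o_j$ is an $O$-move and $\pi_j$ is the $P$-move immediately following it. Condition (1) of pre-zig-zag says exactly that the two moves of each such pair lie in opposite components, so each pair carries precisely one $A$-move and one $B$-move. Processing the pairs in order, the $j$-th pair therefore consumes the $j$-th move of $s\restrict A$ together with the $j$-th move of $s\restrict B$; in particular $|s\restrict A| = |s\restrict B|$.

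The second ingredient is the evident polarity flip between the two arenas: since $\lambda_{A\tto B} = [\overline{\lambda_A},\lambda_B]$ whereas $\lambda_{B\tto A} = [\overline{\lambda_B},\lambda_A]$, reading one and the same move in $B\tto A$ instead of $A\tto B$ reverses its $O/P$ label (both $A$-moves and $B$-moves flip). Thus $o_j$, an $O$-move of $A\tto B$, becomes a $P$-move of $B\tto A$, and $\pi_j$ becomes an $O$-move. I would then define $\overline{s}$ to have the same underlying moves as $s$, but with its $j$-th pair set to $(\pi_j, o_j)$, i.e. I swap the two moves inside each pair. Pointers internal to $A$ and internal to $B$ are inherited unchanged; the only pointers altered are the linking ones, which by condition (2) run in $s$ from an initial $A$-move $\pi_j$ to the initial $B$-move $o_j$ justifying it, and which I reverse (in $\overline{s}$, $o_j$ is justified by $\pi_j$). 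Because each pair still carries the same $A$-move and the same $B$-move in the same relative order within its component, this immediately gives $\overline{s}\restrict A = s\restrict A$ and $\overline{s}\restrict B = s\restrict B$ as pointed sequences.

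Next I would verify that $\overline{s}$ is a legal pre-zig-zag play of $B\tto A$, and that it is the only one. Alternation is immediate from the polarity flip, since each swapped pair $(\pi_j, o_j)$ is now an $(O,P)$ pair and $\overline{s}$ opens with the $O$-move $\pi_1$, which is an initial $A$-move and hence an initial move of $B\tto A$; conditions (1) and (2) for $\overline{s}$ transport from those for $s$ by the role symmetry, the reversed linking pointers realising the "if and only if" of (2) on the $B\tto A$ side. For uniqueness I would take any pre-zig-zag $t\in\leg{B\tto A}$ with $t\restrict A = s\restrict A$ and $t\restrict B = s\restrict B$ and argue by induction on pairs that $t=\overline{s}$: if the first $j-1$ pairs of $t$ agree with $\overline{s}$, then $t$ has consumed the first $j-1$ moves of each projection, so its $j$-th $O$-move must be the $j$-th $A$-move or the $j$-th $B$-move, i.e. one of $o_j,\pi_j$; by the polarity flip exactly one of these, namely $\pi_j$, is an $O$-move of $B\tto A$, so $O$ is forced to play $\pi_j$, condition (1) forces the next $P$-move into the opposite component where the only available move is $o_j$, and pointer-consistency with the fixed projections forces its justifier.

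I expect the main obstacle to be the bookkeeping of justification pointers and of well-bracketing rather than the combinatorics of the interleaving. Concretely, one must check that reversing the linking pointers keeps $\overline{s}$ well-bracketed: well-bracketedness is a global property of the interleaving (the pending question can shift when moves are reordered), so one has to confirm that each answer in $\overline{s}$ is still justified by the then-pending question, using the exact matching $\overline{s}\restrict A = s\restrict A$ and $\overline{s}\restrict B = s\restrict B$ to carry the pending-question structure of $s$ across the rearrangement. For definiteness the construction is phrased for $P$-ending $s$, the case relevant to strategies; it is the complete pairing into opposite-component $(O,P)$ pairs that makes both existence and uniqueness fall out at once.
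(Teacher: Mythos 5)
Your construction --- swapping each consecutive $(O,P)$ pair, keeping all pointers except the linking ones (an initial $A$-move justified by an initial $B$-move), which get reversed, and deriving uniqueness from the fact that the two projections together with alternation and the pre-zig-zag conditions force the interleaving --- is exactly the paper's proof ($\overline{\epsilon}=\epsilon$, $\overline{sab}=\overline{s}ba$, same pointer treatment, same forcing argument). The proposal is correct and takes essentially the same route; your explicit attention to well-bracketing is a detail the paper leaves implicit.
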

\begin{proof}
We define $\overline{s}$ by induction on $s$; $\overline{\epsilon} = \epsilon$, and $\overline{sab} = \overline{s}ba$. We keep the same pointers, except for the case where a move
$a$ in $A$ was justified by an initial move $b$ in $B$. Then because of the pre-zig-zag condition on $s$, $a$ is necessarily an initial move in $A$ and 
is set as the new justifier of $b$ in $\overline{s}$. There is no other possible $\overline{s}$, since the restrictions on $A$ and $B$ are constrained by the hypotheses
and their interleaving is forced by the alternation and the pre-zig-zag conditions on $\overline{s}$.
\end{proof}

\begin{lem}
If $\sigma: A\tto B$, $\tau: B\tto A$ form an isomorphism in $\Gam$, then they are pre-zig-zag and
for all $s$, $s\in \sigma \Leftrightarrow \overline{s}\in \tau$.
\end{lem}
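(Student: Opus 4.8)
The plan is to unfold the isomorphism into the two copycat equations $\sigma;\tau = \id_A$ and $\tau;\sigma = \id_B$, and to read off everything from the rigidity of copycat inside the interaction sequences. Recall that an interaction $u$ witnessing $\sigma;\tau$ lives on three copies $A^{(1)}, B, A^{(3)}$, with $u\restrict A^{(1)}, B \in \sigma$, $u\restrict B, A^{(3)} \in \tau$, and $u\restrict A^{(1)}, A^{(3)} \in \id_A$; since the external projection is copycat, at every $P$-ending position the two external restrictions coincide, $u\restrict A^{(1)} = u\restrict A^{(3)}$. The heart of the argument is a single structural claim, namely that every such $u$ is built out of \emph{tunnels}: each external move is relayed to its mirror in the opposite $A$-copy through exactly one internal $B$-move, the interaction never lingering in one external component.

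First I would prove the tunnel claim by induction on $|u|$. The forcing mechanism is this: when Opponent plays an initial move $a_3$ in $A^{(3)}$, copycat prescribes that the next external move be its copy $a_1$ in $A^{(1)}$; but $a_1$ belongs to $\sigma$'s interface, which $\tau$ cannot touch, and $\tau$ may not answer $a_3$ inside $A^{(3)}$ either, since copycat demands the response to surface in $A^{(1)}$. Hence $\tau$ is forced to answer $a_3$ by an internal move $b$ in $B$, which $\sigma$ receives as an Opponent move and answers by $a_1$; determinism pins $b$ down uniquely. The same analysis at each step shows that the interaction strictly zig-zags $A^{(3)}\to B\to A^{(1)}$ and back. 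Projecting a tunnel onto $\sigma$'s interface $A^{(1)}, B$ then reads off exactly condition (1) of pre-zig-zag for $\sigma$, and projecting onto $\tau$'s interface $B, A^{(3)}$ does the same for $\tau$. Condition (2) comes from the enabling relation of $A^{(1)} \tto B$: the relayed $B$-move sitting immediately before a $P$-move in $A^{(1)}$ is initial precisely when that $A^{(1)}$-move is itself initial, in which case the enabling relation forces it to be justified by $b$. Since $\sigma;\tau = \id_A$, any given $s \in \sigma$ arises as the $A^{(1)}, B$-projection of such an interaction (obtained by letting copycat drive $\tau$ and $\sigma$), so the tunnel analysis shows every play of $\sigma$ is pre-zig-zag; symmetrically for $\tau$ using $\tau;\sigma = \id_B$.

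For the bijection, fix $s \in \sigma$. The tunnel structure determines a unique interaction $u$ with $u\restrict A^{(1)}, B = s$, and I set $t = u\restrict B, A^{(3)} \in \tau$. Then $t\restrict B = u\restrict B = s\restrict B$, while copycat gives $t\restrict A^{(3)} = u\restrict A^{(3)} = u\restrict A^{(1)} = s\restrict A$. Thus $t$ is a pre-zig-zag play of $B\tto A$ with the same $A$- and $B$-restrictions as $s$, so by the uniqueness part of the Dual pre-zig-zag Lemma we get $t = \overline s$, whence $s\in\sigma \Rightarrow \overline s \in \tau$. The converse direction is symmetric: running the same argument on $\tau;\sigma = \id_B$ gives $\overline s \in \tau \Rightarrow s = \overline{\overline s} \in \sigma$, completing the equivalence.

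The main obstacle is the tunnel claim itself. It requires careful bookkeeping of the Opponent/Player polarities of $A^{(1)}$-, $B$- and $A^{(3)}$-moves in the composite arena, together with a clean argument that the internal relay is both \emph{forced} (copycat's next external move always belongs to the other strategy's interface) and \emph{unique} (by determinism of $\sigma$ and $\tau$). The most delicate point inside it is the ``only if'' half of condition (2): that a $P$-move in $A$ following an initial $B$-move is genuinely \emph{justified} by it rather than merely positioned after it. This is exactly where the precise form of the enabling relation — and not visibility — does the work, which is why pre-zig-zag, unlike full zig-zag, goes through in $\Gam$.
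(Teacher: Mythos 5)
Your overall strategy---zipping the two strategies into interaction sequences and reading pre-zig-zag off the rigidity of the external copycat projection---is essentially the paper's, but your central ``tunnel claim'' has a real gap. The forcing argument you give only establishes half of the relay: when Opponent plays $a_3$ in $A^{(3)}$, the equation $\sigma;\tau=\id_A$ does force $\tau$ to respond inside $B$ (an answer in $A^{(3)}$ would put two consecutive external moves on the same side of the copycat). But nothing in that same equation forces $\sigma$ to answer the internal move $b$ by surfacing in $A^{(1)}$: $\sigma$ could reply with another move $b_2\in M_B$, $\tau$ would then again be confined to $B$, and after any odd number of internal moves $\sigma$ could surface with the correct copycat move $a_1$ --- the external projection of such an interaction is still exactly a play of $\id_A$, alternating and well-bracketed. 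So ``exactly one internal $B$-move'' is not a consequence of $\sigma;\tau=\id_A$ plus determinism; ruling out the longer chatter requires the \emph{other} equation $\tau;\sigma=\id_B$, deployed on the dual interaction in $I(B_1,A,B_2)$ where internal and external are exchanged (there the putative reply $b\,b_2$ of $\sigma$ becomes two consecutive external moves in $B_2$ and visibly breaks $\id_B$). This is precisely what the paper's ``without loss of generality'' case split accomplishes, and your induction needs to interleave both interactions at every step of the tunnel, not merely invoke symmetry at the end.

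The second gap is in condition (2), which is an ``if and only if'' of which you only argue one direction. The direction you do address (a $P$-move in $A$ immediately following an initial $b\in M_B$ must point to $b$) is not given by the enabling relation alone --- enabling only says the justifier is \emph{some} initial move of $B$, possibly an earlier one --- but by single-threadedness of $\sigma$, since the current thread right after an initial $b$ consists of $b$ alone. The converse direction, that a $P$-move in $A$ justified by an initial move of $B$ must occur \emph{immediately} after it, is not addressed at all, and it is the delicate one: the paper refutes a late such move by building the dual interaction in $I(B_1,A,B_2)$, transporting the pointer of the corresponding move through $\id_B$, and contradicting single-threadedness of $\tau$. Without this direction the dual play $\overline{s}$ need not even be well defined on all of $\sigma$, so the equivalence $s\in\sigma\Leftrightarrow\overline{s}\in\tau$ in your last paragraph does not yet follow.
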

\begin{proof}
Consider an isomorphism $\sigma: A\tto B$, $\tau: B\tto A$ in $\Gam$. We will prove by induction on even $k\in \mathbb{N}$ that all plays
of $\sigma, \tau$ whose length is less than $k$ are pre-zig-zag, and that moreover 
$\{\overline{s}\mid s \in \sigma \wedge |s|\leq k\} = \{s\in \tau\mid |s|\leq k\}$.

If $k=0$, this is trivial.
Otherwise, suppose this is true up to $k\in \mathbb{N}$, and consider $sab\in \sigma$ of length $k+2$; let us first prove condition (1).
Without loss of generality, suppose $a\in M_A$. Since $s\restrict B = \overline{s}\restrict B$, by
a straightforward zipping argument we can build an interaction $u\in I(A_1, B, A_2)$
such that $u\restrict A_1, B = s$ and $u\restrict B, A_2 = \overline{s}$, moreover since $\sigma, \tau$ form an isomorphism we
must have $u\restrict A_1, A_2 \in \id_A$. Now, we necessarily have $b\in M_B$, otherwise $u$ could be extended to $uab\in \sigma||\tau$
with $uab\restrict A_1, A_2 = (u\restrict A_1, A_2)ab$ which is not a play of the identity, contradiction. Hence $sab$ satisfies condition $1$ of
pre-zig-zag plays. 

To see why it satisfies condition $2$, take $sba\in \sigma$ with $b$ in $B$ and $a$ in $A$. If $b$ is initial in $B$, then $a$ necessarily points to it
since $\sigma$ is single-threaded. Reciprocally, suppose $a$ points to an initial move in $B$ earlier than $b$. Then we have $\overline{s}\in \tau$, and by
the same zipping argument as above we have an unique $u\in I(B_1, A, B_2)$ such that $u\restrict B_1, A = \overline{s}$ and $u\restrict A, B_2 = s$. Since
$\sigma, \tau$ form an isomorphism we also have $u\restrict B_1, B_2 \in \id_B$. Let us now extend $u$ to $u' = u b_2 a b_1$ in the unique way such that 
$u'\restrict A, B_2 = sba$ and $u'\restrict B_1, A \in \tau$. Note that we are sure that $b_1$ is a move on $B_1$ since $\overline{s} a b_1$ is a play
of $\tau$ of length $k+2$ and we already know that these satisfy the condition $1$ of pre-zig-zag plays. But we also have $u'\restrict B_1, B_2 \in \id_B$,
hence $b_2$ points in $\overline{s}$ as $b_1$ points in $s$. This means that we have $\overline{s} a b \in \tau$, such that $a$ is initial and $b$ points in
$\overline{s}$, impossible since $\tau$ is single-threaded. Hence $sba$ satisfies condition $2$ of pre-zig-zag plays.

We have proved that $sab$ is pre-zig-zag, so $\overline{sab}$ is defined. By induction hypothesis $\overline{s}\in \tau$ and the same reasoning as above
shows that it extends to $\overline{sab}\in \tau$. The argument is symmetric, 
hence $\{\overline{s}\mid s \in \sigma \wedge |s|\leq k+2\} = \{s\in \tau\mid |s|\leq k+2\}$.
\end{proof}

For the sake of completeness, let us include Laurent's argument which proves that isomorphisms in $\Vis$ are zig-zag.

\begin{lem}
If $\sigma: A\tto B$, $\tau: B\tto A$ form an isomorphism in $\Vis$, then $\sigma$ and $\tau$ are zig-zag strategies.
\end{lem}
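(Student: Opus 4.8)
The plan is to reduce the statement to condition~(3) of zig-zag plays and to extract it from visibility. Since $\Vis$ is a subcategory of $\Gam$, the preceding lemma applies verbatim: $\sigma$ and $\tau$ are already pre-zig-zag, and $s\in\sigma \Leftrightarrow \overline{s}\in\tau$ with $\overline{s}\restrict A = s\restrict A$ and $\overline{s}\restrict B = s\restrict B$. So conditions~(1) and~(2) come for free, and only condition~(3) remains. First I would record a structural consequence of condition~(1): in any pre-zig-zag play each consecutive $O/P$ pair is split across the two components, so $s\restrict A$ and $s\restrict B$ have equal length and come with a canonical position-by-position correspondence --- the $k$-th moves of the two restrictions are the two moves of the $k$-th $O/P$ pair. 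Condition~(3) is exactly the assertion that this correspondence respects justification pointers.

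I would prove~(3) by induction on the length of $s$, extending a zig-zag play by one $O/P$ pair $ab$. A short polarity computation shows that the $O$-move and the $P$-move of the new pair lie in opposite components; up to exchanging the roles of $\sigma$ and $\tau$ through the dual play, take the $O$-move $a\in M_A$ and the $P$-move $b\in M_B$. If this pair consists of an initial move together with a move justified across the arrow by it, both become initial (pointer-free) in their restrictions and~(3) is immediate; otherwise $a$ and $b$ acquire genuine intra-component justifiers $j_a\in M_A$ and $j_b\in M_B$, and everything reduces to showing that $j_a$ and $j_b$ occupy the same pair-position.

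Here is where visibility enters. The move $b$ is a $P$-move of $\sigma$, so by visibility $j_b$ lies in $\pview{sa}$; dually, in $\overline{sab}=\overline{s}ba\in\tau$ the move $a$ is a $P$-move of $\tau$, so $j_a$ lies in $\pview{\overline{s}b}$. Applying the $P$-view rules, $\pview{sab}=\pview{sa}\,b$ ends in $\ldots j_a\,a\,b$ (the justifier $j_a$ of the $O$-move $a$ is exposed as the antepenultimate move of the view), and symmetrically $\pview{\overline{s}ba}=\pview{\overline{s}b}\,a$ ends in $\ldots j_b\,b\,a$. The clean step is then: once one knows that these two $P$-views are mirror images of one another --- dual pre-zig-zag plays of the same length with corresponding positions --- their antepenultimate slots match, so $j_a$ and $j_b$ sit at the same pair-position, closing the induction.

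The hard part, and the sole genuine use of visibility, is precisely the mirror-correspondence between the two $P$-views: the shape of a $P$-view is governed by the back-jumps along the justifiers of $O$-moves, so proving that the two views jump in parallel is interlocked with condition~(3) itself and must be carried out by the same simultaneous induction. It is exactly at this point that visibility is indispensable --- a non-visible strategy could let a $P$-move point outside its $P$-view, breaking the alignment of the back-jumps and hence the correspondence. Answer-moves require no separate treatment, since well-bracketing fixes their justifier as the pending question, which is transported by the very same correspondence.
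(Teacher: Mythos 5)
Your setup is right and you have correctly located where visibility enters: condition~(3) is the only thing to prove, and the two facts that matter are that visibility of $\sigma$ forces the justifier $j_b$ of $b$ into $\pview{sa}$, while visibility of $\tau$ applied to the dual play $\overline{s}ba$ forces the justifier $j_a$ of $a$ into $\pview{\overline{s}b}$. This is exactly the mechanism of the paper's proof. But your closing step is not closed. You route the conclusion through a ``mirror correspondence'' between $\pview{sab}$ and $\pview{\overline{s}ba}$ and then read off that their antepenultimate moves occupy the same pair-position; you then concede that establishing this correspondence ``is interlocked with condition~(3) itself'' and defer it to an unexecuted simultaneous induction. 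As formulated, that induction is circular: the two views only decompose into corresponding pairs once you already know that $j_a$ and $j_b$ sit at the same pair-position, because each view is obtained by jumping back to the respective justifier and then recursing on the prefix up to \emph{that} pair. So the step you label ``the hard part'' is precisely the statement to be proved, and nothing in the proposal discharges it.

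The repair is that a much weaker fact than the mirror correspondence suffices, and it needs no induction at all. Suppose $a$ points to $(s\restrict A)_i$ and $b$ points to $(s\restrict B)_j$. Since $s$ is pre-zig-zag, the justifier of the $O$-move $a$ is the $P$-move of pair $i$, so $\pview{sa}$ is a subsequence of the prefix of $s$ ending at pair $i$, followed by $a$; hence $j_b\in\pview{sa}$ already gives $j\leq i$ (equivalently: $j>i$ breaks visibility of $\sigma$). Dually, $\pview{\overline{s}b}$ is a subsequence of the prefix of $\overline{s}$ ending at pair $j$ followed by $b$, so $j_a\in\pview{\overline{s}b}$ gives $i\leq j$ (equivalently: $j<i$ breaks visibility of $\tau$). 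Hence $i=j$. This is the paper's argument, and it replaces your interlocked induction with two one-line inequalities.
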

\begin{proof}
We already know that $\sigma$ and $\tau$ are pre-zig-zag strategies. We show by induction on $n\in \mathbb{N}$ that for all $s\in \sigma$, if $|s|\leq n$
then $s\restrict A$ and $s\restrict B$ have the same pointers. Take now $s\in \sigma$, and $sab \in \sigma$, suppose w.l.o.g. that $a\in M_A$. Suppose
$a$ points to $(s\restrict A)_i$, then $b$ points to $(s\restrict B)_i$. Indeed, it cannot point to $(s\restrict B)_j$ with $j>i$ since that would
break visibility for $\sigma$. But if it points to $(s\restrict B)_j$ with $j<i$ we use the same reasoning on the dual pre-zig-zag play $\overline{sab}$
and get a contradiction with the fact that $\tau$ is visible.
\end{proof}

Let us denote by $\Gam_i$, $\Vis_i$ and $\Inn_i$ the groupoids of arenas and isomorphisms on the respective categories.
In the next sections, we use these facts to give more combinatorial representations of $\Gam_i$, $\Vis_i$ and $\Inn_i$.

\subsection{Notions of game morphisms}

Laurent's isomorphism theorem works by relating isomorphisms in $\Gam$ with isomorphisms in a simpler category which has arenas as objects and
\emph{forest morphisms}, \emph{i.e.} maps on moves that preserve initiality
and enabling. Relaxing the visibility conditions requires us to also consider relaxed notions of game morphisms, that we present here.

In what follows we will make use of the \textbf{prefix functions} $\ip$ and $\jp$ on justified sequences, defined by $\ip(\epsilon) = \epsilon$
and $\ip(sa) = s$, and $\jp(si) = \epsilon$ if $i$ does not have a pointer, $\jp(s_1 a s_2 b) = s_1 a$ if $b$ points to $a$.

\begin{defi}
Let $A$ be an arena. A \textbf{path} on $A$ is a play $s\in \leg{A}$ such that except for the initial move, every move in $s$ points to the previous move.
Formally, for all $s'ab\sqsubseteq s$, $a$ justifies $b$ in $s$. Let $\paths{A}$ denote the set of paths on $A$. A \textbf{path morphism} from $A$ to $B$
is a function $\phi:\paths{A} \to \paths{B}$ such that $\ip\circ \phi = \phi\circ \ip$ and which preserves $Q/A$ labeling: for all $sa\in \paths{A}$ with
$\phi(sa) = \phi(s) b$, we have $\lambda_A^{QA}(a) = \lambda_B^{QA}(b)$. There is a category $\Path$ of arenas and path morphisms.
\end{defi}

This category $\Path$ comes with its own notion of isomorphisms of arenas. Note that whenever $A$ is a forest, this is exactly Laurent's notion
of forest isomorphism. We now introduce two weaker notions of morphisms for arenas. In what follows, let us call a legal play on $A$ with only
one initial move a \textbf{thread} on $A$, and denote the set of threads on $A$ by $\threads{A}$. Likewise, 
let us call a pre-legal play with one initial move a pre-legal thread and let us denote these by $\prethreads{A}$.

\begin{defi}
Let $A$, $B$ be arenas, and let $\phi : \prethreads{A} \to \prethreads{B}$
We say that $\phi$ is a \textbf{sequential morphism}
from $A$ to $B$ if $\ip\circ \phi = \phi\circ \ip$, and if it preserves $Q/A$ labeling, \emph{i.e.} for all $\phi(sa) = \phi(s) b$ we have $\lambda_A^{QA}(a) = \lambda_B^{QA}(b)$.
We say that it is a \textbf{justified morphism} if, additionally, $\jp\circ \phi = \phi \circ \jp$.
There are two categories $\Seq$ of arenas and sequential morphisms and $\Jus$ of arenas and
justified morphisms.
\end{defi}

The condition on sequential morphisms amounts to the fact that they preserve play extension, \emph{i.e.} for all pre-legal threads $sa\in \prethreads{A}$,
$\phi(sa)$ must be an immediate extension of $\phi(s)$. In other words, a sequential morphism preserves the forest structure of the set of
pre-legal threads given by the prefix ordering. However it does not have to preserve pointers : it could for instance send a play
$\xymatrix@C=5pt{\circ\ar@{-}[r]&\bullet \ar@{-}[r]&\circ \ar@{-}[r]&\bullet}$ to $\xymatrix@C=5pt{\circ\ar@{-}[r]&\bullet \ar@{-}[r]&\circ
&\bullet\ar@{-}@/_/[lll]}$, where occurrences of $\circ$ and $\bullet$ are respectively Opponent and Player moves. These weak forms
of morphisms will play an important role in the subsequent development as they have a close relationship to isomorphisms in $\Gam$. Justified
morphisms are those sequential morphisms which additionally preserve pointers: those will appear to be in relationship with isomorphisms
in $\Vis$.

As above, we will denote by $\Seq_i$, $\Jus_i$ and $\Path_i$ the groupoids of invertible maps in $\Seq$, $\Jus$ and $\Path$. These groupoids
will soon appear to be identical to $\Gam_i$, $\Vis_i$ and $\Inn_i$. To prove this, we need the following lemma.

\begin{lem}
Let $s\in \prethreads{A}$, and $\sigma:A\tto B$ an isomorphism in $\Gam$. There is then an unique play $s'\in \sigma$ such that $s'\restrict A = s$.
\end{lem}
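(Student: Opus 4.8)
The plan is to reconstruct $s'$ from $s$ one move at a time, using the pre-zig-zag structure of $\sigma$ established above together with determinism and totality of the isomorphism. First I would record the shape of pre-zig-zag plays: by condition (1), every $P$-move lies in the component opposite to the immediately preceding $O$-move, so each pre-zig-zag play factors into consecutive blocks $mn$ consisting of an $O$-move $m$ followed by a $P$-move $n$ in opposite components. Every such block contributes exactly one move to $s'\restrict A$ and one to $s'\restrict B$, so $|s'\restrict A| = |s'\restrict B|$ and the $k$-th move of $s'\restrict A$ sits in the $k$-th block. This lets me induct on the number of moves of $s$, maintaining the invariant that $s'_{\leq 2k}\in\sigma$ with $s'_{\leq 2k}\restrict A = s_{\leq k}$, the base case $s=\epsilon$ being $s'=\epsilon$.

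For the inductive step I would split on the polarity in $A$ of the next move $s_{k+1}$. If $s_{k+1}$ is a $P$-move of $A$, then it is an $O$-move of $A\tto B$, so Opponent may play it directly after $s'_{\leq 2k}$; since isomorphisms are total (from $\sigma;\tau=\id_A$ and totality of copycat), $\sigma$ replies with a unique $B$-move $b$, and $s'_{\leq 2k}\,s_{k+1}\,b$ is the required block. If instead $s_{k+1}$ is an $O$-move of $A$, it is a $P$-move of $A\tto B$ and so cannot be fed to $\sigma$; here I would pass to the dual. By the Dual pre-zig-zag lemma and the previous lemma, $\overline{s'_{\leq 2k}}\in\tau$, and in $B\tto A$ the move $s_{k+1}$ is an $O$-move, which $\tau$ answers by a unique $B$-move $b'$; dualizing the resulting $\overline{s'_{\leq 2k}}\,s_{k+1}\,b'\in\tau$ back through $\overline{(\,\cdot\,)}$ gives $s'_{\leq 2k}\,b'\,s_{k+1}\in\sigma$, which restricts to $s_{\leq k+1}$ on $A$.

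Uniqueness would follow by the same induction: the block structure forces whether the fresh $B$-move precedes or follows $s_{k+1}$ (according to the polarity of $s_{k+1}$ in $A$), and the $B$-move itself is pinned down by determinism of $\sigma$ in the first case and of $\tau$, read off the dual play, in the second. I expect the main obstacle to be the bookkeeping of justification pointers and well-bracketing through these extensions — in particular verifying that the pointer reassignment carried out by $\overline{(\,\cdot\,)}$ in the initial-move case keeps $s'_{\leq 2k}\,b'\,s_{k+1}$ a legal play that still restricts to $s$ on $A$ — together with the check that $\sigma$ and $\tau$ are genuinely total on the plays reached, which is exactly where the isomorphism hypothesis, rather than mere pre-zig-zag, is needed.
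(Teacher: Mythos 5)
Your proposal is correct and follows essentially the same route as the paper's proof: induction on $s$, a case split on the polarity of the new $A$-move, totality of $\sigma$ for the $O$-move case in $A\tto B$, passage to the dual play in $\tau$ for the other case, and uniqueness via determinism of $\sigma$ and $\tau$ together with the pre-zig-zag block structure. The extra care you flag about pointer reassignment under dualization is reasonable but is already absorbed by the Dual pre-zig-zag lemma, exactly as in the paper.
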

\begin{proof}
Remark first that if $\sigma: A\tto B$ and $\tau: B \tto A$ are inverses then they are both total, \emph{i.e.} for all $s\in \sigma$ and $sa\in \leg{A\tto B}$
there must be $b$ such that $sab\in \sigma$, assuming it is not the case easily leads to a contradiction. We now prove the lemma by induction on $s$. If $s = \epsilon$,
this is trivial. Otherwise, suppose $sa\in \prethreads{A}$ and we have by induction hypothesis $s'\in \sigma$ such that $s'\restrict A = s$. If $a$ is a
$P$-move in $A$ (hence an $O$-move in $A\tto B$), there is an unique $b$ such that $s'ab\in \sigma$, and we do have $s'ab\restrict A = sa$. If
$a$ is an $O$-move in $A$ (hence a $P$-move in $A\tto B$), then let $\tau: B\tto A$ be the inverse of $\sigma$, since $s'\in \sigma$ we have 
$\overline{s'}\in \tau$. Being part of an isomorphism $\tau$ is total, hence there is $b$
such that $\overline{s'}ab\in \tau$. We deduce from this that $s' b a\in \sigma$, and we have $s'ba\restrict A = sa$ as needed. This choice is unique:
if there is another play $t\in \sigma$ such that $t\restrict A = sa$, then $t = t'b'a$ (since $t$ is zig-zag). By induction
hypothesis we have $t' = s'$, thus $s'b'a\in \sigma$. From this we deduce that $\overline{s'}ab'\in \tau$, so $b=b'$ by determinism of $\tau$.
\end{proof}

\begin{prop}
If $\C \iso \D$ means that two groupoids $\C$ and $\D$ are \emph{isomorphic}, then we have:
\begin{eqnarray*}
\Gam_i &\iso& \Seq_i\\
\Vis_i &\iso& \Jus_i
\end{eqnarray*}
\end{prop}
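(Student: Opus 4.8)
The plan is to realize both groupoid isomorphisms through a single construction: turn an isomorphism $\sigma : A\tto B$ in $\Gam$ into a sequential morphism $\phi_\sigma : \prethreads{A}\to \prethreads{B}$, turn a sequential morphism back into a strategy, and then observe that visibility on the strategy side corresponds exactly to the justified condition $\jp\circ\phi = \phi\circ\jp$ on the morphism side. Since both groupoids have arenas as objects and the intended functor is the identity on objects, it suffices to produce, for each pair $A,B$, mutually inverse bijections between the isomorphisms $A\tto B$ in $\Gam$ and the invertible sequential morphisms $A\to B$, and to check that they respect identities and composition.

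First I would define $F$ on morphisms. Given an isomorphism $\sigma:A\tto B$ with inverse $\tau$, the unique-lift lemma gives, for each $s\in\prethreads{A}$, a unique $s'\in\sigma$ with $s'\restrict A = s$; I set $\phi_\sigma(s) = s'\restrict B$. The pre-zig-zag shape of $\sigma$ ensures that $s'$ is a single thread whose $B$-restriction has exactly one initial move, so $\phi_\sigma(s)\in\prethreads{B}$, and that extending $s$ by one move enlarges $s'$ by exactly one zig-zag step, hence $\phi_\sigma(s)$ by exactly one $B$-move; this yields $\ip\circ\phi_\sigma = \phi_\sigma\circ\ip$. Preservation of the $Q/A$ labelling I would extract from well-bracketing of $s'$ (and, where needed, of the copycat witnessing $\sigma;\tau=\id_A$), which forces each zig-zag step to pair moves of equal question/answer status. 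That $\phi_\sigma$ is invertible with inverse $\phi_\tau$ follows from the equivalence $s\in\sigma\Leftrightarrow\overline{s}\in\tau$ of the earlier lemma together with the symmetry $\overline{s}\restrict A = s\restrict A$, $\overline{s}\restrict B = s\restrict B$ of the dual play: lifting $\phi_\sigma(s)$ back through $\tau$ reproduces $\overline{s'}$ by uniqueness, whose $A$-restriction is $s$.

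For the reverse functor $G$ the crucial observation is that in a pre-zig-zag thread $s'$ over $A\tto B$ every pointer is already determined: pointers between $A$-moves are those of $s'\restrict A$, pointers between $B$-moves are those of $s'\restrict B$, and the single cross-pointer from the initial $A$-move to the initial $B$-move is forced by condition~(2) of the zig-zag definition. Hence, given an invertible sequential morphism $\phi:A\to B$, I would let $\sigma_\phi$ be the prefix-closure of the pre-zig-zag threads $s'$ with $s'\restrict A\in\prethreads{A}$ and $s'\restrict B = \phi(s'\restrict A)$, with pointers reconstructed as above. Bijectivity of $\phi$ makes $\sigma_\phi$ deterministic, prefix-closed, single-threaded and total, and $Q/A$-preservation guarantees well-bracketing; one then checks directly that $F(\sigma_\phi)=\phi$ and $\sigma_{\phi_\sigma}=\sigma$, so $F$ and $G$ are mutually inverse. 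Functoriality, namely preservation of $\id_A$ and the reading $\phi_{\sigma;\tau} = \phi_\tau\circ\phi_\sigma$, is then a routine verification using uniqueness of lifts and a zipping argument. This gives $\Gam_i\iso\Seq_i$.

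Finally, for $\Vis_i\iso\Jus_i$ I would show that $F$ and $G$ restrict correctly. By the zig-zag lemma for $\Vis$, $\sigma$ is visible iff it is zig-zag, i.e.\ iff condition~(3) holds, i.e.\ iff $s'\restrict A$ and $s'\restrict B$ carry the same pointers for every $s'\in\sigma$; unwinding the definitions, this is precisely the statement that $\phi_\sigma$ preserves justification pointers, $\jp\circ\phi_\sigma = \phi_\sigma\circ\jp$, so $\phi_\sigma$ is a justified morphism, and conversely a justified $\phi$ reconstructs a zig-zag, hence visible, strategy. The main obstacle I anticipate is not a single estimate but the bookkeeping in the reverse direction $G$: confirming that the pointer reconstruction is consistent and that $\sigma_\phi$ genuinely satisfies determinism, single-threadedness, and well-bracketing, and then that the two passages compose to the identity on both sides; within this, the $Q/A$-preservation argument is the most delicate local point.
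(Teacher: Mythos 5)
Your proposal follows essentially the same route as the paper: the functor $F$ is defined via the unique-lift lemma by $\phi_\sigma(s) = s'\restrict B$, the inverse $G$ rebuilds the strategy copycat-style from pairs $(s'\restrict A, \phi(s'\restrict A))$ with forced pointers, $Q/A$-preservation comes from well-bracketing of $\sigma$ and $\sigma^{-1}$, and the restriction to $\Vis_i\iso\Jus_i$ is exactly the observation that the zig-zag condition on pointers corresponds to $\jp\circ\phi=\phi\circ\jp$. The argument is correct and matches the paper's proof in all essentials.
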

\begin{proof}
Let us first define a functor $F: \Gam_i \to \Seq_i$. It is defined as the identity on arenas. Let $\sigma : A\tto B$ be an isomorphism, and
let $s\in \prethreads{A}$ then we define $\phi_\sigma(s) = s'\restrict B$, where $s'$ is the unique play on $A\tto B$ which existence is ensured by the lemma above.
The function $\phi_\sigma$ commutes with $\ip$ since $\sigma$ is a pre-zig-zag strategy. To any question it cannot associate an answer, as that would immediately break well-bracketing on $\sigma$.
But to any answer it cannot associate a question, as that would immediately break well-bracketing on $\sigma^{-1}$.
Then we define $F(\sigma) = \phi_\sigma$. It is obvious that $F$ preserves identities and composition\footnote{In fact, this construction can be seen as a particular case
of Hyland and Schalk's faithful functor from games to relations \cite{hyland-schalk}, where the relation happens to be functional.}.

Reciprocally, suppose $\phi: A\to B$ is a sequential isomorphism. We mimic the usual definition of the identity by setting
$G(\phi) = \{s\in \leg{A\tto B}\mid \forall s'\sqsubseteq^P s,~ \phi(s'\restrict A) = s'\restrict B\}$ (We apply $\phi$ on
plays whereas it is normally only defined on \emph{threads}, however it can be canonically extended to plays, so this is not ambiguous).
It is obvious that this construction is functorial, and that it is inverse to $F$.

We have now an isomorphism $\Gam_i \iso \Seq_i$ which restricts naturally to $\Vis_i$ and $\Jus_i$. Indeed if $\sigma: A\tto B$ is a visible isomorphism,
it is a zig-zag strategy therefore $s\in \prethreads{A}$ and $\phi_\sigma(s)$ have the same pointers, which means that $\jp\circ \phi_\sigma = \phi_\sigma \circ \jp$.
Reciprocally if $\phi_\sigma$ is a justified morphism, all $s\in \sigma$ must be such that $s\restrict A$ and $s\restrict B$ have the same pointers,
therefore $\sigma$, being pre-zig-zag, always points in its $P$-view.
\end{proof}

\subsection{Innocent and visible case}
\label{laurentstheorem}

In this section, we use the framework described above to recall Laurent's results. We have proved above that isomorphisms in $\Vis$ correspond
to isomorphisms in $\Jus$, which we are now going to compare with isomorphisms in $\Path$. 

\begin{lem}
There is a full functor $H: \Vis_i \to \Path_i$.
\end{lem}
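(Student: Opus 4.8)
The plan is to build $H$ by restricting justified morphisms to paths, using the isomorphism $\Vis_i \iso \Jus_i$ established just above. Concretely, $H$ is the identity on arenas, and on a visible isomorphism $\sigma : A \tto B$ it sends the associated justified morphism $\phi_\sigma : \prethreads{A} \to \prethreads{B}$ to its restriction $\phi_\sigma \restrict \paths{A}$. The first thing to check is that this restriction is well-typed, \emph{i.e.} that any justified morphism $\phi$ maps $\paths{A}$ into $\paths{B}$. Since every path is a legal thread we have $\paths{A} \subseteq \prethreads{A}$, so $\phi$ is defined on paths. For $s \in \paths{A}$, the commutation $\ip \circ \phi = \phi \circ \ip$ forces $\phi$ to preserve length and to compute $\phi(s)$ one move at a time, while $\jp \circ \phi = \phi \circ \jp$ preserves the pointer structure; since in a path each non-initial move points to its immediate predecessor, the same holds in $\phi(s)$. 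A justified sequence with one initial move in which every later move points to the previous one is automatically alternating, because enabling reverses $O/P$ polarity, and $\phi(s)$ is well-bracketed since it lies in $\prethreads{B}$. Hence $\phi(s) \in \paths{B}$.

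Second, I would verify that $H$ lands in $\Path_i$ and is functorial. The restriction $\phi_\sigma \restrict \paths{A}$ inherits $\ip$-commutation and $Q/A$-preservation from $\phi_\sigma$, so it is a path morphism; functoriality on identities and composites is immediate since restriction commutes with composition; and $H(\sigma)$ is invertible because $\sigma^{-1}$ yields the inverse justified morphism, whose restriction is a two-sided inverse to $H(\sigma)$ on paths.

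The substance of the lemma is fullness: given a path isomorphism $\psi : A \to B$, I must produce a visible isomorphism $\sigma$ with $H(\sigma) = \psi$, equivalently a justified isomorphism $\hat\psi : \prethreads{A} \to \prethreads{B}$ extending $\psi$. The guiding idea is that a justified morphism is entirely determined by its action on paths: each occurrence of a move $a$ in a pre-legal thread $t$ determines a path $p_a$, namely the chain of hereditary justifiers of $a$ read from the initial move down to $a$, and any justified $\phi$ must send $p_a$ to $\phi(p_a)$. I would therefore define $\hat\psi(t)$ occurrence by occurrence: the move at position $i$ is the last move of $\psi(p_{a_i})$, and its pointer is set to the occurrence carrying the image of $a_i$'s justifier (at the same position, since the construction preserves positions). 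Because $\psi$ commutes with $\ip$ and $\psi(p_{a_i})$ is a path, the move placed at position $i$ is correctly enabled by its designated pointer target, so $\hat\psi(t)$ is a genuine justified sequence over $B$; by construction it has the same $Q/A$ labels (as $\psi$ preserves these) and the same pointer-position structure as $t$, from which well-bracketing and the single-initial-move condition are inherited, so $\hat\psi(t) \in \prethreads{B}$. That $\hat\psi$ is justified, \emph{i.e.} commutes with $\ip$ and $\jp$, is immediate from the position-wise, pointer-preserving definition.

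Finally I would check that $\hat\psi$ is an isomorphism and restricts to $\psi$. For a path $s$ the justification chain of its last move is $s$ itself, so an induction gives $\hat\psi(s) = \psi(s)$, hence $H$ of the visible isomorphism corresponding to $\hat\psi$ equals $\psi$. For invertibility, the same recipe applied to $\psi^{-1}$ yields a candidate inverse $\widehat{\psi^{-1}}$, and a chain-chasing computation shows $\widehat{\psi^{-1}} \circ \hat\psi = \id$ and symmetrically, using that the chain of position $i$ in $\hat\psi(t)$ is exactly $\psi(p_{a_i})$. I expect the main obstacle to be precisely this fullness half: showing that the chain-indexed definition of $\hat\psi$ is coherent, that the reconstructed pointers respect enabling and assemble into a valid pre-legal thread, and that the extension is genuinely bijective. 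The $\ip$/$\jp$-commutation and $Q/A$-preservation are bookkeeping, whereas establishing well-definedness and invertibility of the extension is where the real work lies.
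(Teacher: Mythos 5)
Your proposal is correct and follows essentially the same route as the paper: $H$ is defined by restricting the justified morphism $\phi_\sigma$ to $\paths{A}$, and fullness is obtained by the same canonical extension of a path isomorphism to pre-legal threads, sending each occurrence to the last move of the image of its justification chain while keeping pointers. The additional verifications you spell out (paths map to paths, coherence of the extension, invertibility via $\widehat{\psi^{-1}}$) are exactly the steps the paper leaves as ``clear''.
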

\begin{proof}
We have built in the above section a full and faithful functor (actually an isomorphism) $F: \Vis_i \to \Jus_i$. From a visible isomorphism $\sigma : A\tto B$
we set $H(\sigma) = F(\sigma) \restrict \paths{A}$, where $f \restrict E'$ restricts a function $f: E\to F$ to a subset $E'\subseteq E$ of its domain.
The image of a path by $F(\sigma)$ is always a path since it is a justified morphism, hence $H(\sigma) : \paths{A} \to \paths{B}$.

To see why $H$ is full, suppose we have a path morphism $\phi: \paths{A} \to \paths{B}$. Then $\phi$ admits a canonical extension $\phi^*: \prethreads{A} \to \prethreads{B}$. To
define $\phi^*(s)$ we reason by induction on $s$, and set $\phi^*(\epsilon) = \epsilon$ and $\phi^*(sa) = \phi^*(s) a'$, where $a'$ is the last move of $\phi(p_a)$, $p_a$
being the path of $a$ in $s$. The move $a'$ keeps the same pointer as $a$. It is clear that this defines as needed a justified morphism $\phi^*$ such that $H(\phi^*) = \phi$.
\end{proof}

This ensures that arenas $A$ and $B$ are isomorphic in $\Vis$ if and only if they are isomorphic in $\Path$, \emph{i.e.} they are geometrically the same. 
Let us mention that as Laurent proved, this correspondence is one-to-one in the innocent case: one can prove that there is only one innocent zig-zag strategy corresponding
to a particular path isomorphism, hence $H$ restricts to an isomorphism of groupoids $H': \Inn_i \to \Path_i$.

\begin{exa}
Note that $H$ itself is \emph{not} faithful: we can exploit non-innocence to build
non-uniform isomorphisms, \emph{i.e.} isomorphisms which change their underlying path isomorphism as the interaction progresses. For an example,
consider the arena 
\[
A = \raisebox{20pt}{
\xymatrix@R=2pt{
&&q\\
q_1\ar@{-}@/^/[urr]&q_2\ar@{-}@/^/[ur]&a\ar@{-}[u]\\
a\ar@{-}[u]&a\ar@{-}[u]
}}
\]
which is the interpretation of $(\mathtt{bool}\to \unit) \to \unit$ in call-by-value and of $\unit\times \unit \to \unit$ in
call-by-name. Consider now the strategy $i:A\tto A$ which behaves as follows. It starts by playing as the identity on $A$. The first time Opponent
plays $q_1$ or $q_2$ on the left hand side, it simply copies it. Starting from the second time Opponent plays $q_1$ or $q_2$ though, it swaps them. An example
play of $i$ is given in Figure \ref{involution}. Although it is not the identity, $i$ is its own inverse. Its image by $H$ only takes into account the first behaviour
or $i$, thus is the same as for $\id_A$: the identity path morphism on $A$. From this strategy we can extract the following term $f:B \vdash M:B$ of $\Lsums$, where
$B = (\mathtt{bool}\to \unit) \to \unit$.
\[
\begin{array}{l}
\mathtt{new}~r:=\mathtt{true}~\mathtt{in}\\
\lambda g. f (\lambda b. \mathtt{if}~!r~\mathtt{then}~r:=\mathtt{false};g~b~\mathtt{else}~g~(\mathtt{not}~b))
\end{array}
\]

\begin{figure}
\[
\xymatrix@R=0pt{
			&A\ar@{=>}[rrr]		&			&			&A				&\\
			&			&			&			&				&q\\
			&			&q\ar@{-}@/^/[urrr]	&			&				&\\
q_1\ar@{-}@/^/[urr]	&			&			&			&				&\\
			&			&			&q_1\ar@{-}@/^/[uuurr]	&				&\\
q_1\ar@{-}@/^/[uuurr]	&			&			&			&				&\\
			&			&			&			&q_2\ar@{-}@/^/[uuuuur]		&\\
			&q_2\ar@{-}@/^/[uuuuur]	&			&			&				&\\
			&			&			&q_1\ar@{-}@/^/[uuuuuuurr]&				&
}
\]
\caption{A play of the non-trivial involution $i$ on $A$}
\label{involution}
\end{figure}
Although $M$ is not the identity it is an involution on $B$, \emph{i.e.} we have $(\lambda f. M) (M x) \obseq_{\Lsums} x$. Such non-trivial involutions
cannot be defined using only purely functional behaviour.
\end{exa}

We give in Figure \ref{groupoids} a summary of all the groupoids of isomorphisms encountered so far, along with their relations. Following it,
the question of finding the isomorphisms in $\Gam$ boils down to the definition of an arrow from $\Seq_i$ to $\Path_i$ in this diagram, which is what we
will attempt in the next two subsections.

\begin{figure}
\[
\xymatrix@C=30pt@R=30pt{
\Inn_i  \ar[rr]^{\text{iso}}
        \ar[d]^{\subseteq}      &&\Path_i\\
\Vis_i  \ar[r]^{\text{iso}}
        \ar[d]^{\subseteq}      &\Jus_i \ar[ur]^{\text{full}}_{\text{(not faithful)}}\ar[dr]^{\subseteq}\\
\Gam_i  \ar[rr]^{\text{iso}}&&\Seq_i    \ar@{.>}[uu]_{?}
}
\]
\caption{Relations between all groupoids of isomorphisms}
\label{groupoids}
\end{figure}

\subsection{Non-visible isomorphisms by counting}

We have seen above that we can build a full functor $\Vis_i \to \Path_i$, which allows to characterize isomorphic arenas in $\Vis$. However,
this construction relies heavily on visibility. We now investigate how to get rid of it and prove that two arenas $A$ and $B$ are isomorphic in $\Gam$ if and only if they are isomorphic in $\Path$.
In this subsection, we will describe for pedagogical reasons an intuitive approach to the proof, which relies on counting. However this approach
suffers from some defects, hence the full proof (described in the next subsection) will follow slightly different lines.

If $a\in M_A$, let us call its \textbf{arity}
the quantity $ar(a) = |\{m\in M_A\mid a \enb{A} m\}|$. On pre-legal threads $s\in \prethreads{A}$ we define:
\[
Q(s) = \sum_{i=1}^{|s|}{ar(s_i)}
\]
If $s\in \prethreads{A}$, $Q(s)$ is also the number of ways $s$ can be extended to some $sa$ (let us recall here that as a member of $\prethreads{A}$, $s$ need not be alternating): the choice of a justifier $s_i$
plus a move enabled by $s_i$. These definitions allow to express the following observation. If $\sigma: A\tto B$ is an isomorphism (thus a pre-zig-zag strategy) and $s\in \sigma$, then
$Q(s\restrict A) = Q(s\restrict B)$, because $\sigma$ being an isomorphism, it must associate each possible extension of $s\restrict A$ to an unique extension of $s\restrict B$.
But this also means that if $sab\in \sigma$ we have $Q(s\restrict A) + ar(a) = Q((s\restrict A) a) = Q((s\restrict B) b) = Q(s\restrict B) + ar(b)$, hence $ar(a) = ar(b)$. Thus to
each move $a$, $\sigma$ must associate a move with the same arity. This is a step in the right direction, but we would like a deeper connection between $a$ and $b$.

If $a\in M_A$, we will use the notation $J_a = \{m\in M_A \mid a\vdash_A m\}$.
Let us define by induction on $k$ the notion of a $k$-isomorphism between $a\in M_A$ and $b\in M_B$. 
For any $a\in M_A$ and $b\in M_B$ there is automatically a $0$-isomorphism $i_{a, b}$. A $(k+1)$-isomorphism from $a$ to $b$
is the data of an isomorphism $f : J_a \to J_b$ along with, for all $m\in J_a$, a $k$-isomorphism $f_m: m \to f(m)$.
We use the notation $m \iso_k n$ to denote the fact that there is a $k$-isomorphism from $m$ to $n$. In other words, 
we have $m\iso_k n$ if the tree of paths of length at most $k$ starting form $m$ is tree-isomorphic to the tree of paths of length at most $k$ starting from $n$. 
If $k_1 \leq k_2$, $f_1$ is a $k_1$-isomorphism and $f_2$ is a $k_2$-isomorphism, we say that $f_1$ is a prefix of $f_2$ if they agree up to depth $k_1$.
Note that in particular we have $m \iso_1 n$ if and only if $ar(m) = ar(n)$, so $m \iso_k n$ is indeed a generalization of $ar(m) = ar(n)$. By induction on $k$, one can then prove that $\sigma$ must
always associate to each move $m$ a move $n$ such that $m \iso_k n$ : to prove it for $k+1$, just apply the counting argument above on $\iso_k$-equivalence classes. From all these $k$-isomorphisms,
one can then deduce the existence of a path isomorphism between $A$ and $B$. 

This counting argument has several unsatisfying aspects, which are caused by the implicit use of the following lemma.

\begin{lem}[Slicing of bijections]
Suppose $E = E_1 + E_2$ and $F = F_1 + F_2$ are finite sets, and that $f:E\to F$ and $g: E_1\to F_1$ are bijections. Then there
is a bijection $f\setminus g: E_2 \to F_2$.
\label{slicing_isos}
\end{lem}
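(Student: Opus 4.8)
The plan is to prove the ``slicing'' statement directly, since it is a purely finitary combinatorial fact about bijections. I am given finite sets $E = E_1 + E_2$ and $F = F_1 + F_2$, a bijection $f : E \to F$, and a bijection $g : E_1 \to F_1$, and I must produce a bijection $f\setminus g : E_2 \to F_2$. The first observation I would make is the numerical one: since $f$ is a bijection, $|E| = |F|$, and since $g$ is a bijection, $|E_1| = |F_1|$. Because the unions are disjoint, $|E_2| = |E| - |E_1| = |F| - |F_1| = |F_2|$. So $E_2$ and $F_2$ already have the same cardinality, and \emph{some} bijection between them certainly exists. The only real content is to exhibit one canonically (or at least explicitly), rather than merely invoke equal cardinality.

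The natural explicit construction uses $f$ and $g$ together to build a permutation and then restricts it. Concretely, I would consider the composite $h = f^{-1} \circ (g + \id_{F_2})$ viewed appropriately, or more transparently reason as follows. Define a map on $E_2$ by chasing elements through $f$ and, whenever they land in $F_1$, pulling them back via $g$ and $f$ again, iterating until they land in $F_2$. Formally, for $x \in E_2$, look at $f(x)$: if $f(x) \in F_2$, set $(f\setminus g)(x) = f(x)$ and stop; if $f(x) \in F_1$, apply $g^{-1}$ to get back into $E_1$, then apply $f$ again, and repeat. Since all sets are finite and $f$ is injective, this ``shuttling'' process cannot cycle back into an element already visited, so it must terminate in $F_2$. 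This defines $f\setminus g : E_2 \to F_2$.

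The key steps, in order, are: first, set up the iteration $x \mapsto f(x)$, with the correction $f \circ g^{-1}$ applied each time the image falls in $F_1$; second, prove termination, using finiteness together with the injectivity of $f$ and $g$ to rule out a loop that never escapes $F_1$; third, verify the result is a bijection. For injectivity I would show that if two elements of $E_2$ shuttled to the same element of $F_2$, one could reverse the process using $f^{-1}$ and $g$ to conclude the starting points coincide. Surjectivity follows by running the analogous reverse shuttle starting from any $y \in F_2$, using $f^{-1}$ in place of $f$ and $g$ in place of $g^{-1}$; alternatively, injectivity of a self-map between finite sets of equal cardinality already forces bijectivity, so once $|E_2| = |F_2|$ is established, proving injectivity alone suffices.

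The step I expect to be the main obstacle is termination of the shuttle together with a clean proof that the map is well-defined and injective: one must argue carefully that the alternating application of $f$ and $f\circ g^{-1}$ traces out a simple (non-repeating) path through the finite set, which is where injectivity of both $f$ and $g$ is genuinely used. An alternative route that sidesteps the iteration entirely is to observe that this is exactly the statement that bijections can be ``divided'': one may instead give a short non-constructive proof by cardinality as above and remark that an explicit witness is obtained by the shuttling construction. Given that the lemma is used only to assert \emph{existence} of $f\setminus g$, I would likely present the cardinality argument as the core proof and mention the shuttle as the canonical realization.
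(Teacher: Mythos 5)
Your proposal is correct and matches the paper exactly: the paper likewise dispatches the lemma itself with the cardinality argument (``obviously true by cardinality reasons'') and then, in the following subsection, introduces precisely your shuttle construction --- iterating $x_0 = f(x)$, $x_{n+1} = f\circ g^{-1}(x_n)$ until the orbit lands in $F_2$, with termination and injectivity following from finiteness and the injectivity of $f$ and $g$ --- as the computationally meaningful realization (identified there as a trace in the category of finite sets and bijections). Your choice to present the cardinality argument as the core proof and the shuttle as the explicit witness mirrors the paper's own organization.
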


This lemma is obviously true by cardinality reasons.
However this proof is, computationally speaking, ``almost non-effective", in the sense that the isomorphism it produces implicitly depends on the choice of a total
ordering for $E$ and $F$. A consequence of that is that from any isomorphism in $\Gam$ we will extract an isomorphism in $\Path$, but we cannot hope
its choice to be canonical, for any reasonable meaning of ``canonical". Even worse, the witness isomorphisms given by this proof for $\iso_k$ and $\iso_{k+1}$ need not agree together.
This implies that for infinitely deep arenas, one requires König's lemma to actually build a path isomorphism from a game isomorphism. This means that we cannot deduce from the proof above
an algorithm to extract path isomorphisms.

\subsection{Extraction of a path isomorphism}

To obtain a more computationally meaningful extraction of a path iso from a game iso, we must replace the proof of Lemma \ref{slicing_isos} by
something else than counting. As formalized in the following proof, the idea is to remark that given the data of Lemma \ref{slicing_isos}, starting
from $x\in E_2$, the sequence
\begin{eqnarray*}
x_0 &=& f(x)\\
x_{n+1} &=& f\circ g^{-1}(x_n)
\end{eqnarray*}
must eventually reach $F_2$, as illustrated in Figure \ref{fig_slicing}, yielding a bijection between $E_2$ and $F_2$
(this corresponds to the construction of a \emph{trace} \cite{joyal-street-verity} on the category of finite sets and permutations).

\begin{figure}
\begin{center}
\includegraphics[scale=0.5]{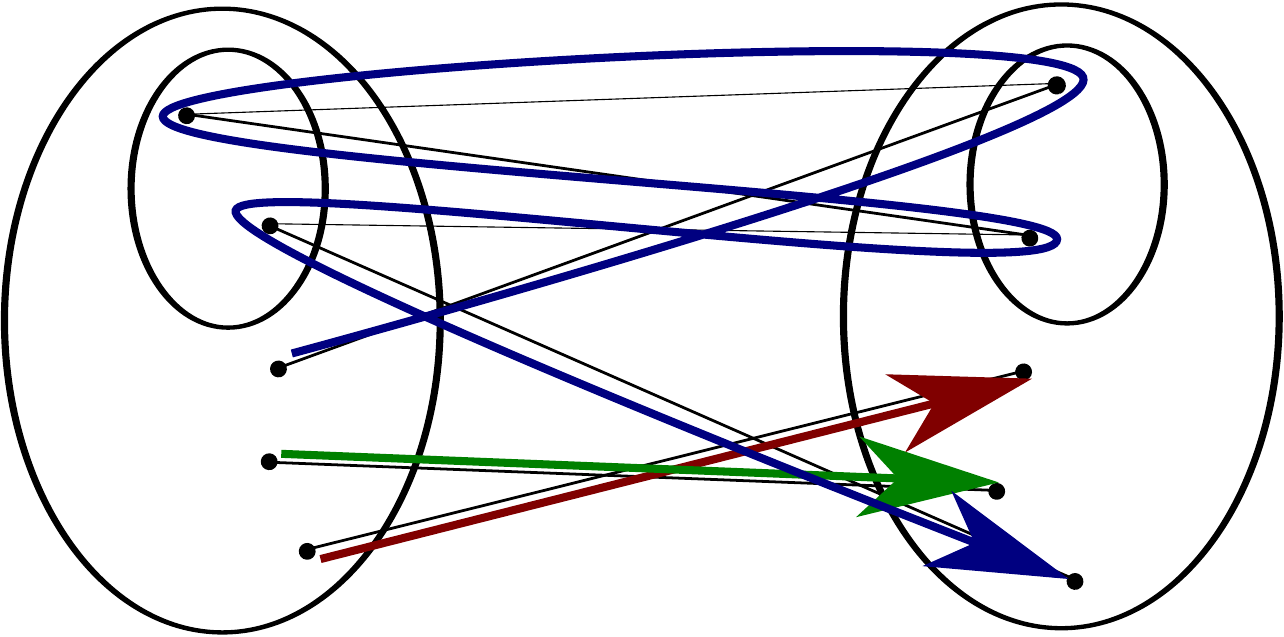}
\end{center}
\caption{Slicing of isomorphisms.}
\label{fig_slicing}
\end{figure}

\begin{prop}
If $\phi : A\to B$ is a sequential play isomorphism, then for all $sa \in \prethreads{A}$ with $\phi(sa) = \phi(s) b$, there is a family
$(h_{s, sa}^k)_{k\in \mathbb{N}}$ such that for all $k$, $h_{s, sa}^k$ is a $k$-isomorphism from $a$ to $b$. 
This family is \emph{coherent}, in the following sense: if $k_1 \leq k_2$, $h_{s,sa}^{k_1}$ is a prefix of $h_{s, sa}^{k_2}$.
\end{prop}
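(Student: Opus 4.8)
The plan is to construct the entire family $(h^k_{s,sa})_{k\in\mathbb{N}}$ by induction on $k$, simultaneously over all pairs $(s,sa)$, using the trace construction of Figure \ref{fig_slicing} at each level to produce the top-level bijection $J_a \to J_b$. The base case $k=0$ is forced: $h^0_{s,sa}$ is the unique $0$-isomorphism $i_{a,b}$. Before the inductive step I would record two elementary facts. First, writing $E(t)$ for the set of one-move extensions of a pre-legal thread $t$ (a choice of justifier $t_i$ together with a move enabled by $t_i$), there is a canonical decomposition $E(sa) = E(s) + J_a$, where $E(s)$ collects the extensions whose justifier lies in $s$ and $J_a = \{m\mid a\enb{A} m\}$ those justified by $a$; likewise $E(\phi(s)b) = E(\phi(s)) + J_b$. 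Since $\phi$ is a sequential isomorphism, $\ip\circ\phi = \phi\circ\ip$ makes $\phi(tm)$ extend $\phi(t)$ by one move, so taking last moves gives bijections $\phi_{sa}:E(sa)\to E(\phi(s)b)$ and $\phi_s:E(s)\to E(\phi(s))$ (their inverses coming from $\phi^{-1}$, all sets finite because $A,B$ are finitely branching). Second, $k$-isomorphisms form a groupoid: identities, composites and inverses of tree-isomorphisms truncated at depth $k$ are again such, and truncation to depth $k'\le k$ carries a $k$-isomorphism to a $k'$-isomorphism.

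The top level of $h^{k+1}_{s,sa}$, namely the bijection $f:J_a\to J_b$, is obtained by slicing $\phi_{sa}$ by $\phi_s$ exactly as in Lemma \ref{slicing_isos}: starting from $m\in J_a$ I set $x_0 = \phi_{sa}(m)$ and, as long as $x_i\in E(\phi(s))$, continue with $y_{i+1} = \phi_s^{-1}(x_i)$ and $x_{i+1} = \phi_{sa}(y_{i+1})$. Finiteness forces this sequence to land in $J_b$ after finitely many steps, at some $x_N$, and I put $f(m) = x_N$; this is the map $\phi_{sa}\setminus\phi_s$, and, crucially, it depends only on $\phi$ and not on $k$. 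Each step of the trace records an equation $\phi(sa\,y_i) = \phi(s)\,b\,x_i$ or $\phi(s\,y_{i+1}) = \phi(s)\,x_i$, i.e. a pair handled by the family for a strictly longer thread, so by induction hypothesis I have at depth $k$ the isomorphisms $h^k_{sa,\,sa y_i}:y_i\to x_i$ and $h^k_{s,\,s y_{i+1}}:y_{i+1}\to x_i$. Composing them along the trace,
\[
m = y_0 \to x_0 \to y_1 \to x_1 \to \cdots \to x_N = f(m),
\]
alternately forwards along the $h^k_{sa,\cdot}$ and backwards along the $(h^k_{s,\cdot})^{-1}$, yields by the groupoid structure a $k$-isomorphism from $m$ to $f(m)$, which I take as the $m$-component of $h^{k+1}_{s,sa}$.

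It remains to verify coherence, that $h^k_{s,sa}$ is a prefix of $h^{k+1}_{s,sa}$, which I would establish by induction on $k$. The top-level bijection $f = \phi_{sa}\setminus\phi_s$ is literally the same for every $k\ge 1$, since the trace depends only on $\phi$, and the chain of moves $m = y_0, x_0, y_1,\dots,x_N$ used to define the $m$-component is $k$-independent for the same reason. Thus the $m$-components of $h^{k+1}_{s,sa}$ and of $h^{k+2}_{s,sa}$ are obtained by composing, respectively, the depth-$k$ and depth-$(k+1)$ isomorphisms along one and the same chain; by the inductive coherence of those (namely $h^k_{\cdot}\preceq h^{k+1}_{\cdot}$ for the longer threads) together with the fact that truncation is a groupoid functor compatible with composition and inversion, the depth-$k$ truncation of the latter equals the former, which is precisely coherence at $(s,sa)$.

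I expect the main obstacle to be exactly this coherence bookkeeping, and it is the reason the counting proof of Lemma \ref{slicing_isos} is abandoned: the bijection produced by cardinality arguments depends on auxiliary orderings and need not agree across successive values of $k$, so the whole point of the trace is to furnish a construction that is manifestly $k$-independent and hence automatically coherent. Checking that the depth-$k$ truncation of a trace-composite is the corresponding lower trace-composite — i.e. that truncation commutes with the fixed, $k$-independent chaining — is the technical heart of the argument.
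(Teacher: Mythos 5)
Your proposal is correct and follows essentially the same route as the paper: the same decomposition $E_{sa}=E_s+J_a$, the same trace/slicing iteration to produce the $k$-independent bijection $J_a\to J_b$ (the paper phrases it as a path in an acyclic bipartite graph $G_{s,sa}$ rather than an explicit sequence, but it is the same construction), the same labelling of the trace steps by the depth-$k$ isomorphisms for the longer threads and their inverses, and the same coherence argument resting on the $k$-independence of the trace. Your closing remark about why the trace replaces the counting argument matches the paper's stated motivation exactly.
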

\begin{proof}
We will use the following notations. If $s\in \prethreads{A}$, $E_s$ will be the set of atomic extensions of $s$, that is of plays $sa\in \prethreads{A}$, and $F_s$ will be the set of
atomic extensions of $\phi(s)$. For all plays $sa\in \prethreads{A}$, although strictly speaking $E_s$ is \emph{not} a subset of $E_{sa}$, 
we have the following decomposition:
\[
E_{sa} = E_s + J_a
\]
Indeed, a move extending $sa$ can either point to some $s_i$ or to $a$.
Note also that for any $s$, $\phi: sa \mapsto \phi(s) b$ induces an isomorphism $f_s: a \mapsto b$ from $E_s$ to $F_s$.

For all $s\in \prethreads{A}$ and $sa\in E_s$, we follow the reasoning illustrated in Figure \ref{fig_slicing} and consider a bipartite directed graph $G_{s, sa}$ defined as follows: its set of vertices is $V = E_{sa} + F_{sa}$ and its set of edges is
$E = \{(x, f_{sa}(x)) \mid x\in E_{sa}\} + \{(y, f_s^{-1}(y))\mid y \in F_s\}$. This graph is ``deterministic", in the sense that the outwards degree of each vertex is at most one, moreover
the only vertices whose outwards degree is $0$ are those of $J_b$ (where $b = f_s(a)$, so $F_{sa} = F_s + J_b$). Moreover $G_{s, sa}$ must be acyclic, since $f_s$ and $f_{sa}$ are isomorphisms.
Thus from any vertex in $J_a$, there is an unique path in $G$ leading to a vertex in $J_b$; this induces an isomorphism $g_{s, sa} : J_a \to J_b$. For each pair $(m, g_{s, sa}(m))$ we
also keep track of the corresponding path $p_{s, sa}^m = (m, f_{sa}(m), f_s^{-1}(f_{sa}(m)), \dots, g_{s, sa}(m))$.

It is now time to build the $k$-isomorphisms, by induction on $k$. For $k=0$ this is obvious. For fixed $k+1\geq 1$, by induction hypothesis there is for each $sa\in \prethreads{A}$ with
$\phi(sa) = \phi(s) b$ a $k$-isomorphism $h_{s, sa}^k$ from $a$ to $b$. In particular, for fixed $sa\in \prethreads{A}$, consider the graph $G_{s, sa}$. 
Each of its edges of the form $(x, f_{sa}(x))$ are now labeled by the $k$-isomorphism $h_{sa, x}^k$ and all its edges of the form $(y, f_s^{-1}(y))$ are labeled by $(h_{s, f_s^{-1}(y)}^k)^{-1}$.
For each pair $(m, g_{s, sa}(m))$ we can now compose the labels along the path $p_{s, sa}^m$ and get a $k$-isomorphism $i_m : m \to g_{s, sa}(m)$. We then define 
$h_{s, sa}^{k+1} = (g_{s, sa}, (i_m)_{m\in J_a})$ which is as needed a $(k+1)$-isomorphism from $a$ to $b$.

Note finally that if $k_1 \leq k_2$, $h_{s, sa}^{k_1}$ is a prefix of $h_{s, sa}^{k_2}$. This is proved by simultaneous induction on $k_1$ and $k_2$. If $k_1 = 0$ this is obvious. Otherwise,
it relies on the fact that the graph $G_{s, sa}$ does not depend on $k$. Hence $h_{s, sa}^{k_1 + 1} = ( g_{s, sa}, (i_m)_{m\in J_a})$ and $h_{s, sa}^{k_2 + 1} = ( g_{s, sa}, (j_m)_{m\in J_a})$,
and each $i_m$ has be obtained from $k_1$-isomorphisms in the same way as $j_m$ has been obtained from $k_2$-isomorphisms, so it immediately boils down to the induction hypothesis.
\end{proof}

\begin{thm}
Two finitely branching arenas $A$ and $B$ are $\Gam$-isomorphic if and only if they are $\Path$-isomorphic.
\label{main}
\end{thm}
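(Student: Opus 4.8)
The plan is to treat the two directions separately, with the forward implication being routine and the reverse one carrying all the content. For the \emph{if} direction, suppose $A$ and $B$ are $\Path$-isomorphic. Since the functor $H : \Vis_i \to \Path_i$ is full, any path isomorphism is the image of some visible isomorphism $\sigma : A\tto B$; as $\Vis$ is a subcategory of $\Gam$, this $\sigma$ is in particular a $\Gam$-isomorphism, so $A$ and $B$ are $\Gam$-isomorphic. Note that this direction does not even use finite branching; all the work lies in the converse.

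For the \emph{only if} direction, I would start from a $\Gam$-isomorphism $\sigma : A\tto B$ and pass, via the isomorphism $\Gam_i \iso \Seq_i$, to the associated sequential play isomorphism $\phi = F(\sigma) : A\to B$. Applying the preceding proposition, for every $sa\in \prethreads{A}$ with $\phi(sa) = \phi(s)b$ I obtain a coherent family $(h_{s,sa}^k)_k$ of $k$-isomorphisms from $a$ to $b$. The key consequence of coherence is that these $k$-isomorphisms stitch together into a single limit tree-isomorphism $h_{s,sa}^\infty$ between the full enabling-tree rooted at $a$ and that rooted at $b$; crucially, because the family is coherent, this limit is well-defined directly at each depth, with no appeal to K\"onig's lemma. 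This is exactly the defect of the earlier counting argument that the extraction proposition was designed to repair.

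I would then assemble a path isomorphism out of these limits. Specialising to $s = \epsilon$, the bijection $f_\epsilon$ between atomic extensions of the empty thread identifies $I_A$ with $I_B$, and for each initial move $i$ with $j = f_\epsilon(i)$ the limit $h_{\epsilon,i}^\infty$ is a tree-isomorphism between the enabling-trees from $i$ and from $j$. Since a path of $A$ beginning at $i$ is exactly a branch of this tree, $h_{\epsilon,i}^\infty$ sends paths from $i$ to paths from $j$, preserving the prefix ordering and, because each $h_{s,sa}^k$ is built from the $Q/A$-preserving maps $f_s$ coming from $\phi$, preserving $Q/A$ labelling. Gluing these over all initial moves via $f_\epsilon$ yields a function $\Psi : \paths{A} \to \paths{B}$, which I would check is a path morphism: $\ip\circ \Psi = \Psi\circ \ip$ follows from prefix-preservation, and labelling is preserved by construction. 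Performing the symmetric construction on $\phi^{-1} = F(\sigma^{-1})$ gives an inverse, so $\Psi\in\Path_i$ and $A$ and $B$ are $\Path$-isomorphic.

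The main obstacle is concentrated in this reverse direction, and more precisely in organising the coherent limit correctly: one must verify that the limit tree-isomorphism is genuinely well-defined at every depth (here finite branching makes each $J_a$ finite, so the construction of the proposition terminates and the prefixes are consistent), that the resulting $\Psi$ respects enabling and initiality and not merely the abstract tree structure, and that the two constructions from $\sigma$ and $\sigma^{-1}$ are mutually inverse as path morphisms. The heavy combinatorial lifting has been done by the extraction proposition; what remains is the bookkeeping that turns its depth-indexed output into an honest $\Path$-isomorphism.
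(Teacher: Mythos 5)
Your proof is correct and follows essentially the same route as the paper: pass from the $\Gam$-isomorphism to the sequential isomorphism $F(\sigma)$, apply the extraction proposition to obtain the coherent $k$-isomorphisms, take their limit $\omega$-isomorphism at each initial move, and assemble these with the bijection $I_A \to I_B$ into a path isomorphism (the converse via fullness of $H: \Vis_i \to \Path_i$ is likewise what the paper relies on). One caution on your last step: since the extraction $K_{A,B}$ is not functorial (as the paper notes afterwards), the construction applied to $\sigma^{-1}$ need not produce $\Psi^{-1}$ --- but this is harmless, because each limit $h_{\epsilon,i}$ is already an invertible tree-isomorphism, so $\Psi$ is invertible in $\Path$ without any appeal to $\sigma^{-1}$.
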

\begin{proof}
Consider an isomorphism $\sigma : A\tto B$ in $\Gam$. Restricted on plays with only two moves, it gives an isomorphism $f : I_A \to I_B$. By the previous proposition, there is for each $i\in I_A$
and for each $k\in \mathbb{N}$ a $k$-isomorphism $h^k_{\epsilon, i}: i \to f(i)$. Additionally, all these $k$-isomorphisms are compatible with each other, so they converge to an $\omega$-isomorphism
$h_{\epsilon, i}: i \to f(i)$. The iso $f$ together with $h_{\epsilon, i}$ for all $i$ define a path isomorphism from $A$ to $B$.
\end{proof}

For each pair of arenas $A, B$, we have a function $K_{A, B} : \Gam_i(A, B) \to \Path_i(A, B)$. Unfortunately,
this function fails to be a functor. Indeed, the construction is based on the more explicit proof of Lemma \ref{slicing_isos} illustrated in
Figure \ref{fig_slicing}, which is not functorial; one can easily find sets $E = E_1 + E_2$, $F = F_1 + F_2$, $G = G_1 + G_2$ along with bijections
$f_1: E \to F$, $f_2: E_1 \to F_1$, $g_1 : F \to G$ and $g_2 : F_1 \to G_1$ such that $(f\setminus f'); (g\setminus g') \neq 
(f; g)\setminus (f'; g')$, and extract from this a counter-example for the functoriality of $K_{A, B}$. However,
$K$ is a natural transformation:

\begin{prop}
The family $K_{A, B}: \Gam_i(A, B) \to \Path_i(A, B)$ is natural in $A$ and $B$, where both $\Gam_i(-, -)$ and $\Path_i(-,-)$ are seen as bifunctors
from $\Path_i^{op}\times \Path_i$ to $\Set$ (using implicitly the faithful functor from $\Path_i$ to $\Gam_i$ of Figure \ref{groupoids}).
\end{prop}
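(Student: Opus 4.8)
The plan is to reduce the claimed naturality to a single algebraic property of the slicing operation $f\setminus g$ of Lemma~\ref{slicing_isos}, namely its compatibility with \emph{block\-/diagonal} pre\- and post\-composition. Recall that the faithful functor $\Path_i \to \Gam_i$ of Figure~\ref{groupoids} sends a path isomorphism $u$ to the unique innocent zig-zag strategy $\hat u$ realizing it; being innocent and zig-zag, $\hat u$ acts along any interaction merely as a \emph{coherent relabeling} of moves by the underlying tree-isomorphism $u$, never revising its choice as play proceeds. Since $\Gam_i$ and $\Path_i$ are groupoids and the hom-bifunctors act by pre\- and post\-composition, it suffices to establish the two one-sided equations
\begin{eqnarray*}
K_{A',B}(\hat u ; \sigma) &=& u ; K_{A,B}(\sigma)\\
K_{A,B'}(\sigma ; \hat v) &=& K_{A,B}(\sigma) ; v
\end{eqnarray*}
for path isomorphisms $u : A' \to A$ and $v : B \to B'$; the general naturality square then follows by composing them (and, taking $\sigma = \id$, these specialise to $K(\hat u) = u$, confirming that $K$ restricts on $\Inn_i$ to the known isomorphism of Section~\ref{laurentstheorem}).

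First I would pin down the local data feeding the trace construction of Figure~\ref{fig_slicing}. Writing $\phi = F(\sigma)$ for the associated sequential isomorphism, each node $s\in \prethreads{A}$ carries the bijection $f_s : E_s \to F_s$ on atomic extensions, and each graph $G_{s,sa}$ — hence each $g_{s,sa} : J_a \to J_b$ and each $k$-isomorphism $h^k_{s,sa}$ — is obtained from the pair $(f_{sa}, f_s)$ by slicing $E_s$, $F_s$ out of $E_{sa} = E_s + J_a$ and $F_{sa} = F_s + J_b$. Because $F$ is a functor, $F(\sigma;\hat v) = \phi ; v$ in $\Seq_i$, so post\-composing $\sigma$ with $\hat v$ replaces $f_{sa}$ by $f_{sa} ; v_{sa}$ and $f_s$ by $f_s ; v_s$, where $v_{sa}$ is the relabeling of extensions induced by $v$. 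The crucial point is that $v_{sa}$ is \emph{block\-diagonal} with respect to $F_{sa} = F_s + J_b$: since $v$ preserves pointers and enabling, it carries the old extensions $F_s$ and the freshly enabled moves $J_b$ into the corresponding blocks separately, with restriction $v_s$ on the first block.

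The heart of the argument is then the naturality of the trace on the groupoid of finite sets and bijections \cite{joyal-street-verity}: for bijections $f : E_1 + E_2 \to F_1 + F_2$, $g : E_1 \to F_1$ and block\-diagonal $w_1 + w_2$,
\[
(f ; (w_1 + w_2)) \setminus (g ; w_1) = (f\setminus g) ; w_2,
\]
and dually for block\-diagonal pre\-composition. Applying this at every node $G_{s,sa}$, with $w_1 + w_2 = v_s + v|_{J_b}$, shows that the $g_{s,sa}$ computed for $\sigma;\hat v$ equals the one for $\sigma$ followed by $v|_{J_b}$. An induction on $k$, identical in shape to the one building the $h^k_{s,sa}$, then propagates this through the composition of labels along the paths $p_{s,sa}^m$, so that the whole coherent family — and with it the limiting $\omega$-isomorphism $h_{\epsilon,i}$ of Theorem~\ref{main} — is post\-composed by $v$; this is exactly $K_{A,B}(\sigma) ; v$. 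The pre\-composition equation is symmetric, using the dual pre-zig-zag play $\overline{(\cdot)}$ and block\-diagonal pre\-composition $u_{s'}$ on the domain side.

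The main obstacle — and the precise reason functoriality fails while naturality survives — is the block\-diagonality of the relabelings $v_{sa}$, $u_{s'}$. For a \emph{general} $\Gam$-isomorphism the induced bijections on extensions mix the blocks $F_s$ and $J_b$, and this ``zig-zag back'' phenomenon is exactly what produces the counterexample to functoriality of $f\setminus g$; there the trace identity above no longer applies. What must therefore be checked with care is that an innocent zig-zag strategy arising from a path isomorphism never mixes these blocks: its coherence guarantees that at each node it acts as a \emph{fixed} block\-diagonal permutation, uniformly in the depth $k$, which is precisely what allows the single trace identity to be invoked simultaneously at every node and at every level of the inductive construction of the $k$-isomorphisms.
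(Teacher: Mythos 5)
Your proposal is correct and follows essentially the same route as the paper: the paper's proof argues that composing with strategies arising from $\Path$-isomorphisms only renames moves and that every $k$-isomorphism $h^k_{s,sa}$ is invariant under such renaming, by induction on $k$ and then on $s$. Your block-diagonality observation and the trace identity $(f;(w_1+w_2))\setminus(g;w_1)=(f\setminus g);w_2$ are exactly a sharper, explicit formulation of that ``invariance under renaming'' step, and your closing remark correctly isolates why naturality survives while functoriality fails.
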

\begin{proof}
The naturality conditions expresses invariance of $K_{A, B}$ under renaming of moves in $A$ and $B$, as composing with $\Path$-isomorphisms
or $\Gam$-isomorphisms generated from $\Path$-isomorphisms only rename moves. The proof proceeds by showing that all $k$-isomorphisms
$h^k_{s, sa}$ on which the definition of $K$ relies are invariant under renaming of moves, by induction on $k$, then on $s$.
\end{proof}

\section{Syntactic isomorphisms}

\label{section_syntactic}

\subsection{Application to $\Lsums$}

Our isomorphism theorem most naturally applies to $\Gam$ (so to call-by-name languages), but $\Lsums$ is modeled in $\Fam(\Gam)_T$, 
so we have to check how our result extends to this. Let us first relate isomorphisms in $\Fam(\Gam)_T$ and isomorphisms in $\Gam$.
We start by recalling some terminology: an arena $A$ is \textbf{pointed} if it has only one initial move. A strategy
$\sigma: A \to B$ where $A$ and $B$ are pointed is \textbf{strict} if it responds to the initial move in $B$ with the initial
move in $A$, which it never plays again. Pointed arenas and strict maps form a subcategory $\Gam_\bot$ of $\Gam$. As such,
our characterisation of the isomorphisms in $\Gam$ will apply just as well on $\Gam_\bot$.

\begin{lem}
If $A$ and $B$ are isomorphic in $\Fam(\Gam)_T/\obseq$, then $TA$ and $TB$ are isomorphic in $\Gam/\obseq$.
\label{analysis_isos}
\end{lem}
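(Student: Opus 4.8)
The plan is to show that an isomorphism $(F, G)$ between families $A = (A_i)_{i \in I}$ and $B = (B_j)_{j \in J}$ in $\Fam(\Gam)_T/\obseq$ can be transported through the monad $T$ to produce a pair of mutually inverse strategies between the arenas $TA = \{\Sigma_{i \in I} A_i\}$ and $TB = \{\Sigma_{j \in J} B_j\}$ in $\Gam/\obseq$. First I would unfold what an isomorphism in $\Fam(\Gam)_T$ amounts to: a morphism $A \to B$ is a map $f : I \to J$ together with strategies $\sigma_i : A_i \to T B_{f(i)}$ (in the Kleisli category), and similarly $G$ consists of $g : J \to I$ and strategies $\tau_j : B_j \to T A_{g(j)}$. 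The isomorphism equations $G \circ F \obseq \id_A$ and $F \circ G \obseq \id_B$ force (up to $\obseq$) that $f$ and $g$ are mutually inverse bijections and that each $\sigma_i$ and $\tau_{f(i)}$ compose to the identity in the appropriate Kleisli hom-sets.

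Next I would construct the candidate strategies $\sigma : TA \to TB$ and $\tau : TB \to TA$ directly from this data. The arena $TA$ has a single initial move $q$, to which Player answers with some $a_i$ selecting a component, after which play proceeds in $A_i$; likewise for $TB$. The natural definition of $\sigma$ is: upon receiving the initial Opponent interrogation in $TB$, interrogate $TA$, wait for Opponent's choice $a_i$, respond by playing $a_{f(i)}$ in $TB$, and then behave like $\sigma_i : A_i \to TB_{f(i)}$ in the two components $A_i$ and $B_{f(i)}$ — which, since $\sigma_i$ lands in $TB_{f(i)}$, really means reading off the chosen component of $B_{f(i)}$ and copycatting appropriately. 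This is precisely the image of $F$ under the canonical functor $\Fam(\Gam)_T \to \Gam$ sending a family morphism to a single strategy on the lifted sums; the key observation is that the lifted-sum arena $\Sigma_{i} A_i$ is (by the weak-coproduct / strong-monad structure recalled in the excerpt) exactly $TA$ viewed as a single arena, so the copairing and injection structure of $T$ packages $(f, (\sigma_i)_i)$ into a well-defined single-threaded strategy on $TA \tto TB$.

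I would then verify that $\sigma$ and $\tau$ are mutually inverse in $\Gam/\obseq$. The composite $\sigma ; \tau : TA \to TA$ selects a component $a_i$, maps it via $f$ to $b_{f(i)}$, maps that back via $g = f^{-1}$ to $a_{g(f(i))} = a_i$, and interleaves the component strategies $\sigma_i$ and $\tau_{f(i)}$; the fact that $G \circ F \obseq \id_A$ in $\Fam(\Gam)_T/\obseq$ means exactly that on each component these composites agree with copycat on complete plays, so $\sigma ; \tau \obseq \id_{TA}$, and symmetrically $\tau ; \sigma \obseq \id_{TB}$. The cleanest way to organize this is to exhibit the assignment $F \mapsto \sigma$ as (the action on morphisms of) a functor $\Fam(\Gam)_T/\obseq \to \Gam/\obseq$ that sends a family to its lifted sum, and then simply invoke functoriality: a functor preserves isomorphisms. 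I expect the main obstacle to be the bookkeeping in matching the monadic/Kleisli composition in $\Fam(\Gam)_T$ — the copairing and distributivity laws defining the lifting $(-)^*$ — with ordinary strategy composition on the lifted-sum arenas, i.e.\ checking that this assignment really is functorial and interacts correctly with $\obseq$ (which compares only complete plays). Once that translation between Kleisli composition and composition of the packaged strategies is established, preservation of isomorphisms is immediate.
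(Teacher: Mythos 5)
Your proposal is correct and takes essentially the same route as the paper: both arguments hinge on the functor $\Fam(\Gam)_T \to \Gam$ sending a family $A$ to the single arena $TA$ and a Kleisli morphism $f$ to its lifting $f^* : TA \to TB$, together with the observation that this functor respects $\obseq$, so that functoriality transports the isomorphism. (One minor caveat that does not affect your argument: a Kleisli morphism $A \to B$ is just a family of strategies $A_i \to \Sigma_{j} B_j$ into the singleton family $TB$ --- the index map $f : I \to J$ you posit is not part of the data, since the choice of component is made dynamically by the strategy's answer to the initial question and need not be uniform or even total.)
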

\begin{proof}
It is well-known that there is a full and faithful functor from $\Fam(\Gam)_T$ to $\Gam_\bot$,
mapping $A$ to $TA$ and $f: A \to TB$ to $f^*: TA \to TB$ (assimilating the singleton family $TA$ with the arena it contains).
This functor preserves and reflects $\obseq$, so isomorphisms in $\Fam(\Gam)_T/\obseq$ correspond to isomorphisms in $\Gam_\bot/\obseq$.
They are then transfered to $\Gam/\obseq$ since it contains $\Gam_\bot/\obseq$ as a subcategory.
\end{proof}

Because of the presence of the empty type, isomorphisms in $\Gam$ do not exactly correspond to isomorphisms in $\Gam_\bot$:
unanswerable moves (as in $\intr{\unit \to 0}$) do not appear in complete plays, so $\sigma \obseq \id_A$ can do anything as soon
as one of those has been played.  If $A$ is an arena such that all questions in
$A$ are answerable (\emph{i.e.} for all $q\in M_A$ such that $\lambda^{QA}(q) = Q$, there is $a\in M_A$ such that $q \enb{A} a$ and
$\lambda^{QA}(a) = A$), we say that $A$ is \textbf{complete}. If $A$ is any arena, $\trim{A}$ is the \textbf{trimmed}
version of $A$, where we have removed the unanswerable moves along with all the moves hereditarily justified by them. Note that
for all arena $A$, $\trim{A}$ is always complete. This operation can also be applied to strategies by setting $\trim{\sigma}$
as the set of plays in $\sigma$ which do not contain unanswerable moves.

We handle the mismatch between isomorphisms in $\Gam$ and $\Gam/\obseq$ as follows:

\begin{lem}
For any arenas $A$ and $B$, $\sigma : A \to B$ and $\tau: B \to A$ form a $\Gam/\obseq$-isomorphism iff
$\trim{\sigma} : \trim{A} \to \trim{B}$ and $\trim{\tau} : \trim{B} \to \trim{A}$ form a $\Gam$-isomorphism.
\label{removeobs}
\end{lem}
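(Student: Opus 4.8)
The plan is to prove both directions by relating complete plays to trimmed plays, using the key observation that complete plays never contain unanswerable moves. First I would establish the precise relationship between the complete plays of a strategy $\sigma : A \to B$ and the plays of $\trim{\sigma} : \trim{A} \to \trim{B}$. Since an unanswerable move in $A \tto B$ can never be answered, it cannot occur in any complete play; conversely, a complete play of $\sigma$ uses only moves of $\trim{A}$ and $\trim{B}$ together with their hereditary justifiers, so it is exactly a complete play of $\trim{\sigma}$. Thus the complete plays of $\sigma$ and of $\trim{\sigma}$ coincide, and in particular $\sigma \obseq \trim{\sigma}$ regarded as strategies on the appropriate arenas (after checking that $\trim{\sigma}$ is indeed a strategy on $\trim{A} \tto \trim{B}$, which follows from the fact that trimming removes a downward-closed-under-hereditary-justification set of moves, so prefix-closure and determinism are preserved).

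Next I would treat the forward direction. Suppose $\sigma, \tau$ form a $\Gam/\obseq$-isomorphism, so $\sigma; \tau \obseq \id_A$ and $\tau; \sigma \obseq \id_B$. The goal is to show $\trim{\sigma}; \trim{\tau} = \id_{\trim{A}}$ on the nose (a genuine $\Gam$-isomorphism), not merely up to $\obseq$. The idea is that composition interacts well with trimming: an interaction witnessing a complete play of $\sigma;\tau$ can contain no unanswerable move (in $A$, $B$, or the hidden middle copy), because such a move could never be answered and the play could never complete. Hence the complete plays of $\trim{\sigma}; \trim{\tau}$ are exactly the complete plays of $\sigma; \tau$, which by hypothesis agree with those of $\id_A$; and on complete arenas every play extends to a complete one, so agreement on complete plays forces equality as strategies. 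The main subtlety here is to argue that $\id_{\trim{A}}$ has all of $\trim{A}$'s plays as complete-extendable, i.e.\ that the copycat on a complete arena is itself complete, which holds precisely because every question in $\trim{A}$ is answerable.

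For the converse direction, suppose $\trim{\sigma}$ and $\trim{\tau}$ form a genuine $\Gam$-isomorphism between the trimmed arenas. I would reconstruct $\Gam/\obseq$-inverses on $A$ and $B$. The point is that the behaviour of $\sigma$ and $\tau$ on plays containing unanswerable moves is irrelevant to $\obseq$, since those plays are never complete; so it suffices to exhibit $\sigma', \tau'$ agreeing with $\sigma, \tau$ on complete plays and inverting up to $\obseq$. Taking $\sigma'$ and $\tau'$ to extend $\trim{\sigma}$ and $\trim{\tau}$ arbitrarily (for instance as copycat) on the unanswerable region, the complete plays of $\sigma'; \tau'$ are again exactly the complete plays of $\trim{\sigma}; \trim{\tau} = \id_{\trim{A}}$, which are precisely the complete plays of $\id_A$ (complete plays of $\id_A$ avoid unanswerable moves automatically). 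Hence $\sigma'; \tau' \obseq \id_A$ and symmetrically, giving the $\Gam/\obseq$-isomorphism.

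The step I expect to be the main obstacle is the compatibility of trimming with composition in the forward direction: showing that no unanswerable move can appear in the witnessing interaction of a complete play, so that $\trim{(\sigma;\tau)} = \trim{\sigma}; \trim{\tau}$. This requires care about the hidden middle component, where a move of $B$ that is unanswerable in $\trim{B}$ might still be played internally; I would argue that any such move, being hereditarily justified from a question that can never be answered, would leave a pending question in the interaction and thus prevent the overall play from being complete. Once this trimming-respects-composition fact is secured, both directions reduce to the elementary observation that on complete arenas, equality of strategies is determined by their complete plays.
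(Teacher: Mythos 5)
Your proposal follows essentially the same route as the paper: complete plays avoid unanswerable moves; the witnessing interaction of a complete play of a composite is itself complete (by the stack discipline of well-bracketing) and hence avoids unanswerable moves in the hidden component, so trimming commutes with composition at the level of complete plays; and completeness of $\trim{A}$ is what lets one pass from agreement on complete plays back to honest equality of strategies. The step you single out as the main obstacle is indeed the crux, and your argument for it is the intended one.

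One inference does not hold as you state it, though. In the forward direction you conclude $\trim{\sigma};\trim{\tau} = \id_{\trim{A}}$ from the fact that the two strategies have the same complete plays, on the grounds that ``on complete arenas every play extends to a complete one.'' That is a property of the \emph{arena}, not of an arbitrary \emph{strategy} on it: two distinct strategies on a complete arena can have the same set of complete plays (for instance, two strategies that each diverge, but at different points, both have no nonempty complete play). What completeness of $\trim{A}$ actually gives you is that every play of the \emph{copycat} extends to a complete play of the copycat, whence $\id_{\trim{A}} \subseteq \trim{\sigma};\trim{\tau}$; the reverse inclusion then needs the totality of $\id_{\trim{A}}$ together with determinism of the composite, which is exactly how the paper closes the argument. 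This is a standard patch, but as written the step is invalid. A smaller remark: in the converse direction the lemma asks you to show that the \emph{given} $\sigma$, $\tau$ form a $\Gam/\obseq$-isomorphism, so the detour through freshly built extensions $\sigma'$, $\tau'$ is unnecessary --- your computation of complete plays applies verbatim to $\sigma;\tau$ itself.
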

\begin{proof}
Let us first note that if $\sigma: A \to B$ and $\tau: B \to C$ and $s\in \sigma; \tau$ is complete, then the witness
$u\in \sigma\parallel \tau$ must be complete as well, otherwise that would break well-bracketing. As a consequence,
$u$ contains no unanswerable move.
Hence if $\sigma$ and $\tau$ form a $\Gam/\obseq$-isomorphism, we still have $\trim{\sigma}; \trim{\tau} \obseq \id_{\trim{A}}$ and
$\trim{\tau}; \trim{\sigma} \obseq \id_{\trim{B}}$, since no unanswerable moves can arise in an interaction between $\sigma$ and $\tau$ giving rise to
a complete play. We turn now to the proof of the equivalence.\\
$\Rightarrow$.
Take $s\in \id_{\trim{A}}$. It is straightforward to see that $s$ can be completed, \emph{i.e.} there is $s'\in \id_{\trim{A}}$ such that
$s \sqsubseteq s'$ and $s'$ is complete (Opponent only plays answers, he always can because $\trim{A}$ is complete, the number of unanswered
questions decreases strictly). Therefore, $s'\in \trim{\sigma}; \trim{\tau}$, hence $s\in \trim{\sigma}; \trim{\tau}$ as well, so
$\id_{\trim{A}} \subseteq \trim{\sigma}; \trim{\tau}$. But $\id_{\trim{A}}$ is total and both strategies are deterministic, therefore
this inclusion must be an equality. The same reasoning show that $\trim{\tau}; \trim{\sigma} = \id_{\trim{B}}$ as well, so $\trim{\sigma}$ and
$\trim{\tau}$ form a $\Gam$-isomorphism.\\
$\Leftarrow$. If $\trim{\sigma}$ and $\trim{\tau}$ form a $\Gam$-isomorphism, take a complete $s\in \sigma; \tau$. As we have
proved above, the witness $u$ for $s$ does not contain any unanswerable move, hence $s\in \trim{\sigma}; \trim{\tau} = 
\id_{\trim{A}}\subseteq \id_A$. Conversely if $s\in \id_A$ is complete, then necessarily $s\in \id_{\trim{A}}$ as well. Thus,
$s\in \trim{\sigma}; \trim{\tau}$. But we have seen above that by necessity the witness $u\in \trim{\sigma}\parallel \trim{\tau}$
is complete as well and as such cannot contain any unanswerable move, so $s\in \sigma; \tau$ and $\sigma; \tau \obseq \id_A$.
\end{proof}

The results above allow to prove that isomorphisms in $\Lsums$ yield $\Gam$-isomorphisms, hence $\Path$-isomorphism by an application
of Theorem \ref{main}. It remains to show that types that give rise to $\Path$-isomorphic arenas are characterized by the equational theory
$\mathcal{E}$. For this purpose, it will be convenient to start by putting types in \emph{canonical form}, as described below. 

\begin{lem}[Canonical form]
Any type of $\Lsums$ has a representative (up to $\mathcal{E}$) generated by $T$ in, with $|I|\geq 2$.
\begin{eqnarray*}
T &::=& 0 \mid 1 \mid S \mid P \mid A\\
S &::=& \Sigma_{i\in I} L\\
P &::=& \Pi_{i\in I} A\\
A &::=& L \tto R\\
L &::=& A \mid P \mid 1\\
R &::=& A \mid P \mid S \mid 1 
\end{eqnarray*}
\end{lem}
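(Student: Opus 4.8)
The plan is to prove this by induction on the structure of types, showing that each type can be rewritten using the equations of $\mathcal{E}$ into one matching the grammar for $T$. The grammar encodes a stratified normal form: a top-level term is $0$, $1$, a sum $S$, a product $P$, or an arrow $A$; sums range over arrow-or-unit summands $L$; products range over arrows $A$; arrows have domain $L$ (an arrow, product, or $1$) and codomain $R$ (an arrow, product, sum, or $1$). The strategy is to push sums to the top, eliminate degenerate cases ($0$, nullary or unary sums/products), and normalize the shape of arrows.

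First I would handle $\mathtt{var}[A]$ by the equation $\mathtt{var}[A] \iso_{\mathcal{E}} (A \to \unit) \times (\unit \to A)$, so that after rewriting no $\mathtt{var}$ constructor remains and we may assume types are built from $0, 1, +, \times, \to$ only. Next I would eliminate $0$: using $0 + A \iso_{\mathcal{E}} A$, $0 \to A \iso_{\mathcal{E}} 1$, and $A \to 0 \iso_{\mathcal{E}} 1$ (together with the absorbing behaviour of $0$ under $\times$, which follows from distributivity $A \times (B+C) \iso_{\mathcal{E}} A\times B + A \times C$ applied with an empty sum, giving $A \times 0 \iso_{\mathcal{E}} 0$), every occurrence of $0$ can be driven to the top level, so that either the whole type collapses to $0$ or $1$, or no $0$ remains inside. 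Then I would use the distributivity law together with the arrow equation $(A+B) \to C \iso_{\mathcal{E}} (A \to C) \times (B \to C)$ to move all sums outward: distributivity pushes $+$ up past $\times$, and the arrow law pushes $+$ out of argument position of an arrow. The associativity and commutativity equations for $+$ and $\times$, plus the units $\unit \times A \iso_{\mathcal{E}} A$ and $0 + A \iso_{\mathcal{E}} A$, let me flatten and collapse nested sums and products and discard trivial units, yielding the stratification $S$ over $P$ over $A$.

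The remaining work is to normalize arrows into the shape $L \tto R$ with the prescribed domain and codomain classes. The codomain $R$ may be an arrow, product, sum, or $1$, which requires no further rewriting once sums have been pushed as far out as the arrow law permits; the constraint on the domain $L$ (an arrow, product, or $1$, but not a sum and not $0$) is exactly what the arrow law $(A+B)\to C \iso_{\mathcal{E}} (A\to C)\times (B\to C)$ and the elimination $0 \to A \iso_{\mathcal{E}} 1$ guarantee, since any sum or $0$ in argument position gets rewritten away. I would verify that the rewriting terminates, by exhibiting a measure (for instance a lexicographic combination counting, with appropriate weights, the number of sum-symbols below a product or arrow and the depth of $0$-occurrences) that strictly decreases under each rewriting step used; this ensures the procedure reaches a normal form of the required shape. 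The side condition $|I| \geq 2$ on sums and products is enforced by the unit laws, which collapse empty and singleton sums and products.

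The main obstacle I expect is the termination/confluence bookkeeping for the rewriting system: distributivity and the arrow law both \emph{duplicate} subterms (the $C$ in $(A+B)\to C$, the $A$ in $A\times(B+C)$), so a naive size measure increases, and one must design a well-founded measure that accounts for sums being pushed strictly closer to the root while the duplicated material lives at strictly lower depth. Getting this measure right, and checking that every equation is applied only in the size-decreasing direction, is the delicate part; the individual rewrite justifications are all immediate instances of the equations in Figure \ref{equational_theory}.
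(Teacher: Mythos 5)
Your strategy is the same as the paper's: eliminate $\var$ with the last equation, orient the remaining equations of $\mathcal{E}$ from left to right into a rewriting system (dropping commutativity, adding the expansion $(A+B)\times C \leadsto A\times C + B\times C$ and right cancellation of units), prove termination, and observe that the normal forms are exactly the types generated by the grammar, with the unit laws enforcing $|I|\geq 2$. The one substantive ingredient, however, is the termination measure, and you leave it unspecified while correctly identifying it as the delicate point. Your tentative suggestion --- a lexicographic combination counting sum-symbols below a product or arrow, together with the depth of $0$-occurrences --- does not obviously work: the rule $(A+B)\to C \leadsto (A\to C)\times (B\to C)$ duplicates $C$, and any sums sitting below arrows inside $C$ are duplicated with it, so that count can strictly increase. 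The paper resolves this with a multiplicative/exponential interpretation of types into positive integers, namely $|0|=|\unit|=1$, $|A+B|=|A|+2|B|$, $|A\times B|=(|A|+1)|B|$ and $|A\to B|=(|B|+1)^{|A|}$, and checks that every directed rule strictly decreases this value; it is the fast growth of the interpretation on compound types that absorbs the duplication your naive measures cannot handle. With that measure supplied, the rest of your argument (a simple induction showing that normal forms admit a derivation from $T$) goes through exactly as in the paper.
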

\begin{proof}
First eliminate all occurrences of $\var$ using the last equation of $\mathcal{E}$.
We make the rest of $\mathcal{E}$ into a rewriting system by directing the equations from left to right, removing those for commutativity,
adding an expansion $(A + B)\times C \leadsto A\times C + B \times C$, and right cancellation of units. It is then straightforward to prove
that the following measure strictly decreases with each reduction: $|0| = |\unit| = 1$, $|A + B| = |A| + 2|B|$, $|A\times B| = (|A|+1)|B|$
and $|A\to B| = (|B|+1)^{|A|}$. It is then a simple induction to find a derivation tree from $T$ for types that are normal forms for this reduction.
\end{proof}

\begin{lem}
Let us extend $\mathrm{trim}$ to families by setting $\trim{(A_i)_{i\in I}} = (\trim{A_i})_{i\in I}$.
For any type $B$ in canonical form, we have $\trim{\intr{B}} = \intr{B}$. Moreover, we have the following equivalences:
\begin{enumerate}[\em(1)]
\item $B = 0$ iff $\intr{B} = 0$,
\item $B = 1$ iff $\intr{B} = 1$,
\item $B$ is generated by $S$ iff $\intr{B}$ has at least two members,
\item $B$ is generated by $P$ iff $\intr{B} = \{B'\}$ where $B'$ has at least two initial moves,
\item $B$ is generated by $A$ iff $\intr{B} = \{B'\}$ where $B'$ has exactly one initial move.
\end{enumerate}
\label{dichotomy}
\end{lem}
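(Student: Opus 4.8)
The plan is to prove both assertions together by induction over the canonical-form grammar (mutually over its nonterminals $T, S, P, A, L, R$), carrying an invariant that records the coarse shape of $\intr{B}$: its number of members and, for singletons, the number of initial moves of the unique member arena. Everything reduces to how the three relevant constructions act on these data. The coproduct of families adds the numbers of members; the product of families multiplies them, so a product of singletons is again a singleton; and the exponential $\intr{L} \tto T\intr{R}$ in $\Fam(\Gam)$ is always a singleton family, since its codomain $T\intr{R} = \{\Sigma_{j} R_j\}$ is. At the level of arenas I will use that the lifted sum $\Sigma_{j\in J} R_j$ has exactly one initial move (its root question), that the product arena has as initial moves the disjoint union of those of its factors, and that the arrow arena inherits its initial moves from its target.

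For the completeness claim $\trim{\intr{B}} = \intr{B}$ I will first isolate the structural fact that makes it work: in the canonical grammar the symbol $0$ occurs only at the top level, and never inside an $S$, a $P$, or on either side of an arrow, since $L$ ranges over $\{A, P, 1\}$ and $R$ over $\{A, P, S, 1\}$. Consequently every arrow we build has $\intr{R}$ a nonempty family, so the lifted sum $T\intr{R}$ is taken over a nonempty index set and its root question is answered by some $a_j$. As the lifted sum is the only construction introducing a fresh question, and as $\times$, $\tto$ and $\Sigma$ over a nonempty index all carry answerability of every question through their defining clauses --- which I would verify by tracking the $Q/A$ labelling and the enabling relation --- every member arena of $\intr{B}$ is complete. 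Since $\trim$ fixes complete arenas and acts pointwise on families, this yields $\trim{\intr{B}} = \intr{B}$, the case $B = 0$ being the trivial one of the empty family.

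The same induction reads off the five shapes: $\intr{0}$ is the empty family; $\intr{1}$ is a singleton whose member is the empty arena, with $0$ initial moves; a sum $\Sigma_{i\in I} L$ with $|I| \geq 2$ gives a disjoint union of $|I|$ singletons, hence at least two members; a product $\Pi_{i\in I} A$ with $|I| \geq 2$ gives a singleton whose member is a product arena with $|I| \geq 2$ initial moves; and an arrow $L \tto R$ gives a singleton whose member inherits its single initial move from the root of $T\intr{R}$. These are exactly the forward directions of the five equivalences.

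It remains to close the equivalences, which I would do by a partition argument rather than proving each converse by hand. The five productions of $T$ partition the canonical types --- the outermost constructor selects the alternative --- while the five semantic conditions (empty; singleton with $0$ initial moves; at least two members; singleton with $\geq 2$ initial moves; singleton with exactly one initial move) are pairwise mutually exclusive. Having established that case $i$ implies condition $i$, the converses follow formally: a $B$ whose interpretation meets condition $i$ lies in some case $j$, which forces condition $j$, and exclusivity gives $j = i$. I expect the one genuinely delicate point to be the completeness verification, and within it the observation that no $T0$ is ever formed --- precisely the place where the exclusion of $0$ from $L$ and $R$ in the canonical grammar, itself a consequence of the isomorphism $A \to 0 \iso_{\mathcal{E}} 1$, is essential.
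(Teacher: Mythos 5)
Your proof is correct and is exactly the structural induction on the canonical-form grammar that the paper dismisses as ``Straightforward'': the shape invariants you track (number of members of the family, number of initial moves of a singleton's member arena), the computation of how $+$, $\times$, $\tto$ and $T$ act on them, and the mutual-exclusivity argument that upgrades the forward implications to equivalences are all sound. Your identification of the exclusion of $0$ from the nonterminals $L$ and $R$ as the reason every lifted sum $T\intr{R}$ is taken over a nonempty index set --- so that its root question is answered by some $a_j$ --- is precisely the point that makes $\trim{\intr{B}} = \intr{B}$ go through.
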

\begin{proof}
Straightforward.
\end{proof}

\begin{prop}
If $\trim{T\intr{A}}$ and $\trim{T\intr{B}}$ are $\Path$-isomorphic, then $A \iso_{\mathcal{E}} B$.
\label{equational_isos}
\end{prop}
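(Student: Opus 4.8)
The plan is to put $A$ and $B$ in canonical form and then argue by induction on the (common) size of the tree $\trim{T\intr{A}}\iso_\Path\trim{T\intr{B}}$, reading off the outermost type constructor at each stage from the shape of the tree via Lemma~\ref{dichotomy}. I would first record two structural facts. All arenas $\intr{B}$ and $T\intr{B}$ arising from canonical types are genuine forests: the Kleisli reading of the arrow always lifts the codomain, so the codomain of the exponential $\intr{L}\tto T\intr{R}$ is the lifted sum $T\intr{R}$, which has a single initial move, whence in the arrow arena each initial move of the domain has a unique enabler; products and lifted sums visibly preserve forests. On forests a $\Path$-isomorphism is nothing but a rooted tree isomorphism preserving the $Q/A$-labelling (the $O/P$-labelling being fixed by depth, as initial moves are $O$ and paths alternate), so the whole argument reduces to decomposing a tree isomorphism. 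The reduction to canonical form uses the Canonical Form lemma, soundness of the isomorphisms of $\mathcal{E}$ in the model, and Theorem~\ref{main}, which together transport the hypothesis $\trim{T\intr{A}}\iso_\Path\trim{T\intr{B}}$ to the canonical representatives.

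For the base and the ``commutative'' cases, Lemma~\ref{dichotomy} matches the shape of $\trim{T\intr{B}}$ to its top constructor: the empty tree forces $B=0$, a root with a single leaf child forces $B=1$, a root with at least two children forces a sum, and a root with a unique child $a_*$ forces a product or an arrow according to whether $a_*$ has at least two children or exactly one. Since the two trees are isomorphic they share a shape, hence the same outermost constructor. In the sum case $B=\Sigma_i L^B_i$, the children of the root are precisely the answers indexing the summands, and the tree isomorphism restricts to a bijection $\pi$ between the summands of $A$ and $B$ matching the corresponding answer-rooted subtrees; since each summand denotes a singleton family, such an answer node is exactly the unique answer of $\trim{T\intr{L^B_i}}$, so matching these subtrees is the same datum as a $\Path$-isomorphism $\trim{T\intr{L^A_k}}\iso_\Path\trim{T\intr{L^B_{\pi(k)}}}$. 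The induction hypothesis then gives $L^A_k\iso_{\mathcal{E}}L^B_{\pi(k)}$, and commutativity and associativity of $+$ assemble these into $A\iso_{\mathcal{E}}B$. The product case is identical, the factors appearing as the component trees hanging below the unique answer $a_*$, reassembled using commutativity and associativity of $\times$.

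The crux is the arrow case $B=L\to R$. Here $\trim{T\intr{B}}$ has a forced spine $q\to a_*\to q_D$, where $q_D$ is the single initial move of $C=\intr{L}_*\tto\Sigma_j R_j$, and the children of $q_D$ split by their $Q/A$-label into the answers $a_j$ of the lifted sum (the codomain) and the initial moves of $\intr{L}_*$, which the arrow construction turns into $P$-questions (the domain). As $\Path$-morphisms preserve $Q/A$-labelling, any isomorphism $\trim{T\intr{L_A\to R_A}}\iso_\Path\trim{T\intr{L_B\to R_B}}$ maps answers to answers and questions to questions below $q_D$, and therefore separates into an isomorphism of the codomain parts and one of the domain parts. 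The codomain part rooted at $q_D$ is literally $\trim{T\intr{R}}$, so its restriction yields $\trim{T\intr{R_A}}\iso_\Path\trim{T\intr{R_B}}$ and hence $R_A\iso_{\mathcal{E}}R_B$ by the induction hypothesis. The domain part is the forest $\intr{L}_*$ with its $O/P$-labels uniformly flipped by the arrow construction; since this flip is the same on both sides and $O/P$ is invisible to $\Path$-morphisms, the domain isomorphism is just a $Q/A$-forest isomorphism of $\intr{L_A}_*$ and $\intr{L_B}_*$, equivalently a $\Path$-isomorphism $\trim{T\intr{L_A}}\iso_\Path\trim{T\intr{L_B}}$, whence $L_A\iso_{\mathcal{E}}L_B$. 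Congruence of $\to$ for $\iso_{\mathcal{E}}$ concludes $A\iso_{\mathcal{E}}B$.

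I expect the arrow case to be the main obstacle. On one hand, one must guarantee that $C=\intr{L}_*\tto\Sigma_j R_j$ is a forest so that $\Path$-isomorphism genuinely coincides with tree isomorphism; this is exactly where the Kleisli lifting of the codomain, forcing the single root $q_D$, is indispensable, and it is what would break for a call-by-name arrow with a multi-rooted codomain. On the other hand, the bookkeeping that identifies the codomain subtree below $q_D$ with $\trim{T\intr{R}}$ and the domain subtrees with the flipped $\intr{L}_*$, while checking that the $O/P$-flip and the depth shift are harmless for $\Path$-isomorphism, is the delicate point; by contrast the sum and product cases and the reduction to canonical form are routine once the forest picture and Lemma~\ref{dichotomy} are available.
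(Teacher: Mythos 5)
Your proposal is correct and follows essentially the same route as the paper: reduce to canonical forms, use Lemma~\ref{dichotomy} to match the outermost constructor, decompose the $\Path$-isomorphism componentwise in each case (using preservation of $Q/A$-labelling to separate domain from codomain in the arrow case), and reassemble via the induction hypothesis and the commutativity/associativity equations of $\mathcal{E}$. You merely make explicit some points the paper leaves implicit (that the interpreted arenas are forests, so $\Path$-isomorphisms are tree isomorphisms, and the harmlessness of the $O/P$-flip and of passing between $L'$ and $TL$ when invoking the induction hypothesis), which is a welcome but not substantively different elaboration.
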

\begin{proof}
First, note that $\trim{T\intr{A}} = 1$ if $\intr{A}$ is the empty family and $T(\trim{A_i})$ otherwise.
We reason by simultaneous induction on $A$ and $B$, that we both suppose in canonical form. By Lemma \ref{dichotomy} and the remark above,
this means that we get rid of $\mathrm{trim}$ and suppose $T\intr{A}$ and $T\intr{B}$ to be $\Path$-isomorphic.
Clearly, $\intr{A}$ and $\intr{B}$ must be in the same
case of Lemma \ref{dichotomy} (otherwise it is easily checked that they cannot be isomorphic). If it is case (1) (resp. (2)), then both
$A$ and $B$ have $0$ (resp. $1$) as canonical form and $A \iso_{\mathcal{E}} B$.

If it is case (3), then $A = \Sigma_{i\in I} A_i$ and $B = \Sigma_{j\in J} B_j$, with $\intr{A} = (\intr{A_i})_{i\in I}$ and
$\intr{B} = (\intr{B_j})_{j\in J}$. The $\Path$-isomorphism between $T\intr{A}$ and $T\intr{B}$ yields a bijection
$f: I \to J$ and for all $i\in I$ a $\Path$-isomorphism $\phi_i : \intr{A_i} \to \intr{B_{f(i)}}$.
By induction hypothesis, this means that for all $i\in I$ we have $A_i \iso_{\mathcal{E}} B_{f(i)}$. By repeated
uses of commutativity and associativity of $+$, we conclude that $A \iso_{\mathcal{E}} B$. 

If it is case (4), then 
$A = \Pi_{i\in I} A_i$ and $B = \Pi_{j\in J} B_j$. Then $\intr{A} = \{\Pi_{i\in I} A_i\}$ and 
$\intr{B} = \{\Pi_{j\in J} B_j\}$. 
Then, the $\Path$-isomorphism between $T\intr{A}$ and $T\intr{B}$ yields a $\Path$-isomorphism between $\Pi_{i\in I} A_i$ and $\Pi_{j\in J} B_j$.
In turn, this yields a bijection $f: I \to J$, and
for each $i\in I$ a $\Path$-isomorphism between $A_i$ and $B_{f(i)}$. By induction hypothesis, this means that for all $i\in I$
we have $A_i \iso_{\mathcal{E}} B_{f(i)}$, therefore $A \iso_{\mathcal{E}} B$ by repeated uses of associativity and commutativity for $\times$.

If it is case (5), then $A = A_1 \to A_2$ and $B = B_1 \to B_2$. Both $A_1$ and $B_1$ are generated by $L$, so they must consist of singleton
families $\{A'_1\}$ and $\{B'_1\}$. Then, $\intr{A} = \{A'_1 \tto T\intr{A_2}\}$ and $\intr{B} = \{B'_1 \tto T\intr{B_2}\}$, and the
$\Path$-isomorphism between $T\intr{A}$ and $T\intr{B}$ yields a $\Path$-isomorphism $\phi$ between $A'_1 \tto T\intr{A_2}$ and
$B'_1 \tto T\intr{B_2}$. Since $\phi$ preserves Q/A labelling, it decomposes into $\Path$-isomorphisms 
$\phi_1 : A'_1 \to B'_1$ and $\phi_2 : T\intr{A_2} \to T\intr{B_2}$. By induction hypothesis this implies that
$A_1 \iso_{\mathcal{E}} B_1$ and $A_2 \iso_{\mathcal{E}} B_2$, thus $A \iso_{\mathcal{E}} B$.
\end{proof}

Putting all of these together:

\begin{thm}
For any types $A, B$ of $\Lsums$, we have the following equivalence:
\[
A \iso_{\Lsums} B \Leftrightarrow A \iso_{\mathcal{E}} B
\]
\end{thm}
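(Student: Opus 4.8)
The plan is to assemble the chain of equivalences that has been established throughout the paper, reducing the syntactic notion $\iso_{\Lsums}$ to the combinatorial notion $\iso_{\mathcal{E}}$ by passing through the semantics. First I would handle the easy direction, $A \iso_{\mathcal{E}} B \Rightarrow A \iso_{\Lsums} B$: each equation of the theory $\mathcal{E}$ in Figure \ref{equational_theory} is individually realized by a pair of mutually inverse terms of $\Lsums$, and since $\iso_{\Lsums}$ is an equivalence relation closed under the type constructors (composition of isomorphisms being well-behaved up to $\obseq$, as noted after the definition of isomorphism), these generating isomorphisms propagate through any type context. So it suffices to exhibit the witnessing terms for each of the eleven equations; the last one, $\mathtt{var}[A] \iso (A\to \unit)\times(\unit \to A)$, is immediate from the interpretation $\intr{\var[A]} = (\intr{A}\tto T1)\times T\intr{A}$ together with $\mkvar$ and the read/write operations.

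For the harder direction, $A \iso_{\Lsums} B \Rightarrow A \iso_{\mathcal{E}} B$, I would chain together the transfer results proved above. By full abstraction (Theorem on Full abstraction), a type isomorphism in $\Lsums$ between $A$ and $B$ corresponds to an isomorphism $\intr{A} \iso \intr{B}$ in $\Fam(\Gam)_T/\obseq$. Lemma \ref{analysis_isos} then lifts this to an isomorphism $T\intr{A} \iso T\intr{B}$ in $\Gam/\obseq$. Lemma \ref{removeobs} removes the $\obseq$ by trimming, giving a genuine $\Gam$-isomorphism $\trim{T\intr{A}} \iso \trim{T\intr{B}}$. The main engine, Theorem \ref{main}, converts this $\Gam$-isomorphism into a $\Path$-isomorphism of the trimmed arenas. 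Finally, Proposition \ref{equational_isos}, using the canonical-form machinery of the Canonical Form Lemma and the case dichotomy of Lemma \ref{dichotomy}, shows that $\Path$-isomorphic denotations force $A \iso_{\mathcal{E}} B$.

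The bookkeeping point that requires care is the compatibility between $T$ and trimming: I would note, as in the proof of Proposition \ref{equational_isos}, that $\trim{T\intr{A}}$ equals $1$ when $\intr{A}$ is the empty family and is obtained componentwise as $T(\trim{A_i})$ otherwise, so that the hypotheses of Proposition \ref{equational_isos} are exactly met by the output of Theorem \ref{main}. I expect the only genuinely delicate step in the whole argument to be already discharged — it is Theorem \ref{main}, the counting/path-extraction result, which is where the finite-branching hypothesis and the coherent family of $k$-isomorphisms do the real work; the present theorem is essentially the act of stitching the established lemmas into a single equivalence. Concretely, the proof reads: the forward direction follows from full abstraction, Lemma \ref{analysis_isos}, Lemma \ref{removeobs}, Theorem \ref{main}, and Proposition \ref{equational_isos} in sequence; the backward direction follows by exhibiting $\Lsums$-terms witnessing each equation of $\mathcal{E}$ and closing under contexts.
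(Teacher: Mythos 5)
Your forward direction is exactly the paper's: full abstraction, then Lemma \ref{analysis_isos}, Lemma \ref{removeobs}, Theorem \ref{main}, and Proposition \ref{equational_isos}, with the same remark about how $\mathrm{trim}$ interacts with $T$. No issues there.

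Your backward direction, however, departs from the paper and has a genuine gap. You propose to exhibit, for each of the eleven generating equations, a pair of mutually inverse terms, and then to propagate these through arbitrary type contexts using the claim that $\iso_{\Lsums}$ is ``closed under the type constructors.'' Two things are missing. First, the congruence property itself (if $A\iso_{\Lsums}A'$ then $A\to B\iso_{\Lsums}A'\to B$, $A+B\iso_{\Lsums}A'+B$, $\var[A]\iso_{\Lsums}\var[A']$, etc.) is nowhere established, and the justification you point to --- the remark after the definition of isomorphism --- only concerns associativity and unitality of composition, i.e.\ transitivity of $\iso_{\Lsums}$, not closure under contexts. Second, even verifying that the hand-written witnesses for the individual equations are mutually inverse \emph{up to observational equivalence} is nontrivial in this language: equations such as $A\to 0\iso 1$ or $(A+B)\to C\iso(A\to C)\times(B\to C)$ require observational-equivalence arguments in the presence of higher-order store, and the paper explicitly warns that call-by-value $\beta\eta$-reasoning does not suffice even for the associativity of syntactic composition --- one needs logical relations or a model. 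The paper sidesteps all of this by staying semantic: each equation of $\mathcal{E}$ is checked to give a $\Path$-isomorphism between $\trim{T\intr{A}}$ and $\trim{T\intr{B}}$ (a purely combinatorial verification for which congruence is trivial), Laurent's groupoid isomorphism $H':\Inn_i\to\Path_i$ then produces an \emph{innocent} isomorphism with finite view functions, Lemma \ref{removeobs} and strictness transport it to a $\Fam(\Gam)_T/\obseq$-isomorphism, innocent definability yields terms $M$ and $N$, and full abstraction converts the semantic isomorphism into the required syntactic one. Your route could in principle be completed, but as written the congruence and the $\obseq$-verifications are exactly the hard part, and they are asserted rather than proved; the ingredient you are missing is that definability and full abstraction let you manufacture the witnessing terms from the semantics instead of writing and verifying them by hand.
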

\begin{proof}
Suppose we have a (syntactic) isomorphism $x : A \vdash M: B$ and $y: B \vdash N : A$. It then easy to check that
$\intr{N \circ M} = \intr{M}; \intr{N}$, when the former composition is syntactic composition and the latter composition in $\Fam(\Gam)_T$.
Likewise, we have $\intr{x : A \vdash x:A} = \id_{\intr{A}}$ (identity in $\Fam(\Gam)_T$). By full abstraction, we have
$\intr{M}; \intr{N} \obseq \id_{\intr{A}}$ and $\intr{N}; \intr{M} \obseq \id_{\intr{B}}$, so we have a $\Fam(\Gam)_T/\obseq$-isomorphism
between $\intr{A}$ and $\intr{B}$. By Lemma \ref{analysis_isos}, this means that $T\intr{A}$ and $T\intr{B}$ are $\Gam/\obseq$-isomorphic.
By Lemma \ref{removeobs}, $\trim{T\intr{A}}$ and $\trim{T\intr{B}}$ are $\Gam$-isomorphic. By Theorem \ref{main}, they are
$\Path$-isomorphic. By Proposition \ref{equational_isos}, this implies that $A \iso_{\mathcal{E}} B$. 

Conversely, it is straightforward to check that all equations in $\mathcal{E}$ between $A$ and $B$ give rise to $\Path$-isomorphisms between
$\trim{T\intr{A}}$ and $\trim{T\intr{B}}$. By Laurent's theorem (the isomorphism of groupoids $H' : \Inn_i \to \Path_i$, see Section 
\ref{laurentstheorem}), there is an innocent isomorphism $\sigma: \trim{T\intr{A}} \to \trim{T\intr{B}}$,
$\tau: \trim{T\intr{B}} \to \trim{T\intr{A}}$, note that $\sigma$ and $\tau$ have finite view functions. We also have 
$\sigma : T\intr{A} \to T\intr{B}$ and $\tau: T\intr{B} \to T\intr{A}$, although they might not form an isomorphism anymore. However, they
do form a $\Gam/\obseq$-isomorphism by Lemma \ref{removeobs}. By construction they are strict, so they come from morphisms
$\sigma' : \intr{A} \to T\intr{B}$ and $\tau': \intr{B} \to T\intr{A}$ forming an isomorphism in $\Fam(\Gam)_T/\obseq$. By innocent
definability, there are $x: A \vdash M: B$ and $y: B \vdash N: A$ such that $\intr{M} = \sigma'$ and $\intr{N} = \tau'$. By full abstraction,
$M$ and $N$ must form a syntactic isomorphism of types.
\end{proof}

\subsection{Isomorphisms in the presence of $\mathtt{nat}$}

Consider the programming language $\Lang$ from \cite{ahm}, obtained from $\Lsums$ by replacing sums by $\bool$ and $\nat$, along
with the associated combinators. As proved in \cite{ahm}, this language has a fully abstract interpretation in $\BFam(\biggam)_T$,
where $\biggam$ is the category of not necessarily finitely branching arenas, and single-threaded strategies.

As suggested by the importance of counting in the proof, the presence of $\mathtt{nat}$ makes it possible to build new isomorphisms
by playing Hilbert's hotel. Of course there are obvious new isomorphisms, such as $\nat \iso \nat + \nat$ or $\nat \iso \nat \times \nat$,
which are realizable by purely functional terms. What is less obvious is that in the presence of higher-order state, one can define new
isomorphisms which did not exist in the purely functional fragment of $\Lang$. In this section, we will detail as much as possible
one of those new isomorphisms, then mention a few others.

Our main example will be an isomorphism between $\nat \to \nat \to \unit$ and $\nat \to \unit$. Although this seems to follow from 
$\nat \times \nat \iso \nat$, this is not the case since curryfication is in general not a valid isomorphism in a call-by-value
language. As a consequence of Laurent's theorem, no purely functional isomorphism can exist between these two types because their
corresponding arenas are not tree-isomorphic.

\begin{prop}
There is an isomorphism in $\BFam(\biggam)_T$ between $\intr{\nat \to \nat \to \unit}$ and $\intr{\nat \to \unit}$.
\end{prop}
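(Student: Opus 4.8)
The plan is to construct an explicit pair of strategies that witness the isomorphism, exploiting higher-order state to implement a "Hilbert's hotel" renaming that no purely functional strategy could realize. First I would understand the two arenas concretely. The arena $\intr{\nat \to \unit}$ is played by Opponent asking the outer question, Player answering (in the lifted sum), then Opponent asking a $\nat$-question, Player supplying a natural number, and finally the $\unit$-answer; so a complete thread essentially consists of one Opponent-supplied natural number $n$ together with the surrounding protocol. The arena $\intr{\nat \to \nat \to \unit}$ similarly lets Opponent supply \emph{two} naturals $m, n$ in sequence. The key numerical observation is that $\mathbb{N} \iso \mathbb{N} \times \mathbb{N}$ via a pairing bijection $\langle -, -\rangle : \mathbb{N}\times\mathbb{N} \to \mathbb{N}$ with projections; this is exactly the content behind $\nat \iso \nat\times\nat$, which is realizable purely functionally, but currying it is not a valid call-by-value isomorphism, which is why state is needed.

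The approach I would take is to define $\sigma : \intr{\nat \to \nat \to \unit} \to T\intr{\nat \to \unit}$ and $\tau$ in the reverse direction as strategies in $\biggam$ (after the monadic/$\Gam_\bot$ transport of Lemma~\ref{analysis_isos}), each maintaining a private reference cell. Going from $\nat \to \nat \to \unit$ to $\nat\to\unit$, Player plays copycat on the lifted-sum protocol, and when Opponent provides the single number $k$ on the $\nat\to\unit$ side, Player decodes $k = \langle m, n\rangle$ and feeds $m$ then $n$ into the $\nat\to\nat\to\unit$ side, forwarding the final $\unit$-answer back. The reverse strategy $\tau$ must, when Opponent offers $m$ on the first $\nat$ and later $n$ on the second, \emph{store} $m$ in its cell upon receiving it and then, once $n$ arrives, encode $\langle m,n\rangle$ and present it on the $\nat\to\unit$ side. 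The role of higher-order state is precisely to remember $m$ across the two distinct Opponent interrogations, which a visible (hence innocent) strategy cannot do because $m$ lies outside the $P$-view at the point $n$ is demanded; this is what makes the isomorphism genuinely non-functional, and it is consistent with the counting intuition that $\mathbb{N}$ and $\mathbb{N}\times\mathbb{N}$ have the same "arity" at infinity.

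Having defined the two strategies, I would verify that they are single-threaded and that they lie in $\BFam(\biggam)_T$ (equivalently, that they come from maps $\intr{\nat\to\nat\to\unit}\to T\intr{\nat\to\unit}$ and back after un-lifting), and then prove $\sigma;\tau \obseq \id$ and $\tau;\sigma \obseq \id$ by analysing the parallel interaction on complete plays. The bulk of the verification is a bookkeeping check that, on any complete play, the pairing and projection operations cancel: the composite reads a number, decodes it, re-encodes it, and returns the same number, while the store contents are consumed exactly once and the final $\unit$-answers match up under copycat. Because the equivalence is $\obseq$, only complete plays matter, so incomplete interactions and the store's internal moves can be ignored. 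The main obstacle I expect is getting the store discipline exactly right in $\tau$: one must ensure the cell is read and written in the correct order relative to Opponent's interrogation of the two $\nat$-questions, and that the non-commutative double strength $\dst$ (left-to-right evaluation) is respected so that the composite genuinely behaves as the identity rather than merely agreeing on arities; the counting argument guarantees a bijection exists, but realizing it as a well-defined single-threaded strategy that round-trips correctly is where the care is needed.
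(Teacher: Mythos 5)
Your construction cannot work, and the obstruction is exactly the one made visible by the earlier lemmas on isomorphisms: once the $\BFam(\biggam)_T/\obseq$-isomorphism is transported to a genuine $\biggam$-isomorphism (Lemmas \ref{analysis_isos} and \ref{removeobs}; the arenas here are complete, so trimming is vacuous), it must be a \emph{pre-zig-zag} strategy, hence a move-for-move bijection with $|s\restrict A| = |s\restrict B|$ on all of its plays. Your $\sigma$ and $\tau$ make a \emph{single} Opponent interrogation $q_k$ on the $\nat\to\unit$ side correspond to \emph{two} interrogations ($q_m$, then, after the intermediate answer, $q_n$) on the $\nat\to\nat\to\unit$ side, so they are not pre-zig-zag and cannot be inverse to each other. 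The failure is also visible syntactically: in your $\tau$, Player cannot ``wait'' for the second argument --- by determinism and totality it must answer Opponent's first $\nat$-question (i.e.\ return the inner function) before anything is forwarded to the $\nat\to\unit$ side, and the reference cell storing $m$ only re-implements the closure that $\lambda m.\lambda n.\,g\langle m,n\rangle$ provides for free. The round trip on $\nat\to\nat\to\unit$ is therefore the double $\eta$-expansion $\lambda m.\lambda n.\,x\,m\,n$, which is \emph{not} observationally the identity in call-by-value with effects: taking $V=\lambda m.\,\bot_{\nat\to\unit}$, the term $V\,0$ diverges while $(\lambda m.\lambda n.\,V\,m\,n)\,0$ is a value. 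Your own opening observation that currying is not a valid call-by-value isomorphism is precisely the reason your pairing construction fails.

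The paper's isomorphism is of a different nature: it keeps the two protocols aligned move for move and applies Hilbert's hotel to the \emph{branching} rather than to the arguments. Both strategies behave as copycat until the first inner answer $b$ is played on the $\nat\to\nat\to\unit$ side; at that point the left position has $\mathbb{N}+\mathbb{N}$ possible Opponent questions (those pointing to $a$ and those pointing to the new $b$) against $\mathbb{N}$ on the right, and after $n$ occurrences of $b$ it has $n\mathbb{N}$ against $\mathbb{N}$. Player therefore fixes a family of bijections $\phi_n : n\mathbb{N}\to\mathbb{N}$ and re-indexes each individual question through $\phi_n$, answering it immediately on the other side; the higher-order state records the growing re-indexing (which justifier each fresh question must be routed to), not a buffered first argument. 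So the relevant bijection is $\mathbb{N}+\mathbb{N}\iso\mathbb{N}$ (more generally $n\mathbb{N}\iso\mathbb{N}$) applied afresh at every stage, not $\mathbb{N}\times\mathbb{N}\iso\mathbb{N}$ applied once; this is what the terms of Figure \ref{coolisos} implement.
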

\begin{proof}
By definition of the interpretation of types, this boils down to an isomorphism in $\biggam$ between the two arenas
$T\intr{\nat \to \nat \to \unit}$ and $T\intr{\nat \to \unit}$ represented in Figure \ref{isoar}. Informally, the
left-to-right isomorphism can be described as follows:

As long as no $b$ has been played, it behaves as the identity. The first time a $b$ is played, Player copies it on
the right side. One can then check that the play has $\mathbb{N} + \mathbb{N}$ possible extensions on the left hand side, whereas
it only has $\mathbb{N}$ extensions on the right hand side. Therefore, Player has to fix a bijection $\phi: \mathbb{N} + \mathbb{N} \to \mathbb{N}$
and play accordingly. In general if $n$ occurrences of $b$ have been played, there are $n\mathbb{N}$ $q_i$s available on the
left hand side and still $\mathbb{N}$ on the right hand side, therefore Player has to follow a bijection $\phi_n : n\mathbb{N} \to \mathbb{N}$.

Thus there is in fact an infinity of different isomorphisms between $T\intr{\nat \to \nat \to \unit}$ and $T\intr{\nat \to \unit}$, one
for each family $(\phi_n)_{n\in \mathbb{N}}$ of bijections between $n \mathbb{N}$ and $\mathbb{N}$.
\end{proof}

\begin{figure}
\[
\xymatrix@C=5pt@R=10pt{
&&&&q	\ar@{-}[d]\\
&&&&a	\ar@{-}@/_/[dlll]
	\ar@{-}[d]
	\ar@{-}@/^/[drr]\\
&q_1	\ar@{-}[d]&&&q_2\ar@{-}[d]&&\dots\\
&b	\ar@{-}@/_/[dl]
	\ar@{-}[d]
	\ar@{-}@/^/[dr]&&&
b      \ar@{-}@/_/[dl]
        \ar@{-}[d]
        \ar@{-}@/^/[dr]\\
q_1	\ar@{-}[d]&
q_2	\ar@{-}[d]&
\dots&
q_1	\ar@{-}[d]&
q_2	\ar@{-}[d]&
\dots\\
c&c&&c&c
}
~~~~
\xymatrix@C=5pt@R=10pt{
&q	\ar@{-}[d]\\
&a	\ar@{-}@/_/[dl]
	\ar@{-}[d]
	\ar@{-}@/^/[dr]\\
q_1	\ar@{-}[d]&
q_2	\ar@{-}[d]&
\dots\\
b&b
}
\]
\caption{Non-trivially isomorphic arenas in $\biggam$}
\label{isoar}
\end{figure}

We note that the strategy from $\nat \to \unit$ to $\nat \to \nat \to \unit$ is visible, so this also gives an example of a morphism in $\Vis$
which is not invertible in $\Vis$ but becomes invertible in $\Gam$.
These strategies are not compact so the definability theorem does not apply, however we can nonetheless manually extract corresponding programs from
them. We display them in Figure \ref{coolisos}, where we suppose that a family of bijections $\phi_n : n\mathbb{N} \to \mathbb{N}$ has already
been defined. Unfortunately, these terms are too complex to hope for a reasonably-sized direct
proof that their interpretations give the strategies described above or even that they form an isomorphism. This kind of difficulty emphasizes the
need for new algebraic methods to manipulate and prove properties of imperative higher-order programs.

\begin{figure*}
{\small
\begin{minipage}{0.49\linewidth}
\[
\begin{array}{l}
f: \nat \to \nat \to \unit \vdash\\ 
~~\mathtt{new}~\mathtt{count} := 0,~\mathtt{func} := \bot~\mathtt{in}\\
~~\lambda n.~\mathtt{let}~(p, q)~=~\phi_{!count + 1}^{-1}(n)~\mathtt{in}\\
~~~~~~~\mathtt{if}~p = 0~\mathtt{then}\\
~~~~~~~~~\mathtt{let}~x = f~q~\mathtt{in}\\
~~~~~~~~~\mathtt{count}:=~!\mathtt{count} + 1;\\
~~~~~~~~~\mathtt{let}~c =~!\mathtt{count}~\mathtt{in}\\
~~~~~~~~~\mathtt{func} :=~(\mathtt{let}~g =~!\mathtt{func}~\mathtt{in}\\ 
~~~~~~~~~~~(\lambda n.~\mathtt{if}~n =~c~\mathtt{then}~x\\
~~~~~~~~~~~~~~~~~~\mathtt{else}~g~n))\\
~~~~~~~\mathtt{else}~!\mathtt{func}~p~q
\end{array}
\]
\end{minipage}
\begin{minipage}{0.49\linewidth}
\[
\begin{array}{l}
f: \nat \to \unit \vdash\\
~~\mathtt{new}~\mathtt{count} :=~0~\mathtt{in}\\
~~\lambda n.~f~(\phi_{!\mathtt{count} + 1}(0,n));\\
~~~~~~~\mathtt{count}:=~!\mathtt{count} +1;\\
~~~~~~~\mathtt{let}~c =~!\mathtt{count}~\mathtt{in}\\
~~~~~~~\lambda p.~f~(\phi_{!\mathtt{count} + 1}(c,p))
\end{array}
\]
\end{minipage}
}
\caption{Type isomorphism in $\Lang$ between $\mathtt{nat} \to \mathtt{nat} \to \unit$ and $\mathtt{nat} \to \unit$.}
\label{coolisos}
\end{figure*}

It seems difficult to characterize exactly the new isomorphisms that natural numbers allow to define. One can prove that
the types $(\nat \to \unit) \to (\nat \to \unit) \to \unit$ and $(\nat \to \unit) \to (\unit \to \unit) \to \unit$ are
isomorphic, showing that isomorphisms are non-local. Even worse, replacing any occurrence of $\unit$ by $\bool$ in the
types above yields non-isomorphic types. Likewise, $\nat \to \nat \to \bool$ and $\nat \to \bool$ are not isomorphic.

It is also interesting to note that composing $\nat \to \nat \to \unit \iso \nat \to \unit$ with $\nat \times \nat \iso \nat$ provides an isomorphism
$\nat \to \nat \to \unit \iso \nat \times \nat \to \unit$, even though curryfication is not a valid isomorphism in general. However one
should keep in mind that the terms realizing this isomorphism have nothing in common with curryfication, as they have
to use higher-order references in a non-trivial way. In particular, it seems unlikely that they can be used for modularity
purposes, putting some limits to the idea that isomorphisms of types always provide the good notion of equivalence on which
programmers should rely.

\section{Conclusion}

We solved Laurent's conjecture and characterized the isomorphisms of types in $\Lsums$. Surprisingly, we realized that the combination of higher-order
references, natural numbers and call-by-value allowed to define new non-trivial type isomorphisms. Note however that if well-bracketing is satisfied, the proof
of our core game-theoretic theorem adapts directly to arenas where all moves only enable a finite number of questions, but an
arbitrary numbers of answers. As a consequence, there are no non-trivial isomorphisms (\emph{i.e.} not already present in the $\lambda$-calculus) in the call-by-name variant of $\Lang$, although
we can define one using \texttt{call}/\texttt{cc}.

Note that despite the seemingly restricted power of $\Lsums$, our theorem does apply to all real-life programming languages that have a bounded type of integer, 
such as $\mathtt{bool}^{32}$ or $\mathtt{bool}^{64}$: in this setting, no non-trivial isomorphism can exist. However unbounded natural numbers can be defined using recursive types, so
the isomorphism above can be implemented in a call-by-value programming language with recursive types and general references, such as \textsc{Ocaml}.

\textit{Acknowledgments.} We would like to thank Guy McCusker and Nikos Tzevelekos for stimulating discussions
about the new non-trivial isomorphisms. 

\bibliographystyle{plain}
\bibliography{lics2011}

\end{document}